\numberwithin{equation}{section}
\title[Existence of Minimizers in the Non-compact Setting]{Causal Variational Principles in the
$\sigma$-Locally Compact Setting: Existence of Minimizers}
\author[F.\ Finster]{Felix Finster}
\author[C.\ Langer]{Christoph Langer \\ \\ February / September 2020}
\address{Fakult\"at f\"ur Mathematik \\ Universit\"at Regensburg \\ D-93040 Regensburg \\ Germany}
\email{finster@ur.de, christoph.langer@ur.de}
\newtheorem{Def}{Definition}[section]
\newtheorem{Thm}[Def]{Theorem}
\newtheorem{Prp}[Def]{Proposition}
\newtheorem{Lemma}[Def]{Lemma}
\newtheorem{Corollary}[Def]{Corollary}
\newcommand{\Thanks}{\vspace*{.5em} \noindent \thanks}
\newcommand{\beq}{\begin{equation}}
\newcommand{\eeq}{\end{equation}}
\newcommand{\Proof}{\begin{proof}}
	\newcommand{\QED}{\end{proof} \noindent}
\newcommand{\C}{\mathbb{C}}
\newcommand{\R}{\mathbb{R}}
\newcommand{\N}{\mathbb{N}}
\renewcommand{\L}{{\mathcal{L}}}
\newcommand{\LL}{{\textup{L}}}
\newcommand{\Sact}{{\mathcal{S}}}
\newcommand{\M}{{\mathfrak{M}}}
\newcommand\B{{\mathscr{B}}}
\renewcommand{\H}{\mathscr{H}}
\newcommand{\F}{{\mathscr{F}}}
\DeclareMathOperator{\supp}{supp}
\newcommand{\s}{\mathfrak{s}}
\newcommand{\bitem}{\begin{itemize}[leftmargin=2.5em]}
\newcommand{\eitem}{\end{itemize}}
\DeclareFontFamily{OT1}{rsfso}{}
\DeclareFontShape{OT1}{rsfso}{m}{n}{ <-7> rsfso5 <7-10> rsfso7 <10-> rsfso10}{}
\DeclareMathAlphabet{\mycal}{OT1}{rsfso}{m}{n}
\begin{document}

\maketitle


\begin{abstract}
We prove the existence of minimizers of causal variational principles on second countable, locally compact Hausdorff spaces. Moreover, the corresponding Euler-Lagrange equations are derived.
The method is to first prove the existence of minimizers of the causal variational principle
restricted to compact subsets for a lower semi-continuous Lagrangian.
Exhausting the underlying topological space by compact subsets and rescaling the corresponding minimizers,
we obtain a sequence which converges vaguely to a regular Borel measure of possibly infinite total volume.
It is shown that, for continuous Lagrangians of compact range,
this measure solves the Euler-Lagrange equations. Furthermore, we prove that the constructed measure
is a minimizer under variations of compact support. Under additional assumptions, it is proven that
this measure is a minimizer under variations of finite volume. We finally extend our results to continuous Lagrangians decaying in entropy.
\end{abstract}

\vspace*{0.75cm}

\tableofcontents

\thispagestyle{empty} 

\section{Introduction} \label{Subsection Overview}
In the physical theory of causal fermion systems, 
spacetime and the structures therein are described by a minimizer
of the so-called causal action principle
(for an introduction to the physical background and the mathematical context, we refer the interested reader to \S\ref{Subsection Motivation}, the textbook~\cite{cfs} and the survey articles~\cite{dice2014, 
review}). 
{\em{Causal variational principles}} evolved as a mathematical generalization
of the causal action principle~\cite{continuum, jet}.
The starting point in~\cite{jet} is a smooth manifold~$\F$
and a non-negative function~$\L : \F \times \F \rightarrow \R^+_0 := [0, \infty)$ (the {\em{Lagrangian}}) which is assumed
to be lower semi-continuous.
The causal variational principle is to minimize the {\em{action}}~$\Sact$ defined as
the double integral over the Lagrangian
\[ \Sact (\rho) = \int_\F d\rho(x) \int_\F d\rho(y)\: \L(x,y) \]
under variations of the measure~$\rho$ within the class of regular Borel measures,
keeping the total volume~$\rho(\F)$ fixed ({\em{volume constraint}}).
The aim of the present paper is to extend the existence theory for minimizers of such variational principles to the case that~$\F$ is non-compact
and the total volume is infinite. Furthermore, we drop the manifold structure of the underlying space~$\F$
and consider a $\sigma$-locally compact topological space instead.
We also work out the corresponding Euler-Lagrange (EL) equations.

In order to put the paper into the mathematical context, in~\cite{pfp} it was proposed to
formulate physics by minimizing a new type of variational principle in spacetime.
The suggestion in~\cite[Section~3.5]{pfp} led to the causal action principle
in discrete spacetime, which was first analyzed mathematically in~\cite{discrete}.
A more general and systematic inquiry of causal variational principles on measure spaces was carried out
in~\cite{continuum}. In this article, the existence of minimizers is proven in the case that the total volume is finite. 
In~\cite{jet}, the setting is generalized to non-compact manifolds of possibly infinite volume
and the corresponding EL equations are analyzed. However, the existence of minimizers is not proved.
Here we fill this gap and develop the existence theory in the non-compact setting.

The main difficulty in dealing with measures of infinite total volume is
to properly implement the volume constraint. 
Indeed, the naive prescription~$\rho(\F)=\infty$ leaves the freedom to
change the total volume by any finite amount, which is not sensible.
The way out is to only allow for variations which leave the measure unchanged
outside a set of finite volume (so-called {\em{variations of finite volume}}; see Definition~\ref{deffv}).
In order to prove existence of minimizers within this class, we exhaust~$\F$
by compact sets~$K_n$ and show that minimizers for the variational principle 
restricted to each~$K_n$ exist. Making essential use of the corresponding EL equations,
we rescale the minimizing measures in such a way that a subsequence converges
vaguely to a measure~$\rho$ on~$\F$. We proceed by proving that this measure
satisfies the EL equations globally. Finally, we prove that, under suitable assumptions,
this measure is even a minimizer under variations of finite volume.
This minimizing property is proved in two steps: We first assume that the Lagrangian
is of {\em{compact range}} (see Definition~\ref{defcompactrange}) and prove that~$\rho$
is a minimizer under {\em{variations of compact support}} (see Definition~\ref{Def global} and
Theorem~\ref{Lemma compact minimizer}).
In a second step we extend this result to variations of finite volume (see Definition~\ref{deffv} and Theorem~\ref{Theorem global})
under the assumption that property~(iv) in~\S\ref{seccvp} holds, i.e.\
\[ \sup_{x \in \F} \int_{\F}\L(x,y)\: d\rho(y) < \infty \:. \]
Sufficient conditions for this assumption to hold are worked out (see Lemma~\ref{Lemma conditions}).
Finally, we generalize our results to Lagrangians which do not have compact range,
but instead have suitable decay properties (see Definition~\ref{Definition vanishing}
and Theorem~\ref{Theorem minimizer vanishing}).

The paper is organized as follows.
In Section~\ref{secbackground} we give a short physical motivation (\S\ref{Subsection Motivation}) and
recall the main definitions and existence results as worked out in \cite{jet} (\S\ref{seccvp}).
In Section~\ref{seccvpsigma} causal variational principles in the $\sigma$-locally compact setting
are introduced (\S\ref{S Basic Definitions}), and the existence of minimizers is proved for the causal variational principle
restricted to compact subsets, making use of the Banach-Alaoglu theorem and the Riesz representation theorem (\S\ref{S Existence compact}).
In Section~\ref{secmincr} minimizers are constructed for continuous Lagrangians of compact range.
To this end, in \S\ref{secconstglobal} we exhaust the underlying topological space by
compact subsets and take a vague limit of suitably rescaled minimizers thereon to obtain a 
regular Borel measure on the whole topological space.
In~\S\ref{Section Derivation} it is shown that this measure satisfies the EL equations.
Furthermore, we prove in~\S\ref{subsection compact variation} that this measure is a minimizer
under variations of compact support (see Definition \ref{Def global}). 
Finally, in~\S\ref{secvaryfinite} it is shown that, under additional assumptions, this measure
is also a minimizer under variations of finite volume (see Definition~\ref{Definition minimizer}). 
In Section~\ref{Chapter 5} we conclude the paper by weakening
the assumption that the Lagrangian is of compact range to Lagrangians which {\em{decay in entropy}} (see Definition \ref{Definition vanishing}). Then the EL equations are again satisfied, and under similar additional assumptions as before we prove that the constructed Borel measure is a minimizer of the causal action principle as intended in~\cite{jet}. 

\section{Physical Background and Mathematical Preliminaries} \label{secbackground}
\subsection{Physical Context and Motivation} \label{Subsection Motivation}
The purpose of this subsection is to outline a few concepts of causal fermion systems and
to explain how the present paper fits into the general physical context and the ongoing research program.
The reader not interested in the physical background may skip this section.

The theory of causal fermion systems is a recent approach to fundamental physics.
The original motivation was to resolve shortcomings of relativistic quantum field theory.
Namely, due to ultraviolet divergences, perturbative quantum field theory is well-defined only after regularization, which is usually understood as a set of prescriptions for how to make divergent integrals finite (e.g.~by introducing a suitable ``cutoff'' in momentum space). The regularization is then removed using the renormalization procedure. However, this concept is not convincing from neither the physical nor the mathematical point of view. More precisely, in view of Heisenberg's uncertainty principle, physicists infer a correspondence between large momenta and small distances. Because of that, the regularization length is often associated to the Planck length~$\ell_P \approx 1.6 \cdot 10^{-35}\;\text{m}$. Accordingly, by introducing an ultraviolet cutoff in momentum space, one disregards distances which are smaller than the Planck length. As a consequence, the microscopic structure of spacetime is completely unknown. Unfortunately, at present
there is no consensus on what the correct mathematical model for ``Planck scale physics'' should be.

The simplest and maybe most natural approach is to assume that on the Planck scale, spacetime is no longer a continuum but becomes in some way ``discrete.'' This is the starting point in the monograph~\cite{pfp},
where the physical system is described by an ensemble of wave functions in a discrete spacetime.
Motivated by the Lagrangian formulation of classical field theory,
physical equations are formulated by a variational principle in discrete spacetime.
In the meantime, this setting was generalized and developed to the theory of causal fermion systems.
It is an essential feature of the approach that spacetime does not enter the variational principle a-priori,
but instead it emerges when minimizing the action. Thus causal fermion systems allow for the 
description of both discrete and continuous spacetime structures.

In order to get the connection to the present paper, let us briefly outline the main structures of causal fermion systems.
As initially introduced in~\cite{rrev}, a \emph{causal fermion system} consists of a triple~$(\H, \F, \rho)$ together with an integer $n \in \N$, where~$\H$ denotes a complex Hilbert space, $\F \subset \LL(\H)$ being the set of all self-adjoint operators on $\H$ of finite rank with at most $n$ positive and at most $n$ negative eigenvalues, and~$\rho$, referred to as \emph{universal measure}, being a measure on the Borel $\sigma$-algebra over~$\F$. Then for any~$x,y \in \F$, the product~$xy$ is an operator of rank at most~$2n$. Denoting its non-trivial eigenvalues (counting algebraic
multiplicities) by~$\lambda_1^{xy}, \ldots, \lambda_{2n}^{xy} \in \C$, and introducing the spectral weight~$|.|$ of an operator as the sum of the absolute values of its eigenvalues, the \emph{Lagrangian} can be introduced as a mapping
\begin{align*}
\L : \F \times \F \to \R_0^+ \:, \qquad \L(x,y) = \left|(xy)^2\right| - \frac{1}{2n} \left|xy\right|^2 \:.
\end{align*}
As being of relevance for this article, we point out that the Lagrangian is a continuous function which is symmetric in the sense that 
\[\L(x,y) = \L(y,x) \qquad \text{for all $x,y \in \F$} \:. \]
In analogy to classical field theory, one defines the \emph{causal action} by
\begin{align*}
\Sact(\rho) = \iint_{\F \times \F} \L(x,y) \: d\rho(x) \: d\rho(y) \:.
\end{align*}
Finally, the corresponding {\em{causal action principle}} is introduced by varying the measure~$\rho$ in 
the class of a suitable class of Borel measures under additional constraints
(which assert the existence of non-trivial minimizers). Given a minimizing measure~$\rho$,
{\em{spacetime}}~$M$ is defined as its support,
\[ M := \supp \rho \:. \]
As being outlined in detail in~\cite{cfs}, critical points of the causal action give rise to Euler-Lagrange (EL) equations, which describe the dynamics of the causal fermion system.
In a certain limiting case, the so-called {\em{continuum limit}},
one gets a connection to the conventional formulation of physics in a spacetime continuum.
In this limiting case, the EL equations give rise to classical field equations like the Maxwell and Einstein equations. 
Moreover, quantum mechanics is obtained in a limiting case, and close connections to 
relativistic quantum field theory have been established (see~\cite{qft} and~\cite{fockbosonic}).

In order for the causal action principle to be mathematically sensible,
the existence theory is of crucial importance.
In the case that the dimension of~$\H$ is finite, the existence of minimizers was proven in~\cite[Section~2]{continuum}
(based on the existence theory in discrete spacetime~\cite{discrete}),
giving rise to minimizing measures of finite total volume~$\rho(\F)<\infty$.
The remaining open problem
is to extend the existence theory to the case that~$\H$ is infinite-dimensional.
Then the total volume~$\rho(\F)$ is necessarily infinite (for a counter example see~\cite[Exercise~1.3]{cfs}).
Proving existence of minimizers in the resulting {\em{infinite-dimensional setting}} (i.e.~$\dim \H = \infty$ and~$\rho(\F)=\infty$)
is a difficult task. Therefore, our strategy is to approach the problem in two steps.
The first step is to deal with infinite total volume; it is precisely the objective of the present paper to address this problem in sufficient generality.
The second step, which involves the difficulty of dealing with non-locally compact spaces,
is currently under investigation.

\subsection{Causal Variational Principles in the Non-Compact Setting} \label{seccvp}
Before introducing the $\sigma$-locally compact setting in Section~\ref{seccvpsigma},
we now recall known results in the slightly less general situation of 
causal variational principles in the non-compact setting as
studied in~\cite[Section~2]{jet}. The starting point in~\cite{jet}
is a (possibly non-compact)
smooth manifold~$\F$ of dimension~$m \geq 1$. We let~$\rho$ be a (positive) measure on
the Borel algebra of~$\F$ (the {\em{universal measure}}).
Moreover, let~$\L : \F \times \F \rightarrow \R^+_0$ be a non-negative function (the {\em{Lagrangian}}) with the
following properties:
\begin{itemize}[leftmargin=2em]
\item[(i)] $\L$ is symmetric, i.e.\ $\L(x,y) = \L(y,x)$ for all~$x,y \in \F$. \label{Cond1}%
\item[(ii)] $\L$ is lower semi-continuous, i.e.\ for all sequences~$x_n \rightarrow x$ and~$y_{n'} \rightarrow y$,
\label{Cond2}%
\[ \L(x,y) \leq \liminf_{n,n' \rightarrow \infty} \L(x_n, y_{n'})\:. \]
\end{itemize}
The {\em{causal variational principle}} is to minimize the action
\beq \label{Sact} 
\Sact (\rho) = \int_\F d\rho(x) \int_\F d\rho(y)\: \L(x,y) 
\eeq
under variations of the measure~$\rho$, keeping the total volume~$\rho(\F)$ fixed
({\em{volume constraint}}). Here we are interested in the case that the total volume is
infinite. In order to implement the volume constraint, we make the following additional assumptions:
\begin{itemize}[leftmargin=2em]
\item[(iii)] The measure~$\rho$ is {\em{locally finite}}
(meaning that any~$x \in \F$ has an open neighborhood~$U \subset \F$ with~$\rho(U)< \infty$). \label{Cond3}%
\item[(iv)] The function~$\L(x,.)$ is $\rho$-integrable for all~$x \in \F$ and \label{Cond4}%
\beq \label{Lint}
\sup_{x \in \F} \int_{\F}\L(x,y)\: d\rho(y) < \infty \:.
\eeq
\end{itemize}
By Fatou's lemma, the integral in~\eqref{Lint} is lower semi-continuous in the variable~$x$.
A measure on the Borel algebra which satisfies~(iii) will be referred to as a {\em{Borel measure}}.
A Borel measure is said to be {\em{regular}} if it is inner and outer regular.

In order to give the causal variational principle~\eqref{Sact} a mathematical meaning, we 
first note that the difference of two measures $\rho, \tilde{\rho} : \B(\F) \to [0, + \infty]$ with~$\rho(\F), \tilde{\rho}(\F) = \infty$ is not a signed measure (due to expressions of the kind~``$\infty - \infty$''); as a consequence, the total variation for signed measures does not apply (see e.g.~\cite[\S6.1]{rudin}). For this reason, we introduce the total variation of the difference of two such measures by saying that~$|\rho - \tilde{\rho}|(\F) < \infty$ if and only if there exists
a Borel set~$B \subset \F$ with~$\rho(B), \tilde{\rho}(B) < \infty$ and~$\rho|_{\F \setminus B} = \tilde{\rho}|_{\F \setminus B}$. In this case, the signed
measure~$\rho-\tilde{\rho}$ for any Borel set~$\Omega \subset \F$ is defined by
\[(\rho-\tilde{\rho})(\Omega) := \rho(\Omega \cap B) - \tilde{\rho}(\Omega \cap B) \:. \]
We then vary in the following class of measures:
\begin{Def} \label{deffv}
Given a regular Borel measure~$\rho$ on~$\F$, a regular Borel measure~$\tilde{\rho}$ on~$\F$
is said to be a {\bf{variation of finite volume}} if 
\beq \label{totvol}
\big| \tilde{\rho} - \rho \big|(\F) < \infty \qquad \text{and} \qquad
\big( \tilde{\rho} - \rho \big) (\F) = 0 \:.
\eeq
\end{Def} \noindent

Assuming that (i), (ii) and (iv) hold 
and that~$\tilde{\rho}$ is a variation of finite volume, the difference of the actions as given by
\beq \label{integrals}
\begin{split}
\big( &\Sact(\tilde{\rho}) - \Sact(\rho) \big) = \int_\F d(\tilde{\rho} - \rho)(x) \int_\F d\rho(y)\: \L(x,y) \\
&\quad + \int_\F d\rho(x) \int_\F d(\tilde{\rho} - \rho)(y)\: \L(x,y) 
+ \int_\F d(\tilde{\rho} - \rho)(x) \int_\F d(\tilde{\rho} - \rho)(y)\: \L(x,y)
\end{split}
\eeq
is well-defined in view of~\cite[Lemma~2.1]{jet}.

\begin{Def} \label{Definition minimizer'} A regular Borel measure~$\rho$ for which the above conditions~{\rm{(i)--(iv)}} hold is said to be a {\bf{minimizer}} of the causal action
if the difference~\eqref{integrals}
is non-negative for all regular Borel measures~$\tilde{\rho}$ satisfying~\eqref{totvol}, i.e.\
\[ \big( \Sact(\tilde{\rho}) - \Sact(\rho) \big) \geq 0 \:. \]
\end{Def}

We denote the support of the measure~$\rho$ by~$M$,
\beq \label{suppdef}
M := \supp \rho = \F \setminus \bigcup \big\{ \text{$\Omega \subset \F$ \,\big|\,
$\Omega$ is open and $\rho(\Omega)=0$} \big\}
\eeq
(thus the support is the set of all points for which every open neighborhood
has a strictly positive measure; for details and generalizations
see~\cite[Subsection~2.2.1]{federer}).

It is shown in~\cite[Lemma~2.3]{jet} (based on a similar result
in the compact setting in~\cite[Lemma~3.4]{support}) that a minimizer
satisfies the following {\em{Euler-Lagrange (EL) equations}},
which state that for a suitable value of the parameter~$\s>0$,
the lower semi-continuous function~$\ell : \F \rightarrow \R_0^+$ defined by
\[ 
\ell(x) := \int_\F \L(x,y)\: d\rho(y) - \s \]
is minimal and vanishes on the support of~$\rho$,
\beq \label{EL}
\ell|_M \equiv \inf_\F \ell = 0 \:.
\eeq
The parameter~$\s$ can be interpreted as the Lagrange parameter
corresponding to the volume constraint. For the derivation 
of the EL equations and further details we refer to~\cite[Section~2]{jet}.

\section{Causal Variational Principles on $\sigma$-Locally Compact Spaces} \label{seccvpsigma}
\subsection{Basic Definitions}\label{S Basic Definitions}
In the setup of causal variational principles in the non-compact setting (see~\S\ref{seccvp})
it is assumed that~$\F$ is a smooth manifold. Since this manifold structure is not needed
in what follows, we now slightly generalize the setting.

\begin{Def} Let~$\F$ be a second-countable, locally compact Hausdorff space,
and let the Lagrangian~$\L : \F \times \F \rightarrow \R^+_0$ be a symmetric and lower semi-continuous function
(see conditions~{\rm{(i)}} and~{\rm{(ii)}} in~\S\ref{seccvp}). Moreover, we assume that~$\L$
is strictly positive on the diagonal, i.e.\
\beq \label{strictpositive}
\L(x,x) > 0 \qquad \text{for all~$x \in \F$} \:.
\eeq
The {\bf{causal variational principle on $\sigma$-locally compact spaces}} is to minimize
the causal action~\eqref{Sact}
under variations of finite volume (see Definition~\ref{deffv}).
\end{Def} \noindent
Note that we do not impose the conditions~(iii) and~(iv) in~\S\ref{seccvp}.
For this reason, it is a-priori not clear whether the integrals in~\eqref{integrals} exist.
Therefore, we include this condition into our definition of a minimizer:
\begin{Def} \label{Definition minimizer} A regular Borel measure~$\rho$ is said to be a {\bf{minimizer}} of the causal action
under variations of finite volume if the difference~\eqref{integrals} is well-defined and
non-negative for all regular Borel measures~$\tilde{\rho}$ satisfying~\eqref{totvol},
\[ \big( \Sact(\tilde{\rho}) - \Sact(\rho) \big) \geq 0 \:. \]
\end{Def}
We point out that a minimizer again satisfies the EL equations~\eqref{EL}
(as is proved exactly as in~\cite[Lemma~2.3]{jet}).
The condition in~\eqref{strictpositive} is needed in order to avoid trivial minimizers supported
at a point where~$\L(x,x)=0$ (see~\cite[Section~1.2]{support}). Moreover, condition~\eqref{strictpositive} is
a reasonable assumption in view of~\cite[Exercise~1.2]{cfs}.

For clarity, we note that, following the conventions in~\cite{gardner+pfeffer},
by a Borel measure we mean a measure~$\rho : \B(\F) \to [0, + \infty]$ on the Borel $\sigma$-algebra~$\B(\F)$ which is {\em{locally finite}} (meaning that every point has an open neighborhood of finite volume).\footnote{We remark that a Borel measure is not usually taken to be locally finite by those working outside of \emph{topological} measure theory (cf.~\cite{gardner+pfeffer}).} 

In view of~\cite[Theorem 29.12]{bauer}, every Borel measure on~$\F$ is regular (in the sense that the measure of a set can be recovered by approximation with compact sets from inside and with open sets from outside). 
In particular, it is inner regular and therefore a Radon measure~\cite{schwartz-radon}.
More generally, every Borel measure on a Souslin space is regular by Meyer's theorem (see~\cite[Satz~VIII.1.17]{elstrodt}). 

A topological space which is locally compact and $\sigma$-compact is also referred to as
being {\em{$\sigma$-locally compact}} (see for example~\cite{steen+seebach}).
We note that every second-countable, locally compact Hausdorff space
is $\sigma$-compact (cf.~\cite[\S 29]{bauer}). 
Therefore,~$\F$ is a~$\sigma$-locally compact space. Moreover,
in view of \cite[Proposition 4.31]{folland} and \cite[Theorem~14.3]{willard}, the space~$\F$ is regular, and hence separable and metrizable by Urysohn's theorem (see for instance~\cite[Theorem 23.1]{willard}), where the resulting metric is complete (see~\cite[p.~185]{bauer}). 
Thus we can arrange that~$\F$ is a Polish space. Since each Polish space is Souslin, any Borel measure on $\F$ is regular, and therefore its support is given by~\eqref{suppdef}.

A metric space $X$ is said to have the \emph{Heine-Borel property} if every closed bounded subset is compact~\cite{williamson}.\footnote{In coarse geometry, such metric spaces are also called \emph{proper} (cf.~\cite[Definition 1.4]{roecoarse}). For instance, every connected complete Riemannian manifold is a proper metric space (see~\cite[Chapter 2]{roeindex}).} In this case, the corresponding metric is referred to as \emph{Heine-Borel metric}. Clearly, every Heine-Borel metric is complete.  According to~\cite[Theorem~2']{williamson}, every $\sigma$-locally compact Polish space is metrizable by a Heine-Borel metric. 
Since the topological space $\F$ is $\sigma$-locally compact and Polish we can arrange that bounded sets in $\F$
(with respect to the Heine-Borel metric) are relatively compact, i.e.\ have compact closure.

Moreover, in order to construct solutions of the EL equations, we first impose the following
assumption (see Section~\ref{secmincr}).
\begin{Def} \label{defcompactrange}
	The Lagrangian has {\bf{compact range}} if for every compact set~$K \subset \F$ there is
	a compact set~$K' \subset \F$ such that
	\[ \L(x,y) = 0 \qquad \text{for all~$x \in K$ and~$y \not \in K'$}\:. \]
\end{Def} \noindent
Later on we will show that this assumption can be weakened (see Section~\ref{Chapter 5}). 

\subsection{Existence of Minimizers on Compact Subsets}\label{S Existence compact}
Our strategy is to exhaust~$\F$ by compact sets, to minimize on each compact set,
and to analyze the limit of the resulting measures. In preparation, we now consider
the variational principle on a compact subset~$K \subset \F$.
Since the restriction of a Borel measure (according to \cite[Definition~2.1]{gardner+pfeffer}) to~$K$ has finite volume, by rescaling we
may arrange that the total volume equals one. This leads us to the variational principle
\[ \text{minimize} \qquad \Sact_K(\rho) := \int_K d\rho(x) \int_K d\rho(y)\: \L(x,y) \]
in the class
\[ 
\rho \;\in\; \M_K := \{ \text{normalized Borel measures on~$K$} \}\:, \]
where {\em{normalized}} means that~$\rho(K)=1$ (since we are not concerned with probability theory,
in our context it seems preferable to avoid the notion of a probability measure). 

Existence of minimizers follows from abstract compactness arguments
in the spirit of~\cite[Section~1.2]{continuum}. We give the proof in detail
because the generalization to the lower semi-continuous setting is not quite obvious.
\begin{Thm} \label{thm1}
Let~$K \subset \F$ be compact. Moreover, let~$(\rho_k)_{k \in\N}$ be a minimizing sequence 
in~$\M_K$ for the action $\Sact_K$, i.e.\
	\begin{align*}
	\lim_{k \to \infty} \Sact_K\left(\rho_k \right) = \inf_{\text{$\rho \in \M_K$}} \Sact_K(\rho) \:. 
\end{align*}
Then the sequence $(\rho_k)_{k \in \N}$ contains a subsequence which converges weakly  to a minimizer~$\rho_K \in \mathfrak{M}_K$. 
\end{Thm}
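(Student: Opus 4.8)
The plan is to establish existence of a minimizer via the direct method of the calculus of variations, using weak-* (here called weak) compactness of the space of normalized Borel measures on the compact set~$K$ together with the lower semi-continuity of the action functional~$\Sact_K$. First I would identify the relevant topological setting: since~$K$ is compact, the space~$C(K)$ of continuous functions is a separable Banach space, and by the Riesz representation theorem the space of finite signed Borel measures on~$K$ is its dual. The normalized measures~$\M_K$ form a subset of the unit sphere (they have total mass one and are positive), so by the Banach--Alaoglu theorem the closed unit ball of the dual is weak-* compact, and by separability it is also weak-* sequentially compact. Hence the minimizing sequence~$(\rho_k)_{k\in\N}$ admits a subsequence, which I again denote by~$(\rho_k)$, converging weakly to some finite signed measure~$\rho_K$.

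Next I would verify that the limit~$\rho_K$ is itself an admissible competitor, i.e.\ that~$\rho_K \in \M_K$. Positivity passes to the weak limit because testing against non-negative continuous functions preserves the sign, and the normalization~$\rho_K(K)=1$ follows by testing against the constant function~$\1 \in C(K)$, which is legitimate precisely because~$K$ is compact. Thus~$\rho_K$ is again a normalized Borel measure on~$K$.

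The main obstacle, and the step the authors flag as not quite obvious, is to show the lower semi-continuity
\[ \Sact_K(\rho_K) \leq \liminf_{k \to \infty} \Sact_K(\rho_k) \:, \]
since~$\L$ is only assumed lower semi-continuous rather than continuous. The difficulty is twofold: weak convergence~$\rho_k \rightharpoonup \rho_K$ controls integrals of continuous functions, not of a merely lower semi-continuous kernel, and the action is a \emph{double} integral against the product measure~$\rho_k \otimes \rho_k$, so I must also argue that~$\rho_k \otimes \rho_k \rightharpoonup \rho_K \otimes \rho_K$ weakly on the compact product~$K \times K$. To handle the kernel I would exploit that a non-negative lower semi-continuous function on a metric space is the pointwise supremum of an increasing sequence of non-negative continuous (indeed bounded Lipschitz) functions~$\L_j \nearrow \L$; the relevant metrizability of~$K$ is available from the earlier discussion showing~$\F$ is Polish. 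For each fixed continuous~$\L_j$ the functional~$\rho \mapsto \iint_{K\times K} \L_j \, d\rho\,d\rho$ is weakly continuous on the product, so passing to the limit in~$k$ gives~$\iint \L_j \, d\rho_K\,d\rho_K = \lim_k \iint \L_j \, d\rho_k\,d\rho_k \leq \liminf_k \Sact_K(\rho_k)$. Then letting~$j \to \infty$ and invoking monotone convergence on the left upgrades~$\L_j$ to~$\L$, yielding the desired inequality.

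Finally I would combine these pieces: admissibility of~$\rho_K$ together with the lower semi-continuity estimate gives
\[ \Sact_K(\rho_K) \leq \liminf_{k\to\infty} \Sact_K(\rho_k) = \inf_{\rho \in \M_K} \Sact_K(\rho) \:, \]
and since~$\rho_K \in \M_K$ the reverse inequality holds trivially, so~$\rho_K$ attains the infimum and is a minimizer. The care needed for the product-measure convergence and the monotone approximation of the lower semi-continuous Lagrangian is exactly the point where the argument departs from the classical continuous case, and I would present those two reductions explicitly rather than treating them as routine.
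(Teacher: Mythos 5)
Your proposal is correct and follows essentially the same route as the paper's proof: Banach--Alaoglu together with the Riesz representation theorem to extract a weak-* limit~$\rho_K$, verification that~$\rho_K \in \M_K$ by testing against~$1_K$, weak convergence of the product measures~$\rho_k \otimes \rho_k \rightharpoonup \rho_K \otimes \rho_K$ (the paper cites Billingsley's Theorem~2.8, using separability/metrizability of~$K$ exactly as you indicate), and lower semi-continuity of the double integral to conclude. The only cosmetic difference is in the final step, where the paper invokes a ready-made Fatou lemma for weakly converging measures (Feinberg et al.), whereas you reprove that fact by monotone approximation of the lower semi-continuous~$\L$ by continuous functions plus monotone convergence --- the standard proof of the same inequality.
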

\Proof Let $(\rho_k)_{k \in \N}$ be a minimizing sequence. For clarity, note that the compact subset~$K \subset \F$ is a locally compact Hausdorff space. Moreover, the continuous, real-valued functions on~$K$,
denoted by~$C(K)$, form a normed vector space (with respect to the sup norm~$\|\cdot\|_{\infty}$), and the functions in~$C(K)$ are all bounded and have compact support, i.e.\
$C(K) = C_b(K) = C_c(K)$. For each~$k \in \N$, the mapping
\begin{align*}
I_k \,:\, C(K) \to \R, \qquad I_k(f) := \int_{K} f(x) \:d{\rho}_k(x)
\end{align*}
defines a continuous positive linear functional. Since
	\begin{align*}
	\|I_k\| := \sup_{f\in C(K)\atop \|f\|\le 1} \left| \int_{K} f(x) \:d{\rho}_k(x) \right| \le \|\rho_k\|(K)
	\end{align*}
	and $\|\rho_k\|(K)= \rho_k(K) = 1$ for all $k\in \N$ (where $\|\cdot\|(K)$ denotes the total variation, and $\|\cdot\|$ the operator norm on~$C(K)^{\ast}$), the sequence $(I_k)_{k \in \N}$ is bounded in $C(K)^{\ast}$.
In view of the Banach-Alaoglu theorem, a subsequence~$(I_{k_j})_{j \in \N}$ converges
to a linear functional~$I \in C(K)^{\ast}$ in the weak*-topology,
\[ I_{k_j} \rightharpoonup^* I \in C(K)^{\ast} \:. \]
Applying the Riesz representation theorem, we obtain a regular Borel measure $\rho_K$ such that
\[ I(f) = \int_{K} f(x) \:d{\rho_K}(x) \qquad \text{for all~$f \in C(K)$}\:. \]
Since~$\rho_K(K) = I(1_K) = \lim_{j \rightarrow \infty} I_{k_j}(1_K)= 1$
(where~$1_K$ is the function which is identically equal to one), one sees that~$\rho_K$ is again normalized.

It remains to show that ${\rho_K}$ is a minimizer. Since $K$ is compact, $\sigma$-compactness of~$K$ implies that the measure space $(K,\B(K))$ is $\sigma$-finite (according to \cite[\S 7]{halmosmt}; this also results from the fact that any Borel measure is locally finite and $K$ is second-countable). Due to \cite[\S 35, Theorem B]{halmosmt}, for all $j \in \N$ there is a uniquely determined product measure 
$$\eta_{k_j} := {\rho}_{k_j} \times {\rho}_{k_j} \colon \B(K) \otimes \B(K) \to \R$$ 
(see also~\cite[Theorem 7.20]{folland}) such that
\begin{align*}
{\eta}_{k_j}(A \times B) := {\rho}_{k_j}(A) \cdot {\rho}_{k_j}(B) \:,
\end{align*}
where~$A\times B \in \B(K) \otimes \B(K)$.
Since~$K \subset \F$ is a second-countable Hausdorff space, it is separable according to~\cite[\S 5F]{willard}, and the Cartesian product~$K \times K$ is compact (see e.g.~\cite[Theorem~3.2.3]{engelking}). Moreover, any countable product of second-countable topological spaces is again second-countable and thus separable. By~\cite[Theorem 2.8]{billingsley} we obtain
weak convergence 
\[ \eta_{k_j} = \rho_{k_j} \times \rho_{k_j} \rightharpoonup \rho_K \times \rho_K =: \eta_K \:. \]
In particular, $(\eta_{k_j})_{j \in \N}$ is a sequence of normalized Borel measures, and $\eta_K$ is a normalized Borel 
measure on $K \times K$. Since $K \times K$ is metrizable due to~\cite[\S 34]{munkres}, and the Lagrangian~$\L|_{K \times K} : K \times K \to \R_0^+$ is a measurable non-negative real valued function on $K \times K$, Fatou's lemma for sequences of measures~\cite[eq.~(1.5)]{feinberg} yields
	\begin{align*}
	\Sact_K({\rho_K}) &= \int_{K} \int_{K} \L(x,y) \:d{\rho_K}(x) \:d{\rho_K}(y) = \iint_{K \times K} \L(x,y) \:d{\eta_K}(x,y) \\
	&\le \liminf_{j \to \infty} \iint_{K \times K} \L(x,y)\: d{\eta}_{k_j}(x,y) = \liminf_{j \to \infty}  \int_{K} \int_{K} \L(x,y) \:d{\rho}_{k_j}(x) \:d{\rho}_{k_j}(y) \\
	&= \liminf_{j \to \infty} \Sact_K(\rho_{k_j}) \le \lim_{j \to \infty} \Sact_K(\rho_{k_j}) =   \inf_{\text{$\rho \in \M_K$}} \Sact_K(\rho) \:.\phantom{\int}
	\end{align*}
Hence $\rho_K$ is a minimizer of the action $\Sact_K$.
\QED

A minimizing measure~$\rho_K \in \mathfrak{M}_K$ satisfies the corresponding Euler-Lagrange equations, which in analogy to~\eqref{EL} read
\beq \label{ELK}
\ell_K|_{\supp \rho_K} \equiv \inf_K \ell_K = 0 \:,
\eeq
where~$\ell_K : \F \rightarrow \R$ is the function
\beq \label{lKdef}
\ell_K(x) := \int_K \L(x,y)\: d\rho_K(y) - \s\:,
\eeq
and~$\s>0$ is a suitably chosen parameter. For clarity, we point out that the integral in~\eqref{lKdef} is strictly positive for the following reason: Each normalized measure~$\rho \in \mathfrak{M}_K$ has non-empty support; thus for any~$x \in \supp \rho \not= \varnothing$, the fact that the Lagrangian is lower semi-continuous as well as strictly positive on the diagonal~\eqref{strictpositive} implies that there is an open neighborhood~$U$ of~$x$ such that~$\L(x,y) > \L(x,x)/2 > 0$ for all~$y \in U$ and~$\rho(U) > 0$ in view of~\eqref{suppdef}. As a consequence, 
\[ \int_{K} \L(x,y) \: d\rho(y) \ge \int_U \L(x,y) \: d\rho(y) > 0 \:. \]
Comparing this inequality for~$x \in K$ with~\eqref{ELK} and~\eqref{lKdef}, we conclude that
the last integral is strictly positive and constant on~$K$.
%
%

\section{Minimizers for Lagrangians of Compact Range} \label{secmincr}

\subsection{Construction of a Global Borel Measure} \label{secconstglobal}
Let $(K_n)_{n \in \N}$ be an exhaustion of the $\sigma$-locally compact space~$\F$ by compact sets such that each compact set is contained in the interior of its successor (see e.g.~\cite[Lemma~29.8]{bauer}).
For every~$n \in \N$, we let~$\rho_{K_n}$ in~$\M_{K_n}$ be a corresponding minimizer on~$K_n$
as constructed in Theorem~\ref{thm1}.
We extend these measures by zero to~$\B(\F)$,
	\begin{align}\label{rhon}
\rho^{[n]}(A) := 
	\lambda_n \:\rho_{K_n}(A \cap K_n) \:,
\end{align}
where~$\lambda_n$ are positive parameters which will be chosen
such that the parameter~$\s$ in the EL equations~\eqref{ELK} and~\eqref{lKdef}
is equal to one. Thus
\beq \label{ELn}
\ell^{[n]}\big|_{\supp \rho^{[n]}} \equiv \inf_{K_n} \ell^{[n]} = 0 \:,
\eeq
where
\beq \label{lndef}
\ell^{[n]}(x) := \int_\F \L(x,y)\: d\rho^{[n]}(y) - 1 \:.
\eeq
For clarity, we point out that the measures~$\rho^{[n]}$ are {\em{not normalized}}. More precisely,
\[ \rho^{[n]}(\F) = \lambda_n \:, \]
and the sequence~$(\lambda_n)_{n \in \N}$ will typically be unbounded.

\begin{Lemma} \label{lemmaupper} For every compact subset~$K \subset \F$ there is a constant~$C_K>0$ such that
\[ \rho^{[n]}(K) \leq C_K \qquad \text{for all~$n \in \N$}\:. \]
\end{Lemma}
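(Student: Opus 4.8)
The plan is to extract from the Euler--Lagrange equations~\eqref{ELn} a uniform lower bound on the mass that the Lagrangian forces each support point to ``see,'' and then to convert this into an upper bound on~$\rho^{[n]}(K)$ by a finite covering of~$K$. It is worth noting at the outset that the compact range assumption (Definition~\ref{defcompactrange}) will \emph{not} be needed for this particular estimate; only the lower semi-continuity of~$\L$ and its strict positivity on the diagonal~\eqref{strictpositive} enter, together with the normalization~$\s=1$ built into~\eqref{lndef}.

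First I would localize the Lagrangian near the diagonal. Fixing the compact set~$K \subset \F$, for every~$x \in K$ the joint lower semi-continuity of~$\L$ together with~\eqref{strictpositive} yields an open neighborhood~$W_x \ni x$ and a constant~$\delta_x := \L(x,x)/2 > 0$ such that
\[ \L(u,v) > \delta_x \qquad \text{for all } u,v \in W_x \:. \]
Indeed, were this to fail one could find sequences~$u_k, v_k \to x$ with~$\L(u_k,v_k) \leq \delta_x$, contradicting~$\L(x,x) \leq \liminf_k \L(u_k,v_k)$. Since~$K$ is compact, finitely many such neighborhoods~$W_{x_1}, \dots, W_{x_m}$ cover~$K$.

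Next I would bound each~$\rho^{[n]}(W_{x_i})$ uniformly in~$n$ by means of~\eqref{ELn}. If~$\rho^{[n]}(W_{x_i}) = 0$ there is nothing to show. Otherwise, since the complement of~$\supp \rho^{[n]}$ is an open $\rho^{[n]}$-null set (a countable union of open null sets, $\F$ being second countable; see~\eqref{suppdef}), the set~$W_{x_i} \cap \supp \rho^{[n]}$ must be non-empty, so I may pick a point~$z$ therein. As~$z \in \supp \rho^{[n]}$, the Euler--Lagrange equation~\eqref{ELn} gives~$\int_\F \L(z,y)\, d\rho^{[n]}(y) = 1$. Restricting the integral to~$W_{x_i}$ and using~$\L(z,y) > \delta_{x_i}$ for~$y \in W_{x_i}$ (recall that~$z \in W_{x_i}$ as well), I obtain
\[ 1 = \int_\F \L(z,y)\, d\rho^{[n]}(y) \;\geq\; \int_{W_{x_i}} \L(z,y)\, d\rho^{[n]}(y) \;\geq\; \delta_{x_i}\, \rho^{[n]}(W_{x_i}) \:, \]
so that~$\rho^{[n]}(W_{x_i}) \leq 1/\delta_{x_i}$ in either case. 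Summing over the finite cover then yields
\[ \rho^{[n]}(K) \;\leq\; \sum_{i=1}^m \rho^{[n]}(W_{x_i}) \;\leq\; \sum_{i=1}^m \frac{1}{\delta_{x_i}} \;=:\; C_K \:, \]
which is independent of~$n$, as claimed.

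I expect the main (though modest) subtlety to be the passage from~$\rho^{[n]}(W_{x_i}) > 0$ to the existence of a support point~$z \in W_{x_i}$ at which the Euler--Lagrange equation may be invoked; this is precisely where the representation of the support as the complement of the maximal open null set~\eqref{suppdef}, valid here because~$\F$ is second countable, is used. Everything else is a routine covering argument, and the crucial point is that~$C_K = \sum_i 1/\delta_{x_i}$ depends only on~$K$ (through the finite cover and the diagonal values of~$\L$) and not on~$n$, because the rescaling to~$\s = 1$ makes the right-hand side of the Euler--Lagrange equation the same constant for every~$n$.
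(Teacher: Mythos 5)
Your proof is correct and follows essentially the same route as the paper: localize~$\L$ near the diagonal using lower semi-continuity and~\eqref{strictpositive}, cover~$K$ by finitely many such neighborhoods, and invoke the rescaled Euler--Lagrange equations~\eqref{ELn} at a support point in each neighborhood to bound its measure uniformly. If anything, your version slightly streamlines the paper's argument, which needlessly restricts to~$n \geq N$ with~$K \subset K_N$ and then enlarges the constant by the finitely many values~$\rho^{[1]}(U(x_\ell)),\dots,\rho^{[N-1]}(U(x_\ell))$, whereas you correctly observe that the EL equations at support points give the bound for every~$n$ directly.
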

\Proof Since~$\L(x,.)$ is lower semi-continuous and strictly positive at~$x$
(see~\eqref{strictpositive}), there is an open neighborhood~$U(x)$ of~$x$ with
\[ \L(y,z) \geq \frac{\L(x,x)}{2} > 0 \qquad \text{for all~$y,z \in U(x)$}\:. \]
Covering~$K$ by a finite number of such neighborhoods~$U(x_1), \ldots, U(x_L)$,
it suffices to show the inequality for the sets~$K \cap U(x_\ell)$ for any~$\ell \in \{1,\ldots, L\}$.
Moreover, we choose~$N$ so large that~$K_N \supset K$ and fix~$n \geq N$.
If~$K \cap \supp \rho^{[n]} = \varnothing$, there is nothing to prove.
Otherwise, there is a point~$z \in K \cap \supp \rho^{[n]}$. Using the EL equations~\eqref{ELn} at~$z$,
it follows that
\[ 1 = \int_\F \L(z,y)\: d\rho^{[n]}(y)
\geq \int_{U(x_\ell)} \L(z,y)\: d\rho^{[n]}(y) \geq 
\frac{\L(x_\ell,x_\ell)}{2}\: \rho^{[n]}(U(x_\ell)) \:. \]
Hence
\begin{align}\label{(lemmaupper)}
	 \rho^{[n]}(U(x_\ell)) \leq \frac{2}{\L(x_\ell,x_\ell)}\:. 
\end{align}
This inequality holds for any~$n \geq N$.
Let~$c(x_\ell)$ be the maximum of~$2/\L(x_\ell,x_\ell)$ and~$\rho^{[1]}(U(x_\ell)), \ldots, \rho^{[N-1]}(U(x_\ell))$. 
Since the open sets~$U(x_1), \ldots, U(x_L)$ cover~$K$,~we finally introduce the constant~$C_K$ as the sum of the constants~$c(x_1), \ldots, c(x_L)$. 
\QED

Now we proceed as follows. Denoting by $(K_n)_{n \in \N}$ the above exhaustion of~$\F$ by compact sets, we first restrict the measures~$\rho^{[n]}$ to the compact set~$K_1$.
According to Lemma~\ref{lemmaupper}, the resulting sequence of measures is bounded. Therefore,
a subsequence converges as a measure on~$K_1$ (using again the Banach-Alaoglu theorem 
and the Riesz representation theorem). Out of the resulting subsequence $(\rho^{[1, n_k]})_{k \in \N}$, we then choose a subsequence
of measures $(\rho^{[2, n_k]})_{k \in \N}$ which converges weakly on~$K_2$. We proceed iteratively and denote the resulting
diagonal sequence by
\beq \label{rhorund}
\rho^{(k)} := \rho^{[k, n_k]} \qquad \text{for all~$k \in \N$}\:.
\eeq
\noindent
In the following, we restrict attention to the compact exhaustion $(K_m)_{m \in \N}$, where for convenience 
by $K_m$ we denote the sets $K_{n_m}$ for~$m \in \N$ (thus $\rho^{(m)}$ is a minimizer on~$K_m$ for each $m \in \N$). 
By construction, the sequence $(\rho^{(k)}|_{K_m})_{k \in \N}$ converges weakly to some measure $\rho_{K_m}$ for every $m \in \N$, i.e.
\begin{align}\label{(weak convergence on Km)}
\rho^{(k)}|_{K_m} \rightharpoonup \rho_{K_m} \qquad \text{for all $m \in \N$} \:.
\end{align}
In particular, 
\begin{align*}
\lim_{k \to \infty} \rho^{(k)}(K_m) = \rho_{K_m}(K_m) \qquad \text{for all $m \in \N$} \:. 
\end{align*}
Denoting the interior of $K_m$ by $K_m^{\circ}$, we claim that for all $n \ge m$, 
\begin{align}\label{(coincidence)}
\int_{K_m} f \: d\rho_{K_m} = \int_{K_n} f \: d\rho_{K_n} \qquad \text{for all $f \in C_c(K_m^{\circ})$} \:. 
\end{align}
Namely, in view of $C_c(K_m^{\circ}) \subset C_b(K_m) \cap C_b(K_n)$, weak convergence \eqref{(weak convergence on Km)} yields 
\begin{align*}
\int_{K_m} f \: d\rho_{K_m} = \lim_{k \to \infty} \int_{K_m} f \: d\rho^{(k)}|_{K_m} \stackrel{\text{($\star$)}}{=} \lim_{k \to \infty} \int_{K_n} f \: d\rho^{(k)}|_{K_n} = \int_{K_n} f \: d\rho_{K_n}
\end{align*}
for all $f \in C_c(K_m^{\circ})$, 
where in ($\star$) we made use of the fact that $\supp f \subset K_m^{\circ} \subset K_n$. 

In order to construct a global measure $\rho$ on $\F$, we proceed by introducing the functional
\begin{align*}
I : C_c(\F) \to \R \:, \qquad f \mapsto I(f) := \lim_{n \to \infty} \int_{\F} f(x) \: d\rho_{K_n}(x) \:.
\end{align*}
Note that, in view of \eqref{(coincidence)}, the last integral is independent of $n \in \N$ for sufficiently large integers~$n \in \N$, and thus well-defined. Thus $I : C_c(\F) \to \R$ defines a positive functional. Making use of the fact that $\F$ is locally compact, the Riesz representation theorem \cite[Darstellungssatz~VIII.2.5]{elstrodt} or \cite[Theorem 29.1 and Theorem 29.6]{bauer} yields the existence of a uniquely determined Radon measure $\rho : \B(\F) \to [0, \infty]$ (i.e.\ an inner regular, locally finite measure \cite{schwartz-radon}) such that
\begin{align}\label{rhoglobal}
I(f) = \int_{\F} f \: d\rho \qquad \text{for all $f \in C_c(\F)$}
\end{align}
(also see~\cite[Definition 4 and Theorem 5]{edwards}). 
From the equality
\begin{align}\label{(vague convergence rhoKm)}
\lim_{n \to \infty} \int_{\F} f(x) \: d\rho_{K_n}(x) = I(f) = \int_{\F} f \: d\rho \qquad \text{for all $f \in C_c(\F)$} 
\end{align}
we conclude that the sequence $(\rho_{K_n})_{n \in \N}$ converges vaguely to the measure~$\rho$ (for details see \cite[Definition 30.1]{bauer}). Moreover, given $f \in C_c(\F)$, we know that $\supp f \subset K_m^{\circ}$ for sufficiently large $m \in \N$. Hence from vague convergence \eqref{(vague convergence rhoKm)} and weak convergence~\eqref{(weak convergence on Km)} we conclude that for any $\varepsilon > 0$ there exists $m' \in \N$ such that for all~$m \ge m'$, 
\begin{align*}
\left|\int_{\F} f \: d\rho - \int_{\F} f \: d\rho^{(k)} \right| \le \left|\int_{\F} f \: d\rho - \int_{\F} f \: d\rho_{K_m} \right| + \left|\int_{\F} f \: d\rho_{K_m} - \int_{\F} f \: d\rho^{(k)}|_{K_m} \right| < \varepsilon \:. 
\end{align*}
Since $f \in C_c(\F)$ was arbitrary, we also conclude that 
\begin{align}\label{(vague convergence)}
\rho^{(k)} \stackrel{v}{\to} \rho \qquad \text{vaguely} \:. 
\end{align}
Regularity follows from Ulam's theorem \cite[Satz~VIII.1.16]{elstrodt} and the fact that $\F$ is Polish. 
In Appendix~\ref{Appendix nontrivial} (see Lemma~\ref{Lemma nontrivial})
it is shown that the measure~$\rho$ given by~\eqref{rhoglobal} is non-zero
and possibly has infinite total volume.

Similar to~\eqref{lndef}, we introduce the notation
\beq \label{lndef'}
\ell^{(n)}(x) := \int_\F \L(x,y)\: d\rho^{(n)}(y) - 1 \:.
\eeq
In particular, the following EL equations hold,
\beq \label{ELn'}
\ell^{(n)}\big|_{\supp \rho^{(n)}} \equiv \inf_{K_n} \ell^{(n)} = 0 \:.
\eeq

\subsection{Derivation of the Euler-Lagrange Equations}\label{Section Derivation}
In this section, we assume that the Lagrangian~$\L$ is continuous and of compact range (see Definition~\ref{defcompactrange}). Our goal is 
to prove the following result.
\begin{Thm}[\textbf{Euler-Lagrange equations}] \label{Theorem EL equations locally compact}
Assume that~$\L$ is continuous and of compact range.
Then the measure~$\rho$ constructed in~\eqref{rhoglobal} satisfies the Euler-Lagrange equations
\begin{align}\label{(EL equations)}
\ell|_{\supp \rho} \equiv \inf_{\text{$x \in \F$}} \ell(x) = 0 \:,
\end{align}
where~$\ell \in C(\F)$ is defined by
	\begin{align}\label{(Lagrange functional)}
	\ell (x) := \int_{\F} \L(x,y) \:d\rho(y) - 1 \:.
	\end{align}
\end{Thm}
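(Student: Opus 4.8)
The plan is to transfer the Euler--Lagrange equations \eqref{ELn'} satisfied by the approximating measures $\rho^{(n)}$ to the limit measure $\rho$, exploiting the compact range assumption at every turn. The decisive observation is that, by Definition~\ref{defcompactrange}, for each fixed $x \in \F$ the function $\L(x,\cdot)$ vanishes outside a compact set and hence lies in $C_c(\F)$. Consequently the vague convergence \eqref{(vague convergence)} applies directly to the integrand $\L(x,\cdot)$ and yields the pointwise convergence $\ell^{(n)}(x) \to \ell(x)$ for every $x \in \F$. The same mechanism shows $\ell \in C(\F)$: fixing a compact neighborhood $C$ of a given point and choosing (by compact range) a compact $K'$ with $\L(x,y)=0$ for $x \in C$ and $y \notin K'$, one writes $\ell(x) = \int_{K'}\L(x,y)\,d\rho(y) - 1$ for $x \in C$; since $\L$ is uniformly continuous on the compact set $C \times K'$ and $\rho(K')<\infty$ by local finiteness, this integral depends continuously on $x \in C$, and continuity on all of $\F$ follows because $\F$ is locally compact.

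First I would establish the lower bound $\ell \geq 0$. Fix $x \in \F$. Since the sets $K_n$ exhaust $\F$ with $K_n \subset K_{n+1}^{\circ}$, we have $x \in K_n$ for all sufficiently large $n$, so the Euler--Lagrange equations \eqref{ELn'} give $\ell^{(n)}(x) \geq \inf_{K_n}\ell^{(n)} = 0$. Passing to the limit using $\ell^{(n)}(x) \to \ell(x)$ yields $\ell(x) \geq 0$, and hence $\inf_\F \ell \geq 0$.

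The substantial step is to show that $\ell$ vanishes on $\supp\rho$. Let $x \in \supp\rho$ (which is non-empty since $\rho$ is non-trivial by Lemma~\ref{Lemma nontrivial}). Using that $\F$ is metrizable, fix a decreasing neighborhood basis $(U_j)_{j \in \N}$ at $x$. For each $j$ the measure $\rho^{(n)}$ must charge $U_j$ for infinitely many $n$: otherwise, choosing $f \in C_c(U_j)$ with $0 \le f \le 1$ and $f(x)=1$, vague convergence would force $\int_\F f\,d\rho = \lim_n \int_\F f\,d\rho^{(n)} = 0$, contradicting $x \in \supp\rho$. Hence one can select $n_1 < n_2 < \cdots$ and points $x_{n_j} \in U_j \cap \supp\rho^{(n_j)}$, so that $x_{n_j} \to x$. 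By \eqref{ELn'} we have $\ell^{(n_j)}(x_{n_j}) = 0$ for all $j$, and it remains to prove $\ell^{(n_j)}(x_{n_j}) \to \ell(x)$. I split the difference as
\[ \ell^{(n_j)}(x_{n_j}) - \ell(x) = \int_\F \big(\L(x_{n_j},y) - \L(x,y)\big)\,d\rho^{(n_j)}(y) + \Big(\int_\F \L(x,y)\,d\rho^{(n_j)}(y) - \int_\F \L(x,y)\,d\rho(y)\Big). \]
The second term tends to $0$ by vague convergence, as above. For the first term, let $C$ be a compact neighborhood containing $x$ and all $x_{n_j}$, and let $K'$ be a compact set with $\L(x',y)=0$ for $x' \in C$ and $y \notin K'$ (compact range). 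Then the first integrand vanishes outside $K'$, so the integral is bounded in absolute value by $\sup_{y \in K'}\big|\L(x_{n_j},y) - \L(x,y)\big| \cdot \rho^{(n_j)}(K')$. The supremum tends to $0$ by uniform continuity of $\L$ on $C \times K'$, while $\rho^{(n_j)}(K') \le C_{K'}$ uniformly in $j$ by Lemma~\ref{lemmaupper}. Thus the first term vanishes as well, giving $\ell(x) = \lim_j \ell^{(n_j)}(x_{n_j}) = 0$. Combined with $\ell \geq 0$ this shows $\inf_\F \ell = 0$ and $\ell|_{\supp\rho} \equiv 0$, which is \eqref{(EL equations)}.

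The main obstacle is precisely the last convergence $\ell^{(n_j)}(x_{n_j}) \to \ell(x)$, a genuine double limit in which the base point $x_{n_j}$ and the measure $\rho^{(n_j)}$ move simultaneously. Vague convergence alone handles a fixed integrand but not the shifting base point; what rescues the argument is the interplay of three ingredients furnished by the compact range hypothesis together with the earlier estimates: the compact support of $\L(x',\cdot)$ uniformly for $x'$ in a compact set, the uniform continuity of $\L$ on the compact product $C \times K'$, and the uniform mass bound $\rho^{(n_j)}(K') \le C_{K'}$ from Lemma~\ref{lemmaupper}. Without this uniform mass bound the error term $\sup_{y \in K'}|\cdots|\cdot \rho^{(n_j)}(K')$ could not be controlled, which is why Lemma~\ref{lemmaupper} is indispensable here.
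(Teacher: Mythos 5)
Your proposal is correct and follows essentially the same route as the paper: pointwise convergence $\ell^{(n)}\to\ell$ via vague convergence and compact range, the lower bound $\ell\ge 0$ from the Euler--Lagrange equations \eqref{ELn'} on the exhaustion, approximation of each $x\in\supp\rho$ by points $x_{n_j}\in\supp\rho^{(n_j)}$ (the paper's Lemma~\ref{Lemma support}, which you re-derive inline), and control of the double limit $\ell^{(n_j)}(x_{n_j})\to\ell(x)$ by uniform continuity of $\L$ on $C\times K'$ together with the uniform mass bound of Lemma~\ref{lemmaupper}. The last estimate is exactly the content of the paper's equicontinuity result (Proposition~\ref{Proposition Equicontinuity Locally Compact}), which you have simply inlined rather than stated as a separate proposition.
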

In order to prove Theorem~\ref{Theorem EL equations locally compact} we proceed in several steps.
The proof will be completed at the end of this subsection. 

\begin{Lemma} \label{Lemma support}
	Assume that the sequence of measures $(\rho^{(n)})_{n \in \N}$ converges vaguely to a regular Borel measure~$\rho \not= 0$. 
	Then 
	for every~$x \in \supp \rho$ there is a sequence~$(x_k)_{k \in \N}$ and a subsequence~$\rho^{(n_k)}$
	such that~$x_k \in \supp \rho^{(n_k)}$
	for all~$k \in \N$ and~$x_k \rightarrow x$.
\end{Lemma}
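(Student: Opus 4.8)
The plan is to exploit that $\F$ carries a Heine–Borel metric $d$ (arranged in \S\ref{S Basic Definitions}), so that every open ball is relatively compact, and to combine this with the fact that vague limits cannot lose mass on open sets. Fix $x \in \supp \rho$. For each $k \in \N$ the open ball $B_{1/k}(x)$ is an open neighborhood of $x$, so by the characterization of the support~\eqref{suppdef} we have $\rho(B_{1/k}(x)) > 0$. The idea is to show that this positivity is inherited by $\rho^{(n)}$ for all large $n$, so that the supports of $\rho^{(n)}$ must eventually meet each ball $B_{1/k}(x)$; a nested choice of indices and points then produces the desired sequence converging to $x$.

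The key step is a portmanteau-type inequality: for every relatively compact open set $U \subset \F$,
\[ \liminf_{n \to \infty} \rho^{(n)}(U) \;\geq\; \rho(U) \:. \]
To obtain this I would use inner regularity of $\rho$ (valid since $\F$ is Polish, hence Souslin) to pick, given $\varepsilon>0$, a compact set $K \subset U$ with $\rho(K) > \rho(U) - \varepsilon$. Since $\F$ is locally compact Hausdorff and $U$ is relatively compact, Urysohn's lemma provides a function $f \in C_c(\F)$ with $0 \le f \le 1$, $f \equiv 1$ on $K$ and $\supp f \subset U$. Then $\rho^{(n)}(U) \geq \int_\F f \, d\rho^{(n)}$, and vague convergence~\eqref{(vague convergence)} gives $\int_\F f \, d\rho^{(n)} \to \int_\F f \, d\rho \geq \rho(K) > \rho(U) - \varepsilon$; letting $\varepsilon \downarrow 0$ yields the claim. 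Applied to $U = B_{1/k}(x)$, which is relatively compact by the Heine–Borel property, this shows $\liminf_n \rho^{(n)}(B_{1/k}(x)) \geq \rho(B_{1/k}(x)) > 0$, so there exists $N_k \in \N$ with $\rho^{(n)}(B_{1/k}(x)) > 0$ for all $n \geq N_k$.

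With this in hand I would construct the sequence by a nested selection. First note that $\rho^{(n)}(B_{1/k}(x)) > 0$ forces $\supp \rho^{(n)} \cap B_{1/k}(x) \neq \varnothing$: otherwise $B_{1/k}(x)$ would be an open set disjoint from the support, hence of vanishing $\rho^{(n)}$-measure by~\eqref{suppdef}, a contradiction. Choosing inductively indices $n_1 < n_2 < \cdots$ with $n_k \geq N_k$ and points $x_k \in \supp \rho^{(n_k)} \cap B_{1/k}(x)$, we obtain $\dist(x_k, x) < 1/k \to 0$, so that $x_k \to x$ with $x_k \in \supp \rho^{(n_k)}$ for every $k \in \N$, as required.

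The main obstacle is the portmanteau inequality, and in particular ensuring that the Urysohn bump function has \emph{compact} support: this is precisely where the Heine–Borel property (relative compactness of balls) and the restriction to relatively compact $U$ enter, since vague convergence only controls $\int_\F f \, d\rho^{(n)}$ for $f \in C_c(\F)$. The remaining ingredients — inner regularity of $\rho$, and the elementary fact that an open set disjoint from $\supp \rho^{(n)}$ is $\rho^{(n)}$-null — are routine.
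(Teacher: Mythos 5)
Your proof is correct, and it takes a somewhat different route from the paper. The paper argues by contradiction: if no such subsequence existed, some open neighborhood~$U$ of~$x$ would be disjoint from~$\supp \rho^{(n)}$ for almost all~$n$; choosing a compact neighborhood~$V \subset U$ of~$x$ (so that~$\rho(V) > 0$ by~\eqref{suppdef}) and a single Urysohn function~$f \in C_c(U;[0,1])$ with~$f|_V \equiv 1$, vague convergence yields the contradiction~$0 < \int_\F f \, d\rho = \lim_{n} \int_\F f \, d\rho^{(n)} = 0$. You instead prove the open-set half of the portmanteau theorem for vague convergence, $\liminf_n \rho^{(n)}(U) \ge \rho(U)$ for relatively compact open~$U$, and then construct the points~$x_k$ directly by nested selection. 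Both arguments rest on the same two ingredients --- testing vague convergence against a compactly supported bump function, and the characterization~\eqref{suppdef} of the support --- but yours requires one extra tool, namely inner regularity of~$\rho$ (which the paper has indeed established via Ulam's theorem, so this is available), whereas the contradiction argument gets by with the single estimate~$\int_\F f \, d\rho \ge \rho(V) > 0$. In exchange, your argument is quantitative and yields slightly more than the lemma asserts: $\supp \rho^{(n)}$ meets each ball~$B_{1/k}(x)$ for \emph{all}~$n \ge N_k$, so the approximating points can be chosen along the full tail of the sequence rather than merely along a subsequence. Two minor remarks: the Heine--Borel metric is a convenience rather than a necessity --- the locally compact Urysohn lemma already produces~$f$ with compact support contained in~$U$ without~$U$ being relatively compact, and in any case one could intersect small balls with a fixed relatively compact neighborhood of~$x$ --- though since the paper arranges such a metric in~\S\ref{S Basic Definitions}, your use of it is legitimate; and your ``elementary fact'' that an open set disjoint from~$\supp \rho^{(n)}$ is~$\rho^{(n)}$-null does deserve the second-countability of~$\F$ (or inner regularity) as justification, since in general spaces the union of open null sets need not be null.
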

\Proof Assume conversely that for some $x \in \supp \rho$ there is no such subsequence. Then there is an 
open neighborhood~$U$ of~$x$ which does not intersect the support of the measures~$\rho^{(n)}$ for almost all~$n \in \N$.
As a consequence, for every compact neighborhood~$V$ of~$x$ with $V \subset U$ (which exists by \cite[Proposition 4.30]{folland}) we have
\[ \rho^{(n)}(V) = 0 \qquad \text{for almost all~$n \in \N$} \:, \]
and
\begin{align*}
\rho(V) > 0
\end{align*}
in view of \eqref{suppdef}. 
Since $U$ is an open cover of the compact Hausdorff space $V$, there exists $f \in C_c(U; [0,1])$ with $f|_V \equiv 1$ (see for instance~\cite[Lemma 2.92]{aliprantis}). Thus vague convergence
$\rho^{(n)} \stackrel{v}{\to} \rho$ yields the contradiction
\begin{align*}
0 < \int_{\F} f \: d\rho = \lim_{n \to \infty} \int_{\F} f \: d\rho^{(n)} = 0 \:,
\end{align*}
which proves the claim.
\QED

For notational simplicity, we denote the subsequence~$\rho^{(n_k)}$ again by~$\rho^{(k)}$.

\begin{Lemma}\label{Lemma Continuity of ell}
Let $\L$ be continuous and of compact range. Then the function~$\ell \colon \F \to \R$ defined by \eqref{(Lagrange functional)} is continuous.
\end{Lemma}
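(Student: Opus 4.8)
The plan is to reduce the continuity of $\ell$ to the continuity of the map $g \colon \F \to \R$ given by $g(x) := \int_\F \L(x,y)\, d\rho(y)$, since $\ell = g - 1$ differs from $g$ only by an additive constant. I would fix an arbitrary point $x_0 \in \F$ and prove that $g$ is continuous at $x_0$; as $x_0$ is arbitrary, this yields continuity on all of $\F$.

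First I would exploit that $\F$ is locally compact to choose a compact neighborhood $K$ of $x_0$, and then invoke the compact range hypothesis (Definition~\ref{defcompactrange}) to obtain a compact set $K' \subset \F$ with $\L(x,y) = 0$ for all $x \in K$ and $y \notin K'$. Consequently, for every $x$ in the interior of $K$ one has $g(x) = \int_{K'} \L(x,y)\, d\rho(y)$, so that the a priori infinite integral is localized to the fixed compact set $K'$. Since $\rho$ is locally finite and hence finite on compact sets, $\rho(K') < \infty$; applied with $K$ a compact neighborhood of any given point this already shows that $g$ takes finite real values everywhere, so that $\ell$ is indeed a well-defined real-valued function.

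The core estimate would then use the uniform continuity of $\L$ on the compact metric space $K \times K'$ (recall that $\F$, and hence the product, is metrizable). Given $\varepsilon > 0$, uniform continuity yields $\delta > 0$ such that $|\L(x,y) - \L(x_0,y)| < \varepsilon$ whenever $x \in K$ with $\dist(x,x_0) < \delta$, uniformly in $y \in K'$. For such $x$ lying in the interior of $K$ I would then estimate
\begin{align*}
|g(x) - g(x_0)| \le \int_{K'} |\L(x,y) - \L(x_0,y)|\, d\rho(y) \le \varepsilon\, \rho(K') \:.
\end{align*}
Letting $\varepsilon \to 0$ gives continuity of $g$ at $x_0$, and therefore $\ell = g - 1$ is continuous.

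I do not expect any serious obstacle here. The only points requiring care are that the compact range property genuinely confines the $y$-integration to a single fixed compact set $K'$ once $x$ is restricted to the neighborhood $K$ (which is exactly what Definition~\ref{defcompactrange} provides), and that $\rho(K') < \infty$ by local finiteness, so that the factor $\rho(K')$ appearing in the estimate is a harmless finite constant. The continuity of $\L$ enters only through its uniform continuity on the compact product $K \times K'$.
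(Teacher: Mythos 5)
Your proposal is correct and takes essentially the same route as the paper: both proofs localize the integral to a fixed compact set $K'$ by applying the compact range property to a compact neighborhood $K$ of the point in question, and both rely on the compactness of $K \times K'$ together with the finiteness of $\rho(K')$. The only difference is the final limiting step — the paper invokes dominated convergence using the boundedness of $\L$ on $K \times K'$, whereas you use uniform continuity of $\L|_{K \times K'}$ to obtain the direct estimate $|\ell(x) - \ell(x_0)| \le \varepsilon\, \rho(K')$, which is equally valid (and in fact mirrors the quantitative argument the paper later uses for equicontinuity in Proposition~\ref{Proposition Equicontinuity Locally Compact}).
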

\begin{proof}
	For any $x \in \F$, let $(x_n)_{n \in \N}$ be an arbitrary sequence in $\F$ converging to $x$, and let $U$ be an open, relatively compact neighborhood of $x$ (which exists by local compactness of $\F$). Since $\L$ is of compact range, there is a compact set $K' \subset \F$ such that $\L(\tilde{x}, y) = 0$ for all $\tilde{x} \in K := \overline{U}$
and $y \notin K'$. Since the sequence~$(x_n)_{n \in \N}$ converges to $x$, there is an integer $N \in \N$ such that $x_n\in U$ for all $n \ge N$. By continuity of $\L$, the mapping~$\L : K \times K' \to \R$ is bounded. Therefore, the functions $\L(x_n, \cdot) : K' \to \R$ are uniformly bounded for all $n \ge N$. Thus Lebesgue's dominated convergence theorem yields 
	\begin{align*}
		\ell(x) &= \int_{K'} \L(x,y) \:d\rho(y) - 1 = \int_{K'} \lim_{n \to \infty} \L(x_n,y) \:d\rho(y) -1 \\
		&= \lim_{n \to \infty} \int_{K'} \L(x_n,y) \:d\rho(y) - 1 = \lim_{n \to \infty} \ell(x_n) \:, 
	\end{align*}
	 proving continuity of $\ell$. 
\end{proof}

In the next proposition, we show that the sequence $(\ell^{(n)})_{n \in \N}$ converges pointwise to~$\ell$.
Choosing~$K=\{x\}$ in Definition~\ref{defcompactrange}, we denote the corresponding compact set~$K'$ by~$K_x$, i.e.
\beq \label{Kxdef}
\L(x,y) = 0 \qquad \text{for all~$y \not \in K_x$} \:.
\eeq
\begin{Prp}\label{Proposition Pointwise Convergence Locally Compact}
Let $\L$ be continuous and of compact range, and let~$(\ell^{(n)})_{n \in \N}$ and~$\ell$ be the functions defined in~\eqref{lndef'} and~\eqref{(Lagrange functional)}, respectively. 
Then $(\ell^{(n)})_{n \in \N}$ converges pointwise to $\ell$, i.e.
	\begin{align}\label{(pointwise convergence)}
	\lim_{n \to \infty} \ell^{(n)} (x) = \ell (x) \qquad \text{for all $x \in \F$} \:.
	\end{align}
\end{Prp}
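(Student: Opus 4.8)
The plan is to fix $x \in \F$ and show that $\ell^{(n)}(x) \to \ell(x)$, which in view of the definitions~\eqref{lndef'} and~\eqref{(Lagrange functional)} amounts to proving that
\[
\lim_{n \to \infty} \int_\F \L(x,y)\: d\rho^{(n)}(y) = \int_\F \L(x,y)\: d\rho(y) \:.
\]
Since we already know from~\eqref{(vague convergence)} that $\rho^{(n)} \stackrel{v}{\to} \rho$ vaguely, the natural strategy is to pass the test function $\L(x,\cdot)$ through this vague convergence. The obstruction is that vague convergence only tests against functions in $C_c(\F)$, whereas $\L(x,\cdot)$ need not have compact support in general.

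This is precisely where the compact range assumption enters, and I expect it to be the key step. First I would invoke~\eqref{Kxdef}: choosing $K = \{x\}$ in Definition~\ref{defcompactrange}, there is a compact set $K_x \subset \F$ such that $\L(x,y) = 0$ for all $y \notin K_x$. Hence $\L(x,\cdot)$ is a continuous function vanishing outside the compact set $K_x$, so $\L(x,\cdot) \in C_c(\F)$. This turns the test function into an admissible one for vague convergence, and the whole statement reduces to evaluating vague convergence on it. Concretely, $\L(x,\cdot)$ is continuous (by the continuity hypothesis on $\L$) and compactly supported, so directly
\[
\lim_{n \to \infty} \int_\F \L(x,y)\: d\rho^{(n)}(y) = \int_\F \L(x,y)\: d\rho(y)
\]
follows from the definition of vague convergence applied to the test function $\L(x,\cdot)$.

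The remaining steps are routine. Subtracting the constant $1$ from both sides (which is unaffected by the limit) gives $\lim_{n\to\infty}\ell^{(n)}(x) = \ell(x)$, and since $x \in \F$ was arbitrary this establishes the pointwise convergence~\eqref{(pointwise convergence)}. The main obstacle is conceptually the single observation that compact range upgrades $\L(x,\cdot)$ from a merely continuous function to an element of $C_c(\F)$; once this is in hand, the result is an immediate application of vague convergence. One should take care to note that $\L(x,\cdot)$ is genuinely continuous (not merely lower semi-continuous) here, which is exactly what the standing continuity assumption on $\L$ in this subsection provides, and that $K_x$ is compact so that $\supp \L(x,\cdot) \subset K_x$ is indeed compact.
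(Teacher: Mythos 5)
Your proposal is correct and follows essentially the same route as the paper's own proof: compact range with $K=\{x\}$ yields $\L(x,\cdot)\in C_c(\F)$, and then vague convergence~\eqref{(vague convergence)} applied to this test function gives the limit of the integrals, hence of $\ell^{(n)}(x)$. Your additional remarks (that $\supp \L(x,\cdot)\subset K_x$ is compact and that genuine continuity, not just lower semi-continuity, is needed) are accurate and only make explicit what the paper leaves implicit.
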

\begin{proof}
	Let $x \in \F$. Since $\L$ is assumed to be of compact range, using the notation~\eqref{Kxdef}
	implies that 
	$\L(x,y) = 0$ for all $y \notin K_x$. 
	We conclude that $\L(x, \cdot) \in C_c(\F)$, and thus by vague convergence \eqref{(vague convergence)} we obtain 
	\begin{align*}
	\ell(x) = \int_{\F} \L(x,y) \:d\rho(y) - 1 
	= \lim_{n \to \infty} \int_{\F} \L(x,y) \:d\rho^{(n)}(y) - 1 = \lim_{n \to \infty} \ell^{(n)}(x) \:.
	\end{align*}
	Since $x \in \F$ is arbitrary, the sequence $(\ell^{(n)})_{n \in \N}$ converges pointwise to $\ell$. 
\end{proof}

Our proof of Theorem \ref{Theorem EL equations locally compact} will be based on equicontinuity of the family~$(\ell^{(n)}|_K)_{n \in \N}$ for arbitrary compact subsets $K \subset \F$.
We know that the functions~$\ell^{(n)}$ are continuous and uniformly bounded on compact sets.
However, as can be seen from the example~$(f_n)_{n \in \N}$ with
\[ f_n : [0,1] \rightarrow \R, \qquad f_n(x) = \sin nx \qquad \text{for all~$n \in \N$}\:, \]
these conditions are in general not sufficient to ensure equicontinuity.
Nonetheless, the additional assumption that the Lagrangian $\L \colon \F \times \F \to \R_0^+$ is of compact range
(see Definition~\ref{defcompactrange}) gives rise to equicontinuity
of the family~$(\ell^{(n)}|_K)_{n \in \N}$, as the following proposition shows.

\begin{Prp}\label{Proposition Equicontinuity Locally Compact}
	Let $\L$ be continuous and of compact range. Then for any compact subset $K \subset \F$, the family $F_K := \{\ell^{(n)}|_K \colon n \in \N \}$ is equicontinuous.
\end{Prp}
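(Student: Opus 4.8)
The plan is to use the compact-range hypothesis to confine the $y$-integration in $\ell^{(n)}$ to a single compact set that does not depend on $n$, and then to combine the uniform continuity of $\L$ with the uniform mass bound provided by Lemma~\ref{lemmaupper}. First I would fix a compact set $K \subset \F$. Since $\L$ has compact range (Definition~\ref{defcompactrange}), there is a compact set $K' \subset \F$ with $\L(x,y)=0$ whenever $x \in K$ and $y \notin K'$. Consequently, for every $x \in K$ and every $n \in \N$ the function $\ell^{(n)}$ reduces to an integral over $K'$,
\[ \ell^{(n)}(x) = \int_{K'} \L(x,y)\: d\rho^{(n)}(y) - 1 \:. \]

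Next I would estimate the oscillation of $\ell^{(n)}$ on $K$. For $x, x' \in K$ and any $n \in \N$,
\[ \bigl| \ell^{(n)}(x) - \ell^{(n)}(x') \bigr| \leq \int_{K'} \bigl| \L(x,y) - \L(x',y) \bigr|\: d\rho^{(n)}(y) \leq \Bigl( \sup_{y \in K'} \bigl| \L(x,y) - \L(x',y) \bigr| \Bigr)\: \rho^{(n)}(K') \:. \]
The key point here is that the domain of integration $K'$ is the same for all $n$. Since $\L$ is continuous, its restriction to the compact product $K \times K'$ is uniformly continuous; hence, given $\varepsilon > 0$, there is $\delta > 0$ (with respect to the underlying metric $d_{\F}$ on $\F$) such that $d_{\F}(x,x') < \delta$ forces $\sup_{y \in K'} | \L(x,y)-\L(x',y) | < \varepsilon$ for all $x,x' \in K$.

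It remains to bound $\rho^{(n)}(K')$ uniformly in $n$. Because each $\rho^{(n)}$ coincides with one of the measures $\rho^{[m]}$ entering the diagonal construction~\eqref{rhorund}, Lemma~\ref{lemmaupper} applies and yields a constant $C_{K'} > 0$, independent of $n$, with $\rho^{(n)}(K') \leq C_{K'}$. Combining the three ingredients gives $\bigl| \ell^{(n)}(x) - \ell^{(n)}(x') \bigr| \leq \varepsilon\, C_{K'}$ whenever $d_{\F}(x,x') < \delta$, with $\delta$ and $C_{K'}$ both independent of $n$. Since $\varepsilon > 0$ was arbitrary, the family $F_K$ is (uniformly) equicontinuous, which is the assertion.

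The step I expect to be decisive is the localization in the first paragraph: it is precisely the compact-range assumption that replaces the $n$-dependent support of $\rho^{(n)}$ by the fixed compact set $K'$. Without it, the factor $\rho^{(n)}(K')$ would have to be taken over a set growing with $n$, and, as the example $f_n(x)=\sin nx$ preceding the proposition illustrates, continuity together with uniform boundedness alone cannot force equicontinuity. Once the integration is localized, the uniform continuity of $\L$ on the compact set $K \times K'$ and the uniform mass bound of Lemma~\ref{lemmaupper} are routine to apply.
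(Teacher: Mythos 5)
Your proposal is correct and follows essentially the same route as the paper's proof: localize the integration to a fixed compact set $K'$ via the compact-range hypothesis, invoke uniform continuity of $\L$ on $K \times K'$, and control $\rho^{(n)}(K')$ uniformly in $n$ by Lemma~\ref{lemmaupper}. The only cosmetic difference is that the paper absorbs the constant by choosing $\delta$ for the threshold $\varepsilon/C_{K'}$, whereas you arrive at the bound $\varepsilon\, C_{K'}$ and appeal to the arbitrariness of $\varepsilon$ --- these are equivalent.
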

\begin{proof}
	Consider an arbitrary compact set $K \subset \F$. In order to prove equicontinuity of~$F_K$, we have to show that for every $\varepsilon >0$ and every $x \in K$ there is a corresponding neighborhood $V = V(x)$ of $x$ with 
	\begin{align*}
	\sup_{f \in F_K} \sup_{z \in V} \left|f(x) - f(z)\right| < \varepsilon\: . 
	\end{align*}
	Let $x \in K$ and consider an arbitrary $\varepsilon > 0$. Since $\L$ is of compact range, there is a compact set $K' \subset \F$ such that
\beq \label{LKKp}
\L(\tilde{x},y) = 0 \qquad \text{for all $\tilde{x} \in K$ and~$y \notin K'$} \:.
\eeq
In view of Lemma~\ref{lemmaupper} there is a positive constant~$C_{K'} > 0$ such that~$\rho^{(n)}(K') \le C_{K'}$
for all~$n \in \N$. 
	
	Since $\L$ is continuous and $K \times K'$ is compact, the mapping  
	\[\L|_{K \times K'} \colon K \times K' \to \R \]
	is uniformly continuous. Moreover, in view of~\eqref{LKKp}, the same is true for~$\L|_{K \times \F}$.
 Hence for every $\varepsilon > 0$ there is a $\delta >0$ such that 
\[ \Big| \L|_{K \times \F}(x_1, y_1) - \L|_{K \times \F}(x_2, y_2) \Big| < \varepsilon \qquad \text{for all $(x_2, y_2) \in B_{\delta}((x_1, y_1))$} \:. \]
	Choosing $\delta > 0$ such that $\left|\L|_{K \times \F}(x,\cdot) - \L|_{K \times \F}(z, \cdot) \right| < \varepsilon/C_{K'}$ for all $z \in B_{\delta}(x) \cap K$, we obtain
	\begin{align*}
	\sup_{n \in \N} &\sup_{z \in B_{\delta}(x) \cap K} \left|\ell^{(n)}|_K(x) - \ell^{(n)}|_K(z)\right| \\
	&= \sup_{n \in \N} \sup_{z \in B_{\delta}(x) \cap K} \left|\int_{\F} \Big( \L|_{K \times \F}(x,y) - \L|_{K \times \F}(z,y) 
	\Big)\: d\rho^{(n)}(y) \right| \\
	&\le \sup_{n \in \N} \sup_{z \in B_{\delta}(x) \cap K} \int_{K'} \Big| \L|_{K \times \F}(x,y) - \L|_{K \times \F}(z,y) \Big|\:d\rho^{(n)}(y) \\
	&< \sup_{n \in \N} \rho^{(n)}(K') \, \frac{\varepsilon}{C_{K'}} \le \varepsilon \:.
	\end{align*}
	This yields equicontinuity of $F_K$ as desired.
\end{proof}

After these preparations, we are able to prove Theorem \ref{Theorem EL equations locally compact}.
\begin{proof}[Proof of Theorem \ref{Theorem EL equations locally compact}]
	Let $(K_n)_{n \in \N}$ be a compact exhaustion of $\F$, and let $(\rho^{(n)})_{n \in \N}$ be the corresponding sequence of vaguely converging measures according to~\eqref{rhorund} such that~\eqref{lndef'} and~\eqref{ELn'} hold. The main idea of the proof is to make use of pointwise convergence \eqref{(pointwise convergence)} and equicontinuity of the sequence $(\ell^{(n)}|_K)_{n \in \N}$ for arbitrary compact sets $K \subset \F$ as established in Proposition~\ref{Proposition Pointwise Convergence Locally Compact} and Proposition~\ref{Proposition Equicontinuity Locally Compact}, respectively. 
	
	First of all, application of Proposition \ref{Proposition Pointwise Convergence Locally Compact} shows that $\ell(x) \ge 0$ for every $x \in \F$. Namely, since $\rho^{(n)}$ is a minimizer of the action $\Sact_{K_n}$ for every~$n \in \N$, and~$x$ is contained in all compact sets~$(K_{n})_{n \ge N}$ for some integer $N = N(x)$, we have
	\begin{align}
	\ell(x) \stackrel{\eqref{(pointwise convergence)}}{=} \lim_{n \to \infty} \ell^{(n)}(x) = \lim_{n \to \infty} \ell^{(n)}|_{K_{n}}(x) \stackrel{\eqref{ELn'}}{\ge} 0 \qquad \text{for all~$x \in \F$} \:. \label{(bigger 0)}
	\end{align}
	In order to derive the EL equations~\eqref{(EL equations)}, 
	it remains to prove that $\ell(x)$ vanishes for every $x \in\supp \rho$. Since $\F$ is locally compact, each $x \in \supp \rho$ is contained in a compact neighborhood $K_x$ (cf.~\eqref{Kxdef}) such that $\L(x, .)$ vanishes outside $K_x$. Due to vague convergence~$\rho^{(n)} \to \rho$ as~$n \to \infty$ (cf.~\eqref{(vague convergence)}), by virtue of Lemma \ref{Lemma support} there exists a sequence~$x^{(n)} \to x$ as $n \to \infty$ such that~$x^{(n)} \in \supp \rho^{(n)}$ for every $n \in \N$. We choose~$N' \in \N$ such that~$x^{(n)} \in K_x$ for all~$n \geq N'$. For this reason, it suffices to focus on the restriction~$\ell|_{K_x}$. 
	Equicontinuity of the family~$\{\ell^{(n)}|_{K_x} \colon n \in \N \}$ (see Proposition~\ref{Proposition Equicontinuity Locally Compact}) yields
	\[ 
	\lim_{n \to \infty} \sup_{k \in \N} \left|\ell^{(k)}|_{K_x}\big( x^{(n)} \big) -\ell^{(k)}|_{K_x}(x) \right| =0 \:. \]
	Moreover, the expression
	\[ 
	\lim_{n \to \infty} \left|\ell^{(n)}|_{K_x}(x) - \ell|_{K_x}(x)\right| = 0 \qquad \text{for all~$x \in K_x$} \]
	holds in view of pointwise convergence \eqref{(pointwise convergence)}. Taken together, for every~$x \in \supp \rho$ we finally obtain
	\begin{align*}
	&\lim_{n \to \infty}\left|\ell^{(n)} \big(x^{(n)} \big) - \ell(x) \right| =  \lim_{n \to \infty} \left|\ell^{(n)}|_{K_x} \big(x^{(n)} \big) - \ell|_{K_x}(x) \right| \\
	&\qquad \le \lim_{n\to \infty} \left|\ell^{(n)}|_{K_x} \big(x^{(n)} \big) -\ell^{(n)}|_{K_x}(x) \right| + \lim_{n \to \infty} \left|\ell^{(n)}|_{K_x}(x) - \ell|_{K_x}(x)\right| = 0 \:.
	\end{align*}
	In view of \eqref{(bigger 0)}, the Euler-Lagrange equations \eqref{(EL equations)} hold due to
	\begin{align*}
	\ell(x) = \lim_{n \to \infty} \ell^{(n)} \big(x^{(n)} \big) \stackrel{\eqref{ELn'}}{=} 0 \qquad \text{for all~$x \in \supp \rho$} \:,
	\end{align*}
	which completes the proof.
\end{proof}

In the remainder of this subsection, we discuss the properties (iii) and (iv) in~\S\ref{seccvp}.
Condition~(iii) holds by construction because we are working with locally finite measures
(see~\S\ref{S Basic Definitions}). Condition~(iv) does not hold in general,
but it can be checked a-posteriori for a constructed measure~$\rho$.
Under suitable assumptions on~$\L$, however, 
this condition can even be verified a-priori, i.e.\ without knowing~$\rho$.
This is exemplified in the following lemma.

\begin{Lemma}\label{Lemma conditions}
	Let $\L : \F \times \F \to \R$ be continuous and of compact range. Moreover, assume that the following conditions hold:
	\begin{enumerate}[leftmargin=2em]
		\item[\rm{(a)}] $c := \inf_{x \in \F} \L(x,x) > 0$. \\[-1em]
		\item[\rm{(b)}] $\sup_{x,y \in \F} \L(x,y) \le \mathscr{C} < \infty$.  \\[-0.9em]
		\item[\rm{(c)}] There is an integer~$N>0$ such that every~$K_x$ (as defined in~\eqref{Kxdef}) can be covered by
		open sets~$U_1, \ldots, U_N$ with the property that for all~$i \in \{1,\ldots, N\}$,
		\[ \L(x,y) > \frac{c}{2} \qquad \text{for all~$y \in U_i$} \:. \]
	\end{enumerate}
	Then the measure~$\rho$ constructed in~\eqref{rhoglobal} satisfies condition~{\rm{(iv)}} in~\S\ref{seccvp}.
\end{Lemma}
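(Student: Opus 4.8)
The plan is to verify condition~(iv) of~\S\ref{seccvp} in the form~\eqref{Lint} directly, by producing a bound on~$\int_\F \L(x,y)\,d\rho(y)$ that is uniform in~$x$. By definition~\eqref{(Lagrange functional)} this integral equals~$\ell(x)+1$, and since the proof of Theorem~\ref{Theorem EL equations locally compact} already gives~$\ell \ge 0$, the statement is equivalent to an \emph{upper} bound on~$\ell$. To obtain it, I would first use that~$\L$ has compact range: for each~$x$ the function~$\L(x,\cdot)$ vanishes outside the compact set~$K_x$ of~\eqref{Kxdef}, so that, invoking the global bound~(b),
\[ \int_\F \L(x,y)\,d\rho(y) = \int_{K_x}\L(x,y)\,d\rho(y) \le \mathscr{C}\,\rho(K_x) \:. \]
Thus everything reduces to bounding~$\rho(K_x)$ by a constant independent of~$x$.

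To this end I would cover~$K_x$, as permitted by~(c), by the~$N$ open sets~$U_1,\dots,U_N$ on which~$\L$ stays above~$c/2$, giving~$\rho(K_x) \le \sum_{i=1}^N \rho(U_i)$ by subadditivity. The heart of the argument is the per-piece estimate~$\rho(U_i)\le 2/c$, which I would establish exactly by the mechanism of Lemma~\ref{lemmaupper}: if~$U_i \cap \supp\rho = \varnothing$ then~$\rho(U_i)=0$, while otherwise one picks a point~$z \in U_i \cap \supp\rho$ and feeds the lower bound on~$\L$ over~$U_i$ into the Euler--Lagrange equation of Theorem~\ref{Theorem EL equations locally compact} at~$z$,
\[ 1 = \int_\F \L(z,y)\,d\rho(y) \ge \int_{U_i}\L(z,y)\,d\rho(y) \ge \frac{c}{2}\,\rho(U_i) \:. \]
Combining the two steps yields~$\rho(K_x) \le 2N/c$ and hence~$\int_\F \L(x,y)\,d\rho(y) \le 2N\mathscr{C}/c$ uniformly in~$x$, which is the claim. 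An equivalent route, if one prefers not to invoke the limiting EL equation, is to prove the uniform bound~$\rho^{(n)}(U_i)\le 2/c$ directly from the EL equations~\eqref{ELn'} of the approximating minimizers and then pass to the vague limit~\eqref{(vague convergence)}, using that~$\rho(U_i) \le \liminf_n \rho^{(n)}(U_i)$ for the open set~$U_i$.

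The two features that make the final bound uniform in~$x$ are precisely the two hypotheses: assumption~(a) supplies the uniform denominator~$c$ in the per-piece estimate~$2/c$ (independent of which point of~$\supp\rho$ is used to run the EL equation), while assumption~(c) caps the number of pieces at~$N$ regardless of~$x$, thereby controlling the constant~$C_{K_x}$ of Lemma~\ref{lemmaupper} that would otherwise grow as~$K_x$ varies. I expect the main obstacle to be precisely the interplay between the base point~$x$ defining~$K_x$ and the integration point~$z$ in the EL equation: the estimate above requires the lower bound~$\L(z,\cdot)\ge c/2$ on~$U_i$ to hold with~$z$ as first argument, so one must read~(c) in such a way that the inequality is available from a point~$z \in U_i \cap \supp\rho$, and check that such a~$z$ can indeed be selected in each~$U_i$ that carries mass. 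Once this point is settled, the remaining estimates are routine.
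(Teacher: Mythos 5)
Your proposal is correct and takes essentially the same route as the paper: reduce via compact range to $\int_{\F}\L(x,y)\,d\rho(y)\le \mathscr{C}\,\rho(K_x)$, establish the per-piece bound $\rho(U_i)\le 2/c$ by the Euler--Lagrange mechanism of Lemma~\ref{lemmaupper} (inequality~\eqref{(lemmaupper)}), and sum over the $N$ covering sets to obtain the uniform bound $2N\mathscr{C}/c$. The only difference is one of explicitness: the paper simply cites~\eqref{(lemmaupper)}, whereas you spell out how that estimate is transported to the limit measure $\rho$ (via the limiting EL equations or via $\rho(U_i)\le\liminf_n\rho^{(n)}(U_i)$ under vague convergence) and correctly flag the implicit reading of hypothesis~(c) needed to run the EL argument from a support point $z\in U_i$.
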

\begin{proof} Since~$\L$ is continuous and of compact range, we have
	\begin{align*}
	\int_{\F} \L(x,y) \:d\rho(y) = \int_{K_x} \L(x,y) \:d\rho(y) \le \sup_{y \in K_x} \L(x,y) \:\rho(K_x) < \infty \:,
	\end{align*}
	showing that~$\L(x, \cdot)$ is $\rho$-integrable for every $x \in \F$. It remains to prove that
	\[\sup_{x \in \F} \int_{\F} \L(x,y) \:d\rho(y) < \infty \:. \]
	Since $\L : \F \times \F \to \R$ satisfies (a)--(c), inequality \eqref{(lemmaupper)} yields 
	\begin{align*}
	\rho(U_i) \le \sup_{x \in \F} \frac{2}{\L(x,x)} \le \frac{2}{c} \qquad \text{for all~$i\in \{1,\ldots, N\}$} \:.
	\end{align*}
	Thus we obtain
	\begin{align*}
	\sup_{x \in \F} \int_{\F} \L(x,y) \:d\rho(y) \le \sup_{x,y \in \F} \L(x,y) \:\rho(K_x) \le \mathscr{C}
	\sum_{i=1}^N \rho(U_i) < \frac{2 \mathscr{C} N}{c} 
	< \infty 
	\end{align*}
	as desired. 
\end{proof}

\subsection{Existence of Minimizers under Variations of Compact Support}\label{subsection compact variation}
Apart from technical convenience, it
is natural and sufficient for many applications to restrict attention to
variations of compact support (in contrast to the more general variations of finite volume
as introduced in Definition~\ref{deffv} and Definition~\ref{Definition minimizer}).
\begin{Def} \label{Def global}
	A measure $\rho \in \mathfrak{B}_{\F}$ is said to be a {\bf{minimizer under variations of compact support}}
	of the causal action if for any~$\tilde{\rho} \in \mathfrak{B}_{\F}$ which satisfies~\eqref{totvol} such that the signed measure~$(\rho-\tilde{\rho})$ is compactly supported, the inequality 
	\[ \big( \Sact(\tilde{\rho}) - \Sact(\rho) \big) \geq 0 \]
	holds.
\end{Def}
The goal of this subsection is to prove that the measure~$\rho$ constructed in~\eqref{rhoglobal}
is a minimizer under variations of compact support.
Before stating our result (see Theorem~\ref{Lemma compact minimizer} below),
we show that the difference \eqref{integrals} is well-defined.
Indeed, considering variations of compact support,
the signed measure~$\mu := \tilde{\rho} - \rho$ is compactly supported. Considering its Jordan decomposition~$\mu = \mu^+ - \mu^-$ (see e.g.~\cite[\S 29]{halmosmt}), the measures~$\mu^+$ and~$\mu^-$ have compact support. Hence, using that the Lagrangian is continuous,
\begin{align*}
	\int_{\F} d\mu^+ (x) \:\ell(x) \le \left(\sup_{x \in \supp \mu^+} \ell(x)\right) \mu^+(\supp \mu^+) < \infty \:,
\end{align*}
and similarly for $\mu^-$. Now we can proceed as in the proof of \cite[Lemma 2.1]{jet}
to conclude that all the integrals in~\eqref{integrals} are well-defined and finite.

\begin{Thm}\label{Lemma compact minimizer}
	Assume that the Lagrangian $\L$ is continuous and of compact range. Then the measure~$\rho$ constructed in \eqref{rhoglobal} is a minimizer under variations of compact support.
\end{Thm}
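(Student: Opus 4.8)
The plan is to work with the action difference~\eqref{integrals} for the signed measure $\mu := \tilde\rho - \rho$ and to reduce the question of its sign to the minimizing property of the approximating measures $\rho^{(n)}$ on the compact sets $K_n$. Since $\tilde\rho$ is a variation of compact support, $\mu$ is a compactly supported signed measure with $\mu(\F)=0$; I would fix a compact set $K$ with $\supp\mu \subset K$, $K \subset K_m^\circ$ for some $m$, and $\rho(\partial K)=0$ (such a $\rho$-continuity set exists since $\rho$ is finite on compacta). Using symmetry of $\L$, the identity $\int_\F \L(x,y)\,d\rho(y) = \ell(x)+1$, and $\mu(\F)=0$, the difference~\eqref{integrals} collapses to
\[
\Sact(\tilde\rho) - \Sact(\rho) \;=\; 2\int_\F \ell\,d\mu \;+\; \iint_{\F\times\F}\L(x,y)\,d\mu(x)\,d\mu(y) \;=:\; 2\int_\F \ell\,d\mu + Q(\mu)\:.
\]
The first term is non-negative: by Theorem~\ref{Theorem EL equations locally compact} we have $\ell \geq 0$ with $\ell\equiv 0$ on $\supp\rho$, and positivity of $\tilde\rho$ forces the negative part $\mu^-$ to be dominated by $\rho$, hence concentrated on $\supp\rho$, so $\int \ell\,d\mu = \int\ell\,d\mu^+ \geq 0$. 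The quadratic term $Q(\mu)$, however, carries no sign, since $\L$ need not be a positive-definite kernel. Controlling $Q(\mu)$ is the main obstacle, and it is precisely here that one must exploit genuine minimality (not merely the Euler--Lagrange equations).

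To control $Q(\mu)$ I would transfer the problem to the compact minimizers. For $n \geq m$ (so that $K \subset K_n$) set $\delta_n := \rho(K)-\rho^{(n)}(K)$ and, assuming the non-trivial case $\rho(K)>0$, define the competitor
\[
\nu^{(n)} \;:=\; \rho^{(n)}\big|_{\F\setminus K} \;+\; \Big(1 - \tfrac{\delta_n}{\tilde\rho(K)}\Big)\,\tilde\rho\big|_{K}\:.
\]
This is a positive Borel measure supported in $K_n$ (for large $n$, as $\delta_n\to 0$ and $\tilde\rho(K)=\rho(K)>0$), and since $\tilde\rho(K)=\rho(K)$ one checks $\nu^{(n)}(\F)=\rho^{(n)}(\F)=\lambda_n$. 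Because $\rho^{(n)}=\lambda_n\,\rho_{K_n}$ with $\rho_{K_n}$ a minimizer of the normalized problem on $K_n$ (Theorem~\ref{thm1}) and $\Sact_{K_n}$ is $2$-homogeneous, the minimizing property yields
\[
\Sact_{K_n}\big(\nu^{(n)}\big) - \Sact_{K_n}\big(\rho^{(n)}\big) \;\geq\; 0 \qquad \text{for all large } n\:.
\]

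It then remains to show that this non-negative quantity converges to $\Sact(\tilde\rho)-\Sact(\rho)$. Writing $\sigma_n := \nu^{(n)}-\rho^{(n)} = (1-\delta_n/\tilde\rho(K))\,\tilde\rho|_K - \rho^{(n)}|_K$, which is supported in $K$ with $\sigma_n(\F)=0$, I expand
\[
\Sact_{K_n}\big(\nu^{(n)}\big) - \Sact_{K_n}\big(\rho^{(n)}\big) \;=\; 2\int_K \ell^{(n)}\,d\sigma_n \;+\; 2\,\sigma_n(\F) \;+\; \iint_{K\times K}\L\,d\sigma_n\,d\sigma_n\:.
\]
The measures $\sigma_n$ have uniformly bounded total variation by Lemma~\ref{lemmaupper}, and since $K$ is a $\rho$-continuity set, vague convergence~\eqref{(vague convergence)} gives $\rho^{(n)}|_K \rightharpoonup \rho|_K$, whence $\sigma_n \rightharpoonup \mu$ weakly on $K$ and $\sigma_n\times\sigma_n \rightharpoonup \mu\times\mu$ weakly on $K\times K$ (by the product-convergence argument used in the proof of Theorem~\ref{thm1}, together with the uniform bound on $|\sigma_n|$). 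Moreover $\ell^{(n)}\to\ell$ uniformly on $K$: the family $(\ell^{(n)}|_K)_{n\in\N}$ is equicontinuous by Proposition~\ref{Proposition Equicontinuity Locally Compact} and converges pointwise to $\ell$ by Proposition~\ref{Proposition Pointwise Convergence Locally Compact}, so Arzel\`a--Ascoli upgrades this to uniform convergence. Passing to the limit, the first term tends to $2\int\ell\,d\mu$ (splitting $\ell^{(n)}=\ell+(\ell^{(n)}-\ell)$ and using the uniform bound on $|\sigma_n|$), the middle term is identically zero, and the last tends to $Q(\mu)$ by continuity and boundedness of $\L$ on $K\times K$. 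Hence $\Sact(\tilde\rho)-\Sact(\rho) = \lim_{n\to\infty}[\Sact_{K_n}(\nu^{(n)})-\Sact_{K_n}(\rho^{(n)})] \geq 0$, which is the assertion. The secondary technical points to watch are the exact mass matching (the reason for the proportional rescaling of $\tilde\rho|_K$) and the continuity-set choice of $K$ ensuring $\rho^{(n)}|_K \rightharpoonup \rho|_K$.
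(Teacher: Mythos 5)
Your proposal is correct, and its core mechanism is the same as the paper's: splice a mass-matched competitor out of $\tilde{\rho}$ near the perturbation and $\rho^{(n)}$ elsewhere, invoke minimality of $\rho^{(n)}$ on $K_n$ from Theorem~\ref{thm1} (via $2$-homogeneity), and pass to the limit using weak convergence on continuity sets. In fact your rescaling factor is literally the paper's: since $\tilde{\rho}(K)=\rho(K)$, one has $1-\delta_n/\tilde{\rho}(K) = \rho^{(n)}(K)/\tilde{\rho}(K)$, which is exactly the constant $c_n$ in the paper's measures $\tilde{\rho}_n$. Where you genuinely diverge is the limit passage. The paper approximates $K$ and $K'$ by open continuity sets $U, U', V, V'$, tracks several $\varepsilon$-error terms, and treats both the cross term and the quadratic term through weak convergence of normalized product measures. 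You instead choose a single compact $\rho$-continuity set $K\supset \supp \mu$, use the identity that the cross term equals $2\int_K (\ell^{(n)}+1)\,d\sigma_n$, and control it by uniform convergence $\ell^{(n)} \to \ell$ on $K$ — obtained from Propositions~\ref{Proposition Pointwise Convergence Locally Compact} and~\ref{Proposition Equicontinuity Locally Compact}, which the paper proves but deploys only for the Euler--Lagrange theorem, not in its proof of Theorem~\ref{Lemma compact minimizer}. This buys a cleaner bookkeeping: product-measure convergence is needed only for the quadratic term, and most of the $\tilde{\varepsilon}$-approximation disappears. Two steps you state without proof deserve a line each to be airtight: first, weak convergence of the restrictions $\rho^{(n)}|_K \rightharpoonup \rho|_K$ on the \emph{compact} set $K$ does not follow from vague convergence~\eqref{(vague convergence)} by testing alone (indicators of $K$ are not continuous); it needs the portmanteau-type argument combining $\rho(\partial K)=0$ with mass convergence $\rho^{(n)}(K)\to\rho(K)$, in the spirit of the paper's use of \cite[Theorem 30.2 and Corollary 30.9]{bauer}. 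Second, Billingsley's product theorem applies to positive (normalized) measures, so $\sigma_n \times \sigma_n \rightharpoonup \mu \times \mu$ requires the explicit bilinear expansion of $\sigma_n = a_n \tilde{\rho}|_K - \rho^{(n)}|_K$ into four positive product terms before citing it; your parenthetical gesture at this is correct but should be spelled out. With those two routine elaborations, and the uniform total-variation bound from Lemma~\ref{lemmaupper} that you already noted, your argument is complete.
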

\begin{proof}
	Let $\tilde{\rho}$ be a regular Borel measure on $\F$ with $K:= \supp (\tilde{\rho} - \rho) \subset \F$ compact such that 
	\[0 < \tilde{\rho}(K) = \rho(K) < \infty \:. \]
	Then ${(\tilde{\rho} - \rho)(\F) = 0}$, i.e.\ \eqref{totvol} is satisfied. 
	Since the Lagrangian is supposed to be of compact range, there exists a compact set $K' \subset \F$ such that $\L(x,y) = 0$ for all~$x \in K$ and~$y \in \F \setminus K'$. 

	By regularity of $\rho$ and~$\tilde{\rho}$, for arbitrary $\tilde{\varepsilon} > 0$ there exist $U \supset K$
	and~$U' \supset K'$ open with~$U' \supset U$ such that
	\begin{align*}
	\rho(U \setminus K) < \tilde{\varepsilon} \:, \qquad \rho(U' \setminus K') < \tilde{\varepsilon} \:, \\
	\tilde{\rho}(U \setminus K) < \tilde{\varepsilon} \:, \qquad \tilde{\rho}(U' \setminus K') < \tilde{\varepsilon} \:.
	\end{align*}
	In particular, $\rho|_{U'}$ is a non-negative finite Borel measure on the topological space $U'$. 
	Since $\F$ is metrizable, it is completely regular. As a consequence, the class $\Gamma_{\rho|_{U'}}$ of all Borel sets $E \subset U'$ with boundaries of $\rho$-measure 
	zero contains a base (consisting of open sets) of the topology of~$U'$ (see~\cite[Proposition 8.2.8]{bogachev2}). Since $K$ and $K'$ are compact, they can be covered by finitely many relatively compact, open sets $V_1, \ldots, V_N \subset U$ and $V_1', \ldots, V_{N'}' \subset U'$ in $\Gamma_{\rho|_{U'}}$ (cf.~\cite[Section 1.1]{engelking}). 
	By construction, the closure of the sets~$V := \bigcup_{i=1}^N V_i \subset U$ and $V' := \bigcup_{i=1}^{N'} V_i' \subset U'$, denoted by $\overline{V}$ and $\overline{V'}$, respectively, is compact. 
	By choosing $V \subset U$ suitably, we can arrange that $\L(x,y) = 0$ for all $x \in V$ and $y \notin V'$. Moreover, considering the restriction $\rho^{(n)}|_V$, for any $f \in C_c(V)$ 
	we obtain   
	\begin{align*}
	\lim_{n \to \infty} \int_V f \: d\rho^{(n)}|_V = \lim_{n \to \infty} \int_{\F} f \: d\rho^{(n)} \stackrel{\eqref{(vague convergence)}}{=} \int_{\F} f \: d\rho = \int_V f \: d\rho|_V \:,
	\end{align*}
	and similarly for $V'$. This proves vague convergence
	\begin{align*}
	\rho^{(n)}|_V \stackrel{v}{\to} \rho|_V \qquad \text{and} \qquad \rho^{(n)}|_{V'} \stackrel{v}{\to} \rho|_{V'} \:. 
	\end{align*}
	Since $\Gamma_{\rho|_{U'}}$ is a subalgebra in $\B(\F)$ (see \cite[Proposition 8.2.8]{bogachev2}), the sets~$V$ and $V'$ are also contained in $\Gamma_{\rho|_{U'}}$, implying that
	\[\rho(\partial V) = 0 = \rho(\partial V') \:. \]
	Making use of vague convergence \eqref{(vague convergence)} and applying \cite[Theorem 30.2]{bauer}, we obtain 
	\begin{align*}
	\rho(V) = \rho(\overline{V}) \ge \limsup_{n \to \infty} \rho^{(n)}(\overline{V}) \ge \limsup_{n \to \infty} \rho^{(n)}(V) \ge \liminf_{n \to \infty} \rho^{(n)}(V) \ge \rho(V) \:, 
	\end{align*}
	proving that 
	\begin{align}\label{(rhoV)}
	\rho(V) = \lim_{n \to \infty} \rho^{(n)}(V)
	\end{align} 
	(and similarly for $V'$). We point out that, for each $n \in \N$, the measure $\rho^{(n)}|_V/\rho^{(n)}(V)$ is normalized in the sense of \S \ref{S Existence compact}. Moreover, for any $f \in C_c(V)$ we are given
	\begin{align*}
	\lim_{n \to \infty} \int_V f \: d\rho^{(n)}|_V/\rho^{(n)}(V) = \int_V f \: d\rho|_V/\rho(V) \:. 
	\end{align*}
	That is, the sequence of normalized measures $(\rho^{(n)}|_V/\rho^{(n)}(V))_{n \in \N}$ converges vaguely to the normalized measure $\rho|_V/\rho(V)$. From \cite[Corollary 30.9]{bauer} we conclude that the sequence $(\rho^{(n)}|_V/\rho^{(n)}(V))_{n \in \N}$ converges \emph{weakly} to the normalized measure $\rho|_V/\rho(V)$. Since $V$ is separable (see e.g.~\cite[Corollary 3.5]{aliprantis}), we may apply \cite[Theorem 2.8]{billingsley} to obtain \emph{weak} convergence of the corresponding product measures, 
	\begin{align}\label{(weak convergence product)}
	\begin{split}
	\big(\rho^{(n)}|_V/\rho^{(n)}(V)\big) \otimes \big(\rho^{(n)}|_V/\rho^{(n)}(V)\big) &\rightharpoonup \big(\rho|_V/\rho(V)\big) \otimes \big(\rho|_V/\rho(V)\big) \:, \\
	\big(\rho^{(n)}|_V/\rho^{(n)}(V)\big) \otimes \big(\rho^{(n)}|_{V'}/\rho^{(n)}(V')\big) &\rightharpoonup \big(\rho|_V/\rho(V)\big) \otimes \big(\rho|_{V'}/\rho(V')\big) \:.
	\end{split}
	\end{align}
	
	We now proceed as follows. First of all, in accordance with \eqref{integrals} we have
	\begin{align*}
	\Sact(\tilde{\rho}) - \Sact(\rho) &= \int_{\F} d(\tilde{\rho} - \rho)(x) \int_{\F} d\rho(y) \:\L(x,y) + \int_{\F} d\rho(x) \int_{\F} d(\tilde{\rho} - \rho)(y) \:\L(x,y) \\
	&\qquad \quad + \int_{\F} d(\tilde{\rho}- \rho)(x) \int_{\F} d(\tilde{\rho}- \rho)(y) \:\L(x,y) \:.
	\end{align*}
	Making use of the symmetry of the Lagrangian and applying Fubini's theorem, we can write this expression as
	\begin{align*}
	\Sact(\tilde{\rho}) - \Sact(\rho) &= 2\int_{\F} d(\tilde{\rho} - \rho)(x) \int_{\F} d\rho(y) \:\L(x,y) \\
	&\qquad \quad + \int_{\F} d(\tilde{\rho}- \rho)(x) \int_{\F} d(\tilde{\rho}- \rho)(y) \:\L(x,y) \:,
	\end{align*}
	and the fact that $\L$ is of compact range yields
	\begin{align*}
	\Sact(\tilde{\rho}) - \Sact(\rho) &= 2\int_K d(\tilde{\rho}- \rho)(x) \int_{K'} d\rho(y) \:\L(x,y) \\
	&\qquad \quad + \int_K d(\tilde{\rho} - \rho)(x) \int_K d(\tilde{\rho}- \rho)(y) \:\L(x,y) \:.
	\end{align*}
	Making use of the fact that 
	$\rho(V \setminus K) < \tilde{\varepsilon}$, $\tilde{\rho}(V \setminus K) < \tilde{\varepsilon}$ and $\rho(V' \setminus K') < \tilde{\varepsilon}$, $\tilde{\rho}(V' \setminus K') < \tilde{\varepsilon}$, we can arrange that, up to an arbitrarily small error term $\varepsilon > 0$, 
	\begin{align*}
	\Sact(\tilde{\rho}) - \Sact(\rho) &\ge 2\int_V d\tilde{\rho}(x) \int_{V'} d\rho(y) \:\L(x,y) - 2\int_V d\rho(x) \int_{V'} d\rho(y) \:\L(x,y) \\
	&\qquad \quad + \int_V d\tilde{\rho}(x) \int_V d\tilde{\rho}(y) \:\L(x,y) - \int_V d\tilde{\rho}(x) \int_V d\rho(y) \:\L(x,y) \\
	&\qquad \quad - \int_V d\rho(x) \int_V d\tilde{\rho}(y) \:\L(x,y) + \int_V d\rho(x) \int_V d\rho(y) \:\L(x,y) - \varepsilon \:. 
	\end{align*} 
	Since $\L \in C_b(\F \times \F)$, 
	by applying weak convergence \eqref{(weak convergence product)} we obtain
	\begin{align*}
	&\Sact(\tilde{\rho}) - \Sact(\rho) \\
	&\qquad \ge \lim_{n \to \infty} \left[2\left(\int_V d\tilde{\rho}(x) \int_{V'} d\rho^{(n)}|_{V'}(y) - \int_V d\rho^{(n)}|_{V}(x) \int_{V'} d\rho^{(n)}|_{V'}(y)\right) \:\L(x,y) \right. \\
	&\qquad \qquad + \int_V d\tilde{\rho}(x) \int_V d\tilde{\rho}(y) \:\L(x,y) - \int_V d\tilde{\rho}(x) \int_V d\rho^{(n)}|_{V}(y) \:\L(x,y) \\
	&\qquad \qquad \left. - \int_V d\rho^{(n)}|_{V}(x) \int_V d\tilde{\rho}(y) \:\L(x,y) + \int_V d\rho^{(n)}|_{V}(x) \int_V d\rho^{(n)}|_{V}(y) \:\L(x,y) \right] - \varepsilon \:,
	\end{align*} 
	or equivalently,
	\begin{align*}
	\Sact(\tilde{\rho}) - \Sact(\rho) &\ge \lim_{n \to \infty} \left[ 2\int_V d\big(\tilde{\rho}- \rho^{(n)}\big)(x) \int_{V'} d\rho^{(n)}(y) \:\L(x,y) \right. \\
	&\qquad \quad \left. + \int_V d\big(\tilde{\rho} - \rho^{(n)}\big)(x) \int_V d\big(\tilde{\rho}- \rho^{(n)}\big)(y) \:\L(x,y) \right] - \varepsilon \:.
	\end{align*}
	
	Next,
	for any~$n \in \N$ we introduce the measures $\tilde{\rho}_n : \B(\F) \to [0, + \infty]$ by 
	\begin{align*}
	\tilde{\rho}_n := \left\{ \begin{array}{cl}
	c_n \:\tilde{\rho} \qquad &\text{on $V$} \\ [0.2em]
	\rho^{(n)} \qquad &\text{on $\F \setminus V$} 
	\end{array} \right. \qquad \text{with} \qquad c_n := \frac{\rho^{(n)}(V)}{\tilde{\rho}(V)} \quad \text{for all $n \in \N$} \:. 
	\end{align*} 
	Considering the compact exhaustion $(K_n)_{n \in\N}$, we thus have $\rho^{(n)}(K_n) = \tilde{\rho}^n(K_n)$ for every $n \in \N$. Furthermore,
	\begin{align}\label{(cn to one')}
	\lim_{n \to \infty} c_n = \lim_{n \to \infty} \frac{\rho^{(n)}(V)}{\tilde{\rho}(V)} \stackrel{\eqref{(rhoV)}}{=} \frac{\rho(V)}{\tilde{\rho}(V)} = \frac{\rho(V \setminus K) + \rho(K)}{\tilde{\rho}(V \setminus K) + \tilde{\rho}(K)} = 1
	\end{align}
	according to $\rho|_{\F \setminus K} = \tilde{\rho}|_{\F \setminus K}$ 
	and $\tilde{\rho}(K) = \rho(K)$. 
	In view of \eqref{(cn to one')} we may write 
	\begin{align*}
	\Sact(\tilde{\rho}) - \Sact(\rho) &\ge \lim_{n \to \infty} \left[ 2\int_V d\big(c_n \: \tilde{\rho}- \rho^{(n)}\big)(x) \int_{V'} d\rho^{(n)}(y) \:\L(x,y) \right. \\
	&\qquad \quad \left. + \int_V d\big(c_n \: \tilde{\rho} - \rho^{(n)}\big)(x) \int_V d\big(c_n \: \tilde{\rho}- \rho^{(n)}\big)(y) \:\L(x,y) \right] - \varepsilon \:.
	\end{align*}
	Since $\tilde{\rho}_n$ and $\rho^{(n)}$ coincide on $K_n \setminus V$ for all sufficiently large~$n \in \N$, and $\L(x,y) = 0$ for all $x \in V$ and~$y \notin V'$, the difference $\Sact(\tilde{\rho}) - \Sact(\rho)$ can finally be written as
	\begin{align*}
	\Sact(\tilde{\rho}) - \Sact(\rho) &= \lim_{n \to \infty} \left[2 \int_{K_n} d\big(\tilde{\rho}_n- \rho^{(n)}\big)(x) \int_{K_n} d\rho^{(n)}(y) \:\L(x,y) \right. \\
	&\qquad \qquad \left. + \int_{K_n} d\big(\tilde{\rho}_n - \rho^{(n)}\big)(x) \int_{K_n} d\big(\tilde{\rho}_n - \rho^{(n)}\big)(y) \:\L(x,y) \right] - \varepsilon \:. 
	\end{align*}
	Since $\rho^{(n)}$ is a minimizer on $K_n$ for every $n \in \N$, we are given
	\begin{align}\label{(ge')}
	\big(\Sact_{K_n}(\tilde{\rho}_{n}) - \Sact_{K_n}(\rho^{(n)})\big) \ge 0 \qquad \text{for all $n \in \N$} \:.
	\end{align}
	Taking the limit $n \to \infty$ on the left hand side of \eqref{(ge')}, one obtains exactly the above expression in square brackets, which therefore is bigger than or equal to zero. Since~$\varepsilon > 0$ is arbitrary, we thus arrive at 
	\begin{align*}
	\big(\Sact(\tilde{\rho}) - \Sact(\rho)\big) \ge 0 \:.
	\end{align*}
	Hence $\rho$ is a minimizer under variations of compact support.  
\end{proof}

\subsection{Existence of Minimizers under Variations of Finite Volume} \label{secvaryfinite}
In order to prove the existence of minimizers in the sense of Definition \ref{Definition minimizer}, we additionally assume that property (iv) in~\S\ref{seccvp} is satisfied, i.e. 
\begin{align*}
	\sup_{x \in \F} \int_{\F} \L(x,y) \:d\rho(y) < \infty \:.
\end{align*}
Under this additional assumption, the difference \eqref{integrals} is well-defined. Moreover, we obtain the following existence result.

\begin{Thm}\label{Theorem global}
	Let $\L : \F \times \F \to \R_0^+$ be continuous, bounded, and of compact range, and assume that condition~{\rm{(iv)}} in~\S\ref{seccvp} is satisfied. Then $\rho$ is a minimizer under variations of finite volume (see Definition~\ref{Definition minimizer}).
\end{Thm}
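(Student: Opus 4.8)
The plan is to approximate an arbitrary variation of finite volume by variations of compact support, to which Theorem~\ref{Lemma compact minimizer} already applies, and then to pass to the limit in the action difference. Let~$\tilde{\rho}$ be a variation of finite volume and set~$\mu := \tilde{\rho} - \rho$, so that by Definition~\ref{deffv} we have~$|\mu|(\F) < \infty$ and~$\mu(\F) = 0$. Writing~$\mu = \mu^+ - \mu^-$ for the Jordan decomposition, both~$\mu^+$ and~$\mu^-$ are finite measures with~$\mu^+(\F) = \mu^-(\F) =: m$ (we may assume~$m>0$, as otherwise~$\tilde{\rho}=\rho$ and there is nothing to prove). Using the symmetry of~$\L$, Fubini's theorem, the identity~$\mu(\F)=0$ and the function~$\ell$ from~\eqref{(Lagrange functional)}, the difference~\eqref{integrals} can be rewritten as
\[ \Sact(\tilde{\rho}) - \Sact(\rho) = 2 \int_\F \ell \: d\mu + \iint_{\F \times \F} \L(x,y)\: d\mu(x)\: d\mu(y) \:, \]
where all integrals are finite because~$\ell$ is bounded (this is where condition~(iv) enters, giving~$0 \le \ell+1 \le \sup_{x} \int_\F \L(x,y)\,d\rho(y) < \infty$) and~$\L$ is bounded by assumption.

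Since~$\F$ is Polish and~$\mu^\pm$ are finite, inner regularity yields compact sets~$C_j \subset \F$ with~$C_j \subset C_{j+1}$ and~$\mu^\pm(\F \setminus C_j) \to 0$ as~$j \to \infty$. Put~$a_j := \mu^+(C_j)$ and~$b_j := \mu^-(C_j)$, and choose scaling factors~$c_j := \min(1, b_j/a_j)$ and~$d_j := \min(1, a_j/b_j)$ in~$(0,1]$, so that~$c_j a_j = d_j b_j$ and~$c_j, d_j \to 1$. Then the compactly supported signed measure
\[ \nu_j := c_j\, \mu^+|_{C_j} - d_j\, \mu^-|_{C_j} \]
satisfies~$\nu_j(\F) = 0$, and the measure~$\tilde{\rho}_j := \rho + \nu_j$ is non-negative: on the set carrying~$\mu^+$ one has~$\tilde{\rho}_j \ge \rho \ge 0$, while on the set carrying~$\mu^-$ one has~$\tilde{\rho}_j = \rho - d_j\,\mu^-|_{C_j} \ge \rho - \mu^- = \tilde{\rho} \ge 0$. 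Hence each~$\tilde{\rho}_j$ is a variation of compact support in the sense of Definition~\ref{Def global}.

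By Theorem~\ref{Lemma compact minimizer}, for every~$j$ we have
\[ \Sact(\tilde{\rho}_j) - \Sact(\rho) = 2 \int_\F \ell \: d\nu_j + \iint_{\F \times \F} \L(x,y)\: d\nu_j(x)\: d\nu_j(y) \ge 0 \:. \]
Since~$c_j, d_j \to 1$ and~$\mu^\pm|_{C_j} \to \mu^\pm$ in total variation, we have~$\nu_j \to \mu$ in total variation. Boundedness of~$\ell$ then gives~$\int_\F \ell\,d\nu_j \to \int_\F \ell\,d\mu$, and boundedness of~$\L$ gives~$\iint \L\,d\nu_j\,d\nu_j \to \iint \L\,d\mu\,d\mu$ (write~$\mu = \nu_j + (\mu - \nu_j)$ and estimate the cross terms by~$\|\L\|_\infty$ times the relevant total variations, which tend to zero). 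Passing to the limit,
\[ \Sact(\tilde{\rho}) - \Sact(\rho) = \lim_{j \to \infty} \big( \Sact(\tilde{\rho}_j) - \Sact(\rho) \big) \ge 0 \:, \]
so~$\rho$ is a minimizer under variations of finite volume.

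The main obstacle is the construction in the second paragraph: the approximants must simultaneously be non-negative measures, have vanishing total mass, and have compactly supported difference from~$\rho$, all while converging to~$\mu$ in total variation; the asymmetric rescaling~$c_j, d_j$ is designed precisely to reconcile these requirements. A secondary point is that the convergence of the quadratic term genuinely needs global boundedness of~$\L$ --- compact range alone does not suffice here, since~$\mu$ is in general not compactly supported --- while the convergence of the linear term is exactly what condition~(iv) secures through the boundedness of~$\ell$.
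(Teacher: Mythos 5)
Your proof is correct, but it takes a genuinely different route from the paper's. The paper does not use Theorem~\ref{Lemma compact minimizer} as a black box: it re-runs the compact-exhaustion machinery inside the proof, approximating $B = \supp(\tilde{\rho}-\rho)$ by open sets $U$ and relatively compact sets $V, V'$ chosen so that $\rho^{(n)}(V) \to \rho(V)$, and introducing hybrid measures $\tilde{\rho}_n$ equal to $c_n\,\tilde{\rho}$ on $V$ and to $\rho^{(n)}$ off $V$, so that the minimality of $\rho^{(n)}$ on $K_n$ can be quoted directly, with error terms controlled through regularity of the measures and $c_n \to 1$. You instead truncate and rescale the variation itself: the asymmetric scaling $\nu_j = c_j\,\mu^+|_{C_j} - d_j\,\mu^-|_{C_j}$ is a clean device that simultaneously keeps $\tilde{\rho}_j = \rho + \nu_j$ non-negative (via the Hahn decomposition, $d_j \le 1$, and the fact that $\mu^- \le \rho$ on the negative set), gives $\nu_j(\F)=0$, and makes $\tilde{\rho}_j - \rho$ compactly supported; then Theorem~\ref{Lemma compact minimizer} applies verbatim and you pass to the limit in total variation, using boundedness of $\ell$ (which is exactly what condition~(iv) provides) for the linear term and boundedness of $\L$ for the quadratic term. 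What your approach buys is modularity and brevity, and it isolates precisely where each hypothesis enters; note that the same reduction would also streamline the entropy-decay case (Theorem~\ref{Theorem minimizer vanishing}), with Theorem~\ref{Lemma minimizer vanishing compact} in the role of Theorem~\ref{Lemma compact minimizer}. Two small points to tidy: define $d_j$ only for $j$ large enough that $a_j, b_j > 0$ (automatic since $a_j, b_j \to m > 0$; for small $j$ the approximant may be taken to be $\rho$ itself, where the inequality is trivial), and state explicitly that your rewriting of~\eqref{integrals} as $2\int_\F \ell\, d\mu + \iint_{\F \times \F} \L\, d\mu\, d\mu$ rests on Tonelli/Fubini, which is justified because $\rho$ is $\sigma$-finite ($\F$ being $\sigma$-compact and $\rho$ locally finite), while condition~(iv), $|\mu|(\F)<\infty$ and $\sup \L < \infty$ make all iterated integrals absolutely convergent.
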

\begin{proof}
Let $\tilde{\rho} \in \mathfrak{B}_{\F}$ be a positive Borel measure on $\F$ satisfying \eqref{totvol}, i.e.
\begin{align*}
|\tilde{\rho} - \rho|(\F) < \infty \qquad \text{and} \qquad (\tilde{\rho} - \rho) (\F) = 0 \:.
\end{align*}
Introducing $B := \supp (\tilde{\rho} - \rho)$, we thus are given $0 < \rho(B) = \tilde{\rho}(B) < \infty$. 
Without loss of generality, we may assume that $\rho(B) = \tilde{\rho}(B) > 1$ (otherwise we multiply the measures $\rho$ and $\tilde{\rho}$ by a suitable constant). 
By assuming that condition~{\rm{(iv)}} in~\S\ref{seccvp} holds we know that the difference~\eqref{integrals} is well-de\-fined, thus giving rise to 
\begin{align*}
\Sact(\tilde{\rho}) - \Sact(\rho) 
&= 2\int_{B} d(\tilde{\rho} - \rho)(x) \int_{\F} d\rho(y) \: \L(x,y) \\  
&\qquad \qquad 
+ \int_{B} d(\tilde{\rho}- \rho)(x) \int_{B} d(\tilde{\rho}- \rho)(y) \: \L(x,y) \:.
\end{align*}
Let $\tilde{\varepsilon} > 0$ be arbitrary. 
In analogy to the proof of Theorem \ref{Theorem global}, by regularity of $\rho$ and~$\tilde{\rho}$ we approximate~$B$ by open sets $U \supset B$ from outside such that 
\begin{align*}
\rho(U \setminus B) < \tilde{\varepsilon}/4 \:, \qquad \tilde{\rho}(U \setminus B) < \tilde{\varepsilon}/4 \:.
\end{align*}
By adding and subtracting the terms
\begin{align*}
2 \int_{U \setminus B} d(\tilde{\rho} - \rho)(x) \int_{\F} d\rho(y) \: \L(x,y) + \int_{U \setminus B} d(\tilde{\rho} - \rho)(x) \int_B d(\tilde{\rho} - \rho)(y) \: \L(x,y)
\end{align*}
as well as
\[ \int_{U} d(\tilde{\rho} - \rho)(x) \int_{U \setminus B} d(\tilde{\rho} - \rho)(y) \: \L(x,y) \:, \]
one can show that
\begin{align*}
&\big(\Sact(\tilde{\rho}) - \Sact(\rho) \big) = 2 \int_{U} d(\tilde{\rho} - \rho)(x) \int_{\F} d\rho(y) \: \L(x,y) \\
&\quad + \int_{U} d(\tilde{\rho} - \rho)(x) \int_U d(\tilde{\rho} - \rho)(y) \: \L(x,y) - \left\{ \int_{U} d(\tilde{\rho} - \rho)(x) \int_{U \setminus B} d(\tilde{\rho} - \rho)(y) \: \L(x,y) \right. \\
&\quad + \left. 2 \int_{U \setminus B} d(\tilde{\rho} - \rho)(x) \int_{\F} d\rho(y) \: \L(x,y) + \int_{U \setminus B} d(\tilde{\rho} - \rho)(x) \int_B d(\tilde{\rho} - \rho)(y) \: \L(x,y)\right\} \:. 
\end{align*}
Choosing~$U \supset B$ suitably, property (iv) implies (along with \eqref{(Lagrange functional)}) that the expression
\begin{align*}
\left|\int_{U \setminus B} d(\tilde{\rho} - \rho)(x) \int_{\F} d\rho(y) \: \L(x,y) \right| \le \underbrace{\left(\sup_{x \in \F} \ell(x) + 1\right)}_{\text{$< \infty$}} \underbrace{\big(\left|\tilde{\rho}(U \setminus B)\right| + \left|\rho(U \setminus B) \right|\big)}_{\text{$< \tilde{\varepsilon}/2$}}
\end{align*}
can be arranged to be arbitrarily small. Similarly, one can show that the other summands in the expression in curly brackets are arbitrarily small 
for a suitable choice of~$U \supset B$. We thus can arrange that
\begin{align*}
\Sact(\tilde{\rho}) - \Sact(\rho) &\ge  2 \int_{U} d(\tilde{\rho} - \rho)(x) \int_{\F} d\rho(y) \:\L(x,y) \\
& \qquad \qquad 
+ \int_{U} d(\tilde{\rho}- \rho)(x) \int_{U} d(\tilde{\rho}- \rho)(y) \:\L(x,y) - \varepsilon
\end{align*} 
for any given $\varepsilon > 0$. Next, by regularity of $\rho$ and $\tilde{\rho}$ we may approximate~$U$ by compact sets $V \subset U$ from inside 
such that
\begin{align*}
\rho(U \setminus V) < \tilde{\varepsilon}/4 \:, \qquad \tilde{\rho}(U \setminus V) < \tilde{\varepsilon}/4 \:.
\end{align*}
Proceeding in analogy to the proof of Theorem~\ref{Lemma compact minimizer}, we may assume that $V \subset U$ is an open, relatively compact set such that $\rho^{(n)}(V) \to \rho(V)$. 
Moreover, since $\L$ is supposed to be of compact range, there is some relatively compact, open subset~$V' \subset \F$ such that~$\L(x,y) = 0$ for all $x \in V$ and~$y \notin V'$ (and vice versa, for details see the proof of Theorem~\ref{Lemma compact minimizer}). Applying similar arguments as before to $V \subset U$, we arrive at 
\begin{align*}
\Sact(\tilde{\rho}) - \Sact(\rho) &\ge \left[ 2 \int_{V} d(\tilde{\rho} - \rho)(x) \int_{V'} d\rho(y) \:\L(x,y) \right. \\
& \qquad \qquad 
\left. + \int_{V} d(\tilde{\rho}- \rho)(x) \int_{V} d(\tilde{\rho}- \rho)(y) \:\L(x,y) \right] - 2 \varepsilon \:. 
\end{align*} 
Proceeding in analogy to the proof of Theorem~\ref{Lemma compact minimizer}, one can show that the term in square brackets is greater than or equal to zero, up to an arbitrarily small error term. Indeed, introducing the measures 
\begin{align*}
\tilde{\rho}_n := \left\{ \begin{array}{cl}
c_n \:\tilde{\rho} \qquad &\text{on $V$} \\ [0.2em]
\rho^{(n)} \qquad &\text{on $\F \setminus V$} 
\end{array} \right. \qquad \text{with} \qquad c_n := \frac{\rho^{(n)}(V)}{\tilde{\rho}(V)} \quad \text{for all $n \in \N$} \:,
\end{align*} 
we are given
\begin{align*}
\tilde{\rho}_n(V) = c_n \: \tilde{\rho}(V) = \frac{\rho^{(n)}(V)}{\tilde{\rho}(V)} \: \tilde{\rho}(V) = \rho^{(n)}(V) \:.
\end{align*}
Note that $\rho(V), \tilde{\rho}(V) \in (\rho(B)- \tilde{\varepsilon}/2, \rho(B)+ \tilde{\varepsilon}/2) \subset (1, \infty)$ by choosing $\tilde{\varepsilon} > 0$ suitably, implying that $1 < \rho(V) < \tilde{\rho}(V) + \tilde{\varepsilon}$. 
In view of~$\rho^{(n)}(V) \to \rho(V)$, for sufficiently large~$n \in \N$ we thus obtain 
\begin{align*}
c_n = \frac{\rho^{(n)}(V)}{\tilde{\rho}(V)} < \frac{\tilde{\rho}(V) + \tilde{\varepsilon}}{\tilde{\rho}(V)} = 1 + \frac{\tilde{\varepsilon}}{\tilde{\rho}(V)} \le 1 + \tilde{\varepsilon} 
\end{align*}
(and similarly, $c_n > 1- \tilde{\varepsilon}$ for sufficiently large~$n \in \N$). 
As a consequence, $$c_n^2 < (1 + \tilde{\varepsilon})^2 < 1 + 3 \tilde{\varepsilon} \:, $$ and henceforth the term in square brackets can be estimated by 
\begin{align*}
&\left(2 \int_{V} d\big(\tilde{\rho} - \rho^{(n)}\big)(x) \int_{V'} d\rho^{(n)}(y) + \int_{V} d\big(\tilde{\rho}- \rho^{(n)}\big)(x) \int_{V'} d\big(\tilde{\rho}- \rho^{(n)}\big)(y)\right) \:\L(x,y) \\
&\qquad \ge \left\{ 2 \int_{V} d\big(c_n \: \tilde{\rho} - \rho^{(n)}\big)(x) \int_{V'} d\rho^{(n)}(y) \:\L(x,y) \right. \\
&\qquad \qquad \left. + \int_{V} d\big(c_n \: \tilde{\rho} - \rho^{(n)}\big)(x) \int_{V'} d\big(c_n \: \tilde{\rho} - \rho^{(n)}\big)(y) \:\L(x,y) \right\} \\
&\qquad \qquad - 4 \tilde{\varepsilon} \: \int_{V} d\tilde{\rho}(x) \int_{V'} d\rho^{(n)}(y) \: \L(x,y) - 3 \tilde{\varepsilon} \: \int_V d\tilde{\rho} (x) \int_{V'} d\tilde{\rho} (y) \: \L(x,y) \:.
\end{align*}
Choosing $\tilde{\varepsilon} > 0$ suitably, one can arrange that 
\begin{align*}
\Sact(\tilde{\rho}) - \Sact(\rho) &\ge \lim_{n \to \infty} \left\{ 2 \int_{V} d\big(c_n \: \tilde{\rho} - \rho^{(n)}\big)(x) \int_{V'} d\rho^{(n)}(y) \:\L(x,y) \right. \\
&\qquad  \qquad+ \left. \int_{V} d\big(c_n \: \tilde{\rho}- \rho^{(n)}\big)(x) \int_{V'} d\big(c_n \: \tilde{\rho}- \rho^{(n)}\big)(y) \:\L(x,y) \right\} - 9 \varepsilon \:.
\end{align*}
Note that $V' \subset K_n$ for sufficiently large~$n \in \N$. Making use of the fact that~$\rho^{(n)}$ is a minimizer on $K_n$ and arguing as in the proof of Theorem~\ref{Lemma compact minimizer}, we conclude that the term in
curly brackets in the last inequality is greater than or equal to zero.

Finally, since $\varepsilon > 0$ was chosen arbitrarily, we arrive at
\begin{align*}
\Sact(\tilde{\rho}) - \Sact(\rho) \ge 0 \:,
\end{align*}
which proves the claim. 
\end{proof}

\section{Minimizers for Lagrangians Decaying in Entropy} \label{Chapter 5}
\subsection{Preliminaries}
The goal of this section is to deal with the question if it is possible to weaken the assumption that the Lagrangian~$\L$ is of compact range. To this end, we specialize the above setting as follows. As before, we let $\F$ be a second-countable, locally compact Hausdorff space. Then $\F$ is completely metrizable, and hence can be endowed with a Heine-Borel metric as mentioned in \S\ref{S Basic Definitions} such that 
$\F$ is proper, i.e.\ closed, bounded subsets in $\F$ are compact. As every relatively compact set is precompact, any bounded subset of $\F$ can be covered by a finite number of sets of diameter less than~$\delta > 0$ (cf.~
\cite[\S 3.16, \S 3.17]{dieudonne1}). Thus for any $r > 0$ and $x \in \F$,
the closed ball $\overline{B_r(x)}$ is compact, and hence can be covered by finitely many balls of radius~$\delta > 0$. We denote the smallest such number by~$E_x(r, \delta)$.\footnote{In coarse geometry, this number is called \emph{entropy} of a set (cf.~\cite[Definition 3.1]{roecoarse}). In the literature, however, also the logarithm of this number is referred to as \emph{$\delta$-entropy} (see~\cite[\S 3.16, Problem~4]{dieudonne1}).} 
In particular, for all~$r' < r$ the annuli~$\overline{B_r(x) \setminus B_{r'}(x)}$ can be covered by at most~$E_x(r, \delta)$ balls of radius~$\delta$. If $\rho$ is a uniform measure on $\F$, the number $E_x(r, \delta)$ can be determined more specifically (see~\cite[Example 3.13]{roecoarse}). 

In the following, we additionally assume that the Lagrangian decays in entropy, which is defined as follows. 

\begin{Def}\label{Definition vanishing}
Assume that $\F$ is endowed with an unbounded Heine-Borel metric~$d$.
The Lagrangian $\L \colon \F \times \F \to \R_0^+$ is said to {\bf{decay in entropy}}
if the following conditions are satisfied:
	\begin{enumerate}[leftmargin=2em]
		\item[\rm{(a)}] $c:=\inf_{{x} \in \F} \L({x},{x}) > 0$.
		\item[\rm{(b)}] There is a compact set $K \subset \F$ such that 
		\[ \delta := \inf_{{x} \in \F \setminus K} \sup \left\{ s \in \R \::\: \L({x},y) \ge \frac{c}{2} \quad \text{for all~$y \in B_s(x)$} \right\} > 0 \:. \]
		\item[\rm{(c)}] The Lagrangian has the following decay property:
		There is a monotonically decreasing, integrable function~$f \in L^1(\R^+, \R^+_0)$ such that
		\begin{align*}
		\L(x,y) \leq \frac{f \big(d(x,y) \big)}{C_x\big(d(x,y), \delta \big)} \qquad \text{for all~$x,y \in \F$ with~$x \neq y$} \:,
		\end{align*}
where
		\[C_x(r, \delta) := C \: E_x(r+2, \delta) \qquad \text{for all $r > 0$}\:, \]
		and the constant $C$ is given by
		\[ C := 1+\frac{2}{c} < \infty \:. \]
	\end{enumerate}
\end{Def}\noindent
In Definition~\ref{Definition vanishing}~(b) we may assume without loss of generality that $\delta = 1$ (otherwise we
rescale the metric suitably). Then  
\[\L(x,y) \ge \frac{\L(x,x)}{2} \qquad \text{for all $y \in B_1(x)$} \:. \]
Now let $(\rho^{(n)})_{n \in \N}$ be the sequence of measures given by \eqref{rhorund}, and let~$\rho$ be its vague limit constructed in~\eqref{rhoglobal}. Then by \eqref{(lemmaupper)}, for every $x \in \F$ we have $\rho(B_1(x)) \le C$ as well as~$\rho^{(n)} (B_1(x)) \le C$ for sufficiently large $n \in \N$.

Condition~(b) determines the behavior of the Lagrangian locally (more precisely, it gives a uniform bound
for the size of balls in which the Lagrangian is bounded from below).
In condition~(c), on the other hand, the function~$f$ characterizes the decay properties of the Lagrangian
at infinity. In particular, condition~(c) implies that for any~$\varepsilon > 0$ there is an integer~$N_0=N_0(\varepsilon) > 1$ such that
\beq \label{(less e)}
\sum_{k= n}^{\infty} f(k) \leq \int_{n-1}^\infty f(x)\: dx < \varepsilon/3 \qquad \text{for all $n \ge N_0$}\:. 
\eeq

Considering arbitrary $\varepsilon > 0$ and $x\in \F$, the Heine-Borel property of $\F$ ensures that the closed ball
\beq\label{Kxe}
K_{x, \varepsilon} := \overline{B_{N_0}(x)} 
\eeq
is compact. Since $\L$ decays in entropy, we thus obtain
\begin{align*}
&\int_{\F \setminus K_{x, \varepsilon}} \L(x,y) \:d\rho(y) = \sum_{k=N_0}^{\infty} \int_{{B_{k+1}(x)} \setminus \overline{B_{k}(x)}} \L(x,y) \:d\rho(y) \\
&\;\;\le \sum_{k=N_0}^{\infty} \sup_{y \in {B_{k+1}(x)} \setminus \overline{B_{k}(x)}} \!\!\! \L(x,y) \:\underbrace{\rho\left({B_{k+1}(x)} \setminus \overline{B_{k}(x)}\right)}_{\text{$\le C_x(k,1)$}} \le \sum_{k=N_0}^{\infty} \frac{f(k)}{C_x(k, 1)} \:C_x(k,1) < \varepsilon/3 \:,
\end{align*}
where in the last step we made use of \eqref{(less e)}. Analogously, 
\beq
\int_{\F \setminus K_{x, \varepsilon}} \L(\tilde{x}, y) \:d\rho(y) < \varepsilon/3 \label{eps3}
\eeq
for all $\tilde{x} \in B_{\delta}(x)$ and sufficiently small~$\delta > 0$. 
Under the assumption that the Lagrangian is continuous, the same is true for the measures~$\rho^{(n)}$:
Given~$x \in \F$, we introduce the compact sets~$A_k(x) \subset \F$ by   
\[A_k(x) := \overline{{B_{k+1}(x)} \setminus {B_{k}(x)}} \qquad \text{for all $k \ge N_0$} \]
and choose open sets $U_k(x) \supset {A_k}(x)$ with $U_k(x) \subset B_{k+2}(x) \setminus B_{k-1}(x)$ such that
\begin{align*}
	\rho \left(U_k \setminus \big({B_{k+1}(x)} \setminus \overline{B_{k}(x)}\big)\right) < 2^{-k-1} \: \varepsilon/3 \qquad \text{for all $k \ge N_0$} \:.
\end{align*}
Then in view of~\cite[Lemma 2.92]{aliprantis}, for every~$k \ge N_0$ there exists $\eta_k \in C_c(U_k(x); [0,1])$ such that~$\eta_k|_{{A_k(x)}} \equiv 1$, 
implying that~$\L(x, .) \: \eta_k \in C_c(U_k(x))$ for all~$k \ge N_0$. 
Hence for arbitrary~$m > 0$, by vague convergence~\eqref{(vague convergence)} there exists some integer~$N_k = N_k(m)$ such that
\begin{align}\label{(decay)}
	\left|\int_{U_k(x)} \L(x,y) \: \eta_k(y) \: d\rho^{(n)}(y) - \int_{U_k(x)} \L(x,y) \: \eta_k(y) \: d\rho(y)\right| \le 2^{-k}/m
\end{align}
for all $n \ge N_k$. For $k = N_0$, we choose a subsequence of $(\rho^{(n)})_{n \in \N}$ which we denote by~$(\rho^{(1,n_{\ell})})_{\ell \in \N}$ such that \eqref{(decay)} holds for every $m \in \N$ and all $\ell \ge N(m) := N_1(m)$. Next, for $k= N_0+1$ we choose a subsequence of $(\rho^{(1,n_{\ell})})_{\ell \in \N}$, denoted by $(\rho^{(2,n_{\ell})})_{\ell \in \N}$, such that \eqref{(decay)} is satisfied for every $m \in \N$ and all $\ell \ge N(m)$. Proceeding iteratively, we obtain a corresponding diagonal sequence $(\rho^{(\ell, n_{\ell})})_{{\ell} \in \N}$, which for convenience we shall again denote by $(\rho^{(n)})_{n \in \N}$. This sequence has the property that, for each $k \ge N_0$ and for every~$m \in \N$, there exists $N = N(m)$ such that \eqref{(decay)} is satisfied for all $n \ge N$. As a consequence, for arbitrary $\varepsilon > 0$ there exists $m \in \N$ such that $1/m < \varepsilon/6$, 
giving rise to
\begin{align*}
&\int_{\F \setminus K_{x, \varepsilon}} \L({x},y) \:d\rho^{(n)}(y) = \sum_{k=N_0}^{\infty} \int_{{B_{k+1}(x)} \setminus \overline{B_{k}(x)}} \L({x},y) \:d\rho^{(n)}(y) \\
&\;\le \sum_{k=N_0}^{\infty} \int_{U_k(x)} \L({x},y) \: \eta_k(y) \:d\rho^{(n)}(y) 
\stackrel{\eqref{(decay)}}{\le} \sum_{k=N_0}^{\infty} \left(\int_{U_k} \L({x},y) \: \eta_k(y) \:d\rho(y) + 2^{-k}/m \right) 
< \varepsilon
\end{align*}
for all $n \ge N(m)$, 
where we applied the geometric series. 

Next, we employ the fact that $\F$ is separable. Consequently, there exists a countable dense subset $\{x_i : i \in \N \}$. Then the above arguments show that 
\begin{align}\label{(neighborhood)}
	\int_{\F \setminus K_{x, \varepsilon}} \L({x},y) \:d\rho^{(n)}(y) < \varepsilon \qquad \text{for all $n \ge N(m)$} 
\end{align}
holds for $x = x_1$. By choosing a suitable subsequence $(\rho^{(2, n_{\ell})})_{\ell \in \N}$, we can arrange that~\eqref{(neighborhood)} holds for $x = x_{2}$. Proceeding iteratively, we end up with a corresponding diagonal sequence $(\rho^{(\ell, n_{\ell})})_{\ell \in \N}$, which for simplicity we again denote by $(\rho^{(n)})_{n \in \N}$, such that~\eqref{(neighborhood)} holds for $x = x_i$ for every $i \in \N$.
Combining the above inequalities, we obtain
\[ \int_{\F \setminus K_{x, \varepsilon}} \L(\tilde{x},y) \:d\rho^{(n)}(y) \leq 
\sum_{k=N_0}^{\infty} \left(\int_{U_k} \L(\tilde{x},y) \: \eta_k(y) \:d\rho(y) + 2^{-k}/m \right) \:. \]
Using~\eqref{eps3}, we conclude that
for all~$\tilde{x}$ in a small neighborhood of~$x$ and for sufficiently large $n \in \N$,
\begin{align}\label{(less epsilon)}
\int_{\F \setminus K_{x, \varepsilon}} \L(\tilde{x},y) \:d\rho^{(n)}(y) < \varepsilon \qquad \text{and} \qquad  \int_{\F \setminus K_{x, \varepsilon}} \L(\tilde{x},y) \:d\rho(y) < \varepsilon \:.
\end{align}
In particular, these inequalities imply that
the measure~$\rho$ constructed in~\eqref{rhoglobal} is non-zero
(for details see Lemma~\ref{Lemma nontrivial decay}).

\subsection{Preparatory Results}
Based on \eqref{(less epsilon)}, the goal of this subsection is to derive 
results similar to Lemma~\ref{Lemma Continuity of ell}, Proposition~\ref{Proposition Pointwise Convergence Locally Compact} and Proposition~\ref{Proposition Equicontinuity Locally Compact}.
We first prove continuity of the function $\ell$ (as defined in \eqref{(Lagrange functional)}). 

\begin{Prp}\label{Proposition continuity'}
	Assume that the Lagrangian $\L : \F \times \F \to \R_0^+$ is continuous and decays in entropy. 
	Then the function~$\ell : \F \to \R$ is continuous.
\end{Prp}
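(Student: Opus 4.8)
The plan is to follow the structure of the proof of Lemma~\ref{Lemma Continuity of ell}, but to replace the single compact set~$K'$ (which there arose from compact range) by the compact ball~$K_{x,\varepsilon}=\overline{B_{N_0}(x)}$ from~\eqref{Kxe}, and to absorb the integral over its complement into an arbitrarily small error by means of the tail estimate~\eqref{eps3}. Since~$\F$ is metrizable, it suffices to prove sequential continuity, so I would fix~$x\in\F$, an arbitrary sequence~$x_n\to x$, and an arbitrary~$\varepsilon>0$.

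First I would invoke~\eqref{eps3}: choosing~$N_0=N_0(\varepsilon)$ and hence the compact set~$K_{x,\varepsilon}=\overline{B_{N_0}(x)}$, together with a radius~$\delta>0$, we have
\[
\int_{\F\setminus K_{x,\varepsilon}}\L(\tilde x,y)\:d\rho(y)<\varepsilon/3
\qquad\text{for all }\tilde x\in B_\delta(x)\:.
\]
Because~$x_n\to x$, there is an integer~$N$ with~$x_n\in B_\delta(x)$ for all~$n\ge N$, so the tail integrals over~$\F\setminus K_{x,\varepsilon}$ are simultaneously bounded by~$\varepsilon/3$ for all such~$x_n$ as well as for~$x$ itself.

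Next I would treat the core integral over~$K_{x,\varepsilon}$. Choosing a relatively compact neighborhood~$U$ of~$x$ (which exists by local compactness of~$\F$), the product~$\overline{U}\times K_{x,\varepsilon}$ is compact, so continuity of~$\L$ makes~$\L$ bounded thereon. Hence the functions~$\L(x_n,\cdot)$, for~$n$ large enough that~$x_n\in U$, are uniformly bounded on~$K_{x,\varepsilon}$ and converge pointwise to~$\L(x,\cdot)$; since~$\rho(K_{x,\varepsilon})<\infty$ by local finiteness, Lebesgue's dominated convergence theorem yields
\[
\lim_{n\to\infty}\int_{K_{x,\varepsilon}}\L(x_n,y)\:d\rho(y)=\int_{K_{x,\varepsilon}}\L(x,y)\:d\rho(y)\:.
\]
Splitting~$\ell(x_n)-\ell(x)$ into this core difference plus the two tail contributions, I would conclude that for all sufficiently large~$n$,
\[
\big|\ell(x_n)-\ell(x)\big|
\le\Big|\int_{K_{x,\varepsilon}}\big(\L(x_n,y)-\L(x,y)\big)\:d\rho(y)\Big|+\frac{\varepsilon}{3}+\frac{\varepsilon}{3}<\varepsilon\:,
\]
the core term eventually being less than~$\varepsilon/3$. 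As~$\varepsilon>0$ and the sequence~$x_n\to x$ were arbitrary, this proves continuity of~$\ell$.

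The only genuine obstacle is that the tail must be controlled \emph{uniformly} for all~$x_n$ near~$x$, and not merely at the limit point~$x$; a pointwise tail bound at~$x$ alone would not suffice, since the perturbed integrands~$\L(x_n,\cdot)$ could in principle pick up mass far out. This uniformity is precisely what the decay-in-entropy estimate~\eqref{eps3} supplies, as it is phrased for all~$\tilde x\in B_\delta(x)$ at once. Once this uniform tail bound is granted, the remainder is the routine ``small tail plus dominated convergence on a compact core'' argument already used in Lemma~\ref{Lemma Continuity of ell}.
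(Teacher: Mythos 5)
Your proposal is correct and follows essentially the same route as the paper's proof: both split $\ell(x_n)-\ell(x)$ into the integral over the compact ball $K_{x,\varepsilon}$ from~\eqref{Kxe} plus a tail controlled uniformly for all points near~$x$ by the decay-in-entropy estimate~\eqref{eps3} (the paper cites it in the form~\eqref{(less epsilon)}), and your closing remark correctly identifies this uniform tail bound as the crucial point. The only (cosmetic) difference is that you handle the core term by dominated convergence, as in Lemma~\ref{Lemma Continuity of ell}, whereas the paper estimates it directly by $\sup_{y}\big|\L(x,y)-\L(x_n,y)\big|\,\rho(K_{x,\varepsilon})$ using continuity of~$\L$ on the compact product and $\rho(K_{x,\varepsilon})<\infty$.
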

\begin{proof}
	Let $x \in \F$ and $\varepsilon > 0$ be arbitrary, and let $(x_n)_{n \in \N}$ be an arbitrary sequence in~$\F$ converging to $x$. Introducing the associated compact set~$K_{x, \varepsilon}$ (as defined in \eqref{Kxe}), by continuity of $\L$ and $\rho(K_{x, \varepsilon}) < \infty$ we obtain
	\begin{align*}
	&\big|\ell(x) - \ell(x_n) \big| \le \bigg|\int_{K_{x, \varepsilon}} \!\!\! \big(\L(x,y) - \L(x_n, y) \big) \:d\rho(y)\bigg| + \bigg|\int_{\F \setminus K_{x, \varepsilon}} \!\!\! \big(\L(x,y) - \L(x_n, y) \big) \:d\rho(y) \bigg| \\
	&\qquad \le  \big|\L(x,y) - \L(x_n, y) \big| \: \rho(K_{x,\varepsilon}) + \left|\int_{\F \setminus K_{x, \varepsilon}} \!\!\! \big(\L(x,y) - \L(x_n, y) \big) \:d\rho(y)\right| \overset{\eqref{(less epsilon)}}{<} 3 \varepsilon 	
	\end{align*}
	for sufficiently large $n \in \N$. This proves continuity of $\ell$. 
\end{proof}

\begin{Prp}\label{Proposition pointwise'}
	Let $(\ell^{(n)})_{n \in \N}$ and $\ell$ be the functions defined in \eqref{lndef'} and \eqref{(Lagrange functional)}. Then $(\ell^{(n)})_{n \in \N}$ converges pointwise to $\ell$, i.e.
\[ 
	\lim_{n \to \infty} \ell^{(n)}(x) = \ell(x) \qquad \text{for all $x \in \F$} \:. \]
\end{Prp}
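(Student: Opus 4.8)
The plan is to follow the strategy of the compact-range case (Proposition~\ref{Proposition Pointwise Convergence Locally Compact}), but since $\L(x,\cdot)$ need no longer be compactly supported, the exact vanishing outside a compact set must be replaced by the uniform tail control \eqref{(less epsilon)} supplied by the decay-in-entropy hypothesis. First I would fix $x \in \F$ and $\varepsilon > 0$. As $\{x_i : i \in \N\}$ is dense in $\F$, choose $x_i$ so close to $x$ that $x$ lies in the small neighborhood for which \eqref{(less epsilon)} is valid; applying these estimates with $\tilde{x} = x$ furnishes a compact set $K := K_{x_i,\varepsilon}$ (of the form \eqref{Kxe}) and an integer $N_1$ such that
\[ \int_{\F \setminus K} \L(x,y)\: d\rho^{(n)}(y) < \varepsilon \qquad \text{and} \qquad \int_{\F \setminus K} \L(x,y)\: d\rho(y) < \varepsilon \]
for all $n \ge N_1$. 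These two inequalities simultaneously control the contributions of both integrands outside the common compact set $K$.

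Next I would treat the compact piece by truncation. Since $\F$ is locally compact Hausdorff, there is a relatively compact open set $W \supset K$ (for instance the enlarged ball $B_{N_0+1}(x_i)$, whose closure is compact by the Heine-Borel property), and by \cite[Lemma~2.92]{aliprantis} a cutoff $\varphi \in C_c(W;[0,1])$ with $\varphi \equiv 1$ on $K$. As $\L$ is continuous, the product $\L(x,\cdot)\,\varphi$ lies in $C_c(\F)$, so vague convergence \eqref{(vague convergence)} yields
\[ \lim_{n \to \infty} \int_{\F} \L(x,y)\,\varphi(y)\: d\rho^{(n)}(y) = \int_{\F} \L(x,y)\,\varphi(y)\: d\rho(y) \:. \]
Because $0 \le 1-\varphi \le 1$ and $1-\varphi$ vanishes on $K$, the errors incurred by inserting $\varphi$ are dominated by the tail integrals above, namely $\int_{\F} \L(x,y)\,(1-\varphi(y))\: d\rho^{(n)}(y) \le \int_{\F \setminus K} \L(x,y)\: d\rho^{(n)}(y) < \varepsilon$ for $n \ge N_1$, and likewise for $\rho$.

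Combining these three estimates through the triangle inequality, and noting that the constants $1$ in \eqref{lndef'} and \eqref{(Lagrange functional)} cancel, I would obtain
\[ \bigl| \ell^{(n)}(x) - \ell(x) \bigr| \le \biggl| \int_{\F} \L(x,y)\,\varphi(y)\: d\rho^{(n)}(y) - \int_{\F} \L(x,y)\,\varphi(y)\: d\rho(y) \biggr| + 2\varepsilon \]
for all $n \ge N_1$. Letting $n \to \infty$ makes the first term vanish by the vague-convergence limit, so $\limsup_{n \to \infty} \bigl| \ell^{(n)}(x) - \ell(x) \bigr| \le 2\varepsilon$; since $\varepsilon > 0$ is arbitrary, pointwise convergence follows. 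The main obstacle, relative to the compact-range case, is exactly the non-compact support of $\L(x,\cdot)$: the crux is the uniform tail bound \eqref{(less epsilon)}, and once it is available the remaining truncation-plus-vague-convergence step is routine.
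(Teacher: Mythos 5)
Your proposal is correct and follows essentially the same route as the paper's proof: both split $\ell^{(n)}(x)-\ell(x)$ into a tail outside the compact set \eqref{Kxe}, controlled by the decay estimates \eqref{(less epsilon)}, and a truncated part $\L(x,\cdot)\,\varphi \in C_c(\F)$ handled by vague convergence \eqref{(vague convergence)}, with the cutoff supplied by \cite[Lemma~2.92]{aliprantis}. Your bookkeeping via $1-\varphi$ is marginally tidier (giving $\limsup \le 2\varepsilon$ where the paper settles for $5\varepsilon$), and your detour through a dense point $x_i$ is unnecessary---\eqref{(less epsilon)} is established for arbitrary $x$ with $\tilde{x}=x$ admissible---but neither difference is substantive.
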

\begin{proof}
	Let $x \in \F$ and $\varepsilon > 0$ be arbitrary, and let~$K_{x,\varepsilon} \subset \F$ compact according to~\eqref{Kxe}. 
	Given~$U \supset K_{x, \varepsilon}$ open, there exists $\eta \in C_c(U;[0,1])$ such that $\eta|_{K_{x, \varepsilon}} \equiv 1$ (see e.g.~\cite[Lemma~2.92]{aliprantis}). 
	Thus in view of~\eqref{(less epsilon)}, for sufficiently large $n \in \N$ we obtain
	\begin{align*}
	\left| \int_{U \setminus K_{x,\varepsilon}} \L(x,y) \: \eta(y) \:d\rho^{(n)}(y) \right| \le \left| \int_{\F \setminus K_{x,\varepsilon}} \L(x,y) \: \:d\rho^{(n)}(y) \right| \stackrel{\eqref{(less epsilon)}}{<} \varepsilon \:.
	\end{align*}
	Making use of $\L(x, .) \: \eta \in C_c(\F)$ and vague convergence \eqref{(vague convergence)}, this yields 
	\begin{align*}
	&\left|\ell(x) - \ell^{(n)}(x) \right| = \left|\int_{\F} \L(x,y) \:d\rho(y) - \int_{\F} \L(x,y) \:d\rho^{(n)}(y) \right| \\
	&\qquad \le \left|\int_{K_{x,\varepsilon}} \L(x,y) \: \eta(y) \:d\big(\rho - \rho^{(n)}\big)(y) \right| + \underbrace{\left|\int_{\F \setminus K_{x,\varepsilon}} \L(x,y) \:d\big(\rho - \rho^{(n)}\big)(y)\right|}_{\text{$< 2\varepsilon$}} \\
	&\qquad < \left|\int_{U} \L(x,y) \: \eta(y) \:d\big(\rho - \rho^{(n)}\big)(y) \right| + \underbrace{\left|\int_{U \setminus K_{x,\varepsilon}} \L(x,y) \: \eta(y) \:d\big(\rho - \rho^{(n)}\big)(y) \right|}_{\text{$< 2 \varepsilon$}} + \: {2 \varepsilon} \\
	&\qquad < \varepsilon + 2 \varepsilon + 2 \varepsilon < 5 \varepsilon
	\end{align*}
	for sufficiently large $n \in \N$,
	which gives the claim. 
\end{proof}

\begin{Prp}\label{Proposition equicontinuity'}
	Assume that the Lagrangian $\L : \F \times \F \to \R$ is continuous and decays in entropy, and let~$K \subset \F$ be compact. Then for every $x \in K$ and every sequence~$\big(x^{(n)}\big)_{n \in \N}$ in~$K$ with $x^{(n)} \to x$ we have
\[ 
	\lim_{n \to \infty} \left|\ell^{(n)}|_K \big(x^{(n)}\big) - \ell^{(n)}|_K (x) \right| = 0 \:. \]
\end{Prp}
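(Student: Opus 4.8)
The plan is to reduce the claim to two ingredients already prepared in this subsection: the decay-in-entropy tail bound~\eqref{(less epsilon)} and the uniform mass bound of Lemma~\ref{lemmaupper}. Since $x^{(n)}, x \in K$, the restriction to $K$ is immaterial, so that $\ell^{(n)}|_K(x^{(n)}) = \ell^{(n)}(x^{(n)})$ and $\ell^{(n)}|_K(x) = \ell^{(n)}(x)$; hence by the definition~\eqref{lndef'},
\[ \ell^{(n)}(x^{(n)}) - \ell^{(n)}(x) = \int_\F \big(\L(x^{(n)},y) - \L(x,y)\big)\, d\rho^{(n)}(y) \:. \]
Fixing $\varepsilon > 0$ and the associated compact set $K_{x,\varepsilon} = \overline{B_{N_0}(x)}$ from~\eqref{Kxe}, I would split this integral over $K_{x,\varepsilon}$ and over its complement $\F \setminus K_{x,\varepsilon}$, and estimate each piece separately for all sufficiently large $n$.

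For the far part $\F \setminus K_{x,\varepsilon}$ I would invoke~\eqref{(less epsilon)}. Since $x^{(n)} \to x$, for all sufficiently large $n$ the points $x^{(n)}$ lie in the fixed neighborhood of $x$ on which~\eqref{(less epsilon)} holds. Applying~\eqref{(less epsilon)} with $\tilde{x} = x^{(n)}$ and with $\tilde{x} = x$ then gives
\[ \int_{\F \setminus K_{x,\varepsilon}} \L(x^{(n)},y)\, d\rho^{(n)}(y) < \varepsilon \qquad \text{and} \qquad \int_{\F \setminus K_{x,\varepsilon}} \L(x,y)\, d\rho^{(n)}(y) < \varepsilon \:, \]
so that the far part contributes at most $2\varepsilon$ to the integral above.

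For the near part $K_{x,\varepsilon}$ I would combine continuity of $\L$ with the uniform mass bound. Choosing a fixed compact neighborhood $\overline{U}$ of $x$, say $\overline{B_1(x)}$, the Lagrangian is uniformly continuous on the compact set $\overline{U} \times K_{x,\varepsilon}$; since the second argument is confined to $K_{x,\varepsilon}$ while $d(x^{(n)},x) \to 0$, this yields $\sup_{y \in K_{x,\varepsilon}} |\L(x^{(n)},y) - \L(x,y)| \to 0$. Together with $\rho^{(n)}(K_{x,\varepsilon}) \le C_{K_{x,\varepsilon}}$ from Lemma~\ref{lemmaupper}, the near part is bounded by $C_{K_{x,\varepsilon}} \sup_{y \in K_{x,\varepsilon}} |\L(x^{(n)},y) - \L(x,y)|$, which tends to $0$ as $n \to \infty$.

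Putting the two estimates together gives $\limsup_{n \to \infty} |\ell^{(n)}(x^{(n)}) - \ell^{(n)}(x)| \le 2\varepsilon$, and since $\varepsilon > 0$ is arbitrary the limit vanishes, as claimed. The only genuinely delicate point is the far-part estimate: one must ensure that~\eqref{(less epsilon)} applies to the \emph{moving} evaluation points $x^{(n)}$ with a threshold in $n$ that is uniform over a neighborhood of $x$, rather than depending on the individual $\tilde{x}$. This uniformity is precisely what the diagonal-subsequence construction preceding~\eqref{(less epsilon)} secures, so the argument closes. The near part, by contrast, is a routine uniform-continuity estimate entirely parallel to the compact-range case of Proposition~\ref{Proposition Equicontinuity Locally Compact}, with the uniform bound on the measures again supplied by Lemma~\ref{lemmaupper}.
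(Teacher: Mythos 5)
Your proposal is correct and takes essentially the same route as the paper's proof: the paper likewise splits the integral at the compact set~$K_{x,\varepsilon}$ from~\eqref{Kxe}, controls the tail via~\eqref{(less epsilon)} (whose stated uniformity in a neighborhood of~$x$ covers the moving points~$x^{(n)}$, exactly as you note) and the near part via uniform continuity of~$\L$ on~$K \times K_{x,\varepsilon}$ combined with the bound~$\rho^{(n)}(K_{x,\varepsilon}) \le C(x,\varepsilon)$ from Lemma~\ref{lemmaupper}. The only cosmetic difference is that the paper first establishes the estimate uniformly for~$z \in B_\delta(x) \cap K$ and then specializes to the sequence, whereas you estimate along the sequence directly.
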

\begin{proof}
	Let $K \subset \F$ be a compact subset. For any $x \in K$ and $\varepsilon >0$, there is a compact subset~$K_{x, \varepsilon} \subset \F$ (defined by~\eqref{Kxe}) such that \eqref{(less epsilon)} is satisfied. Let~$C(x, \varepsilon) > 0$ be the positive constant according to Lemma~\ref{lemmaupper} such that $\rho^{(n)}(K_{x,\varepsilon}) \le C(x, \varepsilon)$ for all~$n \in \N$. Since~$\L$ is continuous and~$K \times K_{x, \varepsilon}$ is compact, the mapping
	\begin{align*}
	\L|_{K \times K_{x,\varepsilon}} : K \times K_{x, \varepsilon} \to \R
	\end{align*}
	is uniformly continuous. Hence we may choose $\delta > 0$ such that
	\[\left| \L|_{K \times K_{x, \varepsilon}} (x, \cdot) - \L|_{K \times K_{x,\varepsilon}}(z, \cdot) \right| < \frac{\varepsilon}{2 C(x, \varepsilon)} \qquad \text{for all $z \in B_{\delta}(x) \cap K$} \:. \]
	In view of \eqref{(less epsilon)}, for sufficiently large $n \in \N$ we thus obtain
	\begin{align*}
	& \sup_{z \in B_{\delta}(x) \cap K} \left|\ell^{(n)}|_K(x) - \ell^{(n)}|_K (z) \right| =\!\! \sup_{z \in B_{\delta}(x) \cap K} \left|\int_{\F} \big(\L|_{K \times \F}(x,y) - \L|_{K \times \F}(z,y) \big) \:d\rho^{(n)}(y) \right| \\
	&\qquad \le \sup_{z \in B_{\delta}(x) \cap K} \left|\int_{\F \setminus K_{x, \varepsilon}} \big(\L|_{K \times \F}(x,y) - \L|_{K \times \F}(z,y) \big) \:d\rho^{(n)}(y) \right| \\
	&\qquad \qquad + \sup_{z \in B_{\delta}(x) \cap K} \left|\int_{K_{x, \varepsilon}} \big(\L|_{K \times \F}(x,y) - \L|_{K \times \F}(z,y) \big) \:d\rho^{(n)}(y) \right| < \frac{\varepsilon}{2} + \frac{\varepsilon}{2} = \varepsilon \:.
	\end{align*}
	Considering a sequence $\big(x^{(n)}\big)_{n \in \N}$ in $K$ with $x^{(n)} \to x$, we have
	\begin{align*}
	\lim_{n \to \infty} \left|\ell^{(n)}|_K \big(x^{(n)}\big) - \ell^{(n)}|_K (x) \right| = 0 \:,
	\end{align*}
	which completes the proof. 
\end{proof}

\subsection{The Euler-Lagrange Equations}
Now we are able to prove the EL equations in the case that $\L$ decays in entropy
(see Definition~\ref{Definition vanishing}). 

\begin{Thm}\label{Theorem EL vanishing}
	Assume that $\L$ is continuous and decays in entropy. 
	Then the measure~$\rho$ constructed in \eqref{rhoglobal} satisfies the Euler-Lagrange equations
	\begin{align*}
	\ell|_{\supp \rho} \equiv \inf_{x \in \F} \ell(x) = 0 \:,
	\end{align*}
	where $\ell \in C(\F)$ is defined by \eqref{(Lagrange functional)}. 
\end{Thm}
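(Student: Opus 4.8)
The plan is to follow verbatim the strategy used in the proof of Theorem~\ref{Theorem EL equations locally compact}, now drawing on the decay-in-entropy analogues of the preparatory results established in the previous subsection in place of the compact-range versions. The three ingredients I would invoke are pointwise convergence $\ell^{(n)} \to \ell$ (Proposition~\ref{Proposition pointwise'}), the equicontinuity-type estimate on compact sets (Proposition~\ref{Proposition equicontinuity'}), and the support-approximation statement (Lemma~\ref{Lemma support}). The latter applies in the present situation because $\rho^{(n)} \to \rho$ vaguely with $\rho \neq 0$, the non-triviality being guaranteed by Lemma~\ref{Lemma nontrivial decay}. Continuity of $\ell$ is already provided by Proposition~\ref{Proposition continuity'}.

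First I would establish the lower bound $\ell(x) \ge 0$ for every $x \in \F$. Since each $\rho^{(n)}$ is a minimizer on $K_n$, the EL equations \eqref{ELn'} give $\ell^{(n)}|_{K_n} \ge \inf_{K_n}\ell^{(n)} = 0$; as any fixed $x$ lies in $K_n$ for all sufficiently large $n$, passing to the limit by means of Proposition~\ref{Proposition pointwise'} yields $\ell(x) = \lim_{n\to\infty}\ell^{(n)}(x) \ge 0$.

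Next, for $x \in \supp\rho$ I would apply Lemma~\ref{Lemma support} to obtain a sequence $x^{(n)} \to x$ with $x^{(n)} \in \supp\rho^{(n)}$ for all $n$; by the EL equations \eqref{ELn'} this forces $\ell^{(n)}\big(x^{(n)}\big) = 0$. Choosing a compact neighborhood $K$ of $x$ (which exists by local compactness of $\F$), we have $x^{(n)} \in K$ for all large $n$, so Proposition~\ref{Proposition equicontinuity'} yields $\big|\ell^{(n)}\big(x^{(n)}\big) - \ell^{(n)}(x)\big| \to 0$. Combining this with the pointwise convergence $\ell^{(n)}(x) \to \ell(x)$ through the triangle inequality
\[ \big|\ell^{(n)}\big(x^{(n)}\big) - \ell(x)\big| \le \big|\ell^{(n)}\big(x^{(n)}\big) - \ell^{(n)}(x)\big| + \big|\ell^{(n)}(x) - \ell(x)\big| \longrightarrow 0, \]
and using $\ell^{(n)}\big(x^{(n)}\big) = 0$, I would conclude $\ell(x) = 0$ for every $x \in \supp\rho$. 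Since $\supp\rho \neq \varnothing$, this together with the bound $\ell \ge 0$ from the previous step shows that $\ell$ attains its infimum $0$ precisely on the support, which is the asserted EL equation.

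The genuinely technical work—controlling the tail integrals $\int_{\F \setminus K_{x,\varepsilon}} \L(\tilde{x},y)\,d\rho^{(n)}(y)$ uniformly in $n$ via the decay estimate \eqref{(less epsilon)}—has already been carried out inside Propositions~\ref{Proposition pointwise'} and~\ref{Proposition equicontinuity'}, so I expect the proof of the theorem itself to be short and structurally identical to the compact-range case. The only point demanding slight care is that Proposition~\ref{Proposition equicontinuity'} is phrased as a sequential limit statement rather than as full equicontinuity of the family $\{\ell^{(n)}|_K\}$; but this sequential form is exactly what the triangle-inequality step above requires, so no strengthening is needed.
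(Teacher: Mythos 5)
Your proposal is correct and coincides with the paper's own proof, which simply says to proceed in analogy to Theorem~\ref{Theorem EL equations locally compact} using Propositions~\ref{Proposition continuity'}, \ref{Proposition pointwise'} and~\ref{Proposition equicontinuity'}; you have merely written out the details (the lower bound via \eqref{ELn'} and pointwise convergence, then Lemma~\ref{Lemma support} combined with the sequential equicontinuity estimate and the triangle inequality on $\supp\rho$) that the paper leaves implicit. Your closing remark is also apt: the sequential form of Proposition~\ref{Proposition equicontinuity'} is exactly what the argument needs, so no strengthening to full equicontinuity is required.
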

\begin{proof}
	Proceed in analogy to the proof of Theorem \ref{Theorem EL equations locally compact}, and make use of Proposition~\ref{Proposition continuity'}, Proposition \ref{Proposition pointwise'}, and Proposition \ref{Proposition equicontinuity'}. 
\end{proof}

We now generalize Lemma~\ref{Lemma conditions}.
\begin{Lemma}
	Assume that $\L : \F \times \F \to \R$ is continuous and decays in entropy. 
	Under the additional assumptions~{\rm{(a)}--\rm{(c)}} in Lemma \ref{Lemma conditions} (with~$K_x$ replaced by~$K_{x,\varepsilon}$), 
	the measure~$\rho$ constructed in~\eqref{rhoglobal} satisfies property~{\rm{(iv)}} in~\S\ref{seccvp}.	
\end{Lemma}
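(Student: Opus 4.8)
The plan is to mirror the proof of Lemma~\ref{Lemma conditions}, with one essential modification: since $\L$ no longer has compact range, the integral $\int_\F \L(x,y)\,d\rho(y)$ does not reduce to an integral over a compact set, so I would instead split off a tail and control it by the decay-in-entropy estimate already established in~\eqref{(less epsilon)}. Fix $\varepsilon>0$ once and for all (say $\varepsilon=1$), and for each $x\in\F$ let $K_{x,\varepsilon}=\overline{B_{N_0}(x)}$ be the compact set from~\eqref{Kxe}, where $N_0=N_0(\varepsilon)$ is chosen as in~\eqref{(less e)}. The crucial structural point is that the radius $N_0$ is a single number not depending on the base point $x$, so the sets $K_{x,\varepsilon}$ are ``uniformly sized'' as $x$ ranges over~$\F$. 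I would then write
\[ \int_\F \L(x,y)\,d\rho(y) = \int_{K_{x,\varepsilon}} \L(x,y)\,d\rho(y) + \int_{\F \setminus K_{x,\varepsilon}} \L(x,y)\,d\rho(y) \]
and estimate the two contributions separately.

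For the tail, the computation preceding~\eqref{(less epsilon)} already yields $\int_{\F \setminus K_{x,\varepsilon}} \L(x,y)\,d\rho(y) < \varepsilon$; the key observation is that this holds for \emph{every} $x\in\F$ with the \emph{same} $\varepsilon$, since it rests only on the decay estimate in Definition~\ref{Definition vanishing}~(c) together with the annulus bound $\rho\big(B_{k+1}(x)\setminus \overline{B_k(x)}\big)\le C_x(k,1)$. Hence the tail contributes a uniformly small error. For the main term I would invoke assumptions~(a)--(c) of Lemma~\ref{Lemma conditions} with $K_x$ replaced by $K_{x,\varepsilon}$: by~(c) there is an integer $N$, independent of $x$, such that $K_{x,\varepsilon}$ is covered by open sets $U_1,\ldots,U_N$ on each of which $\L(x,\cdot)>c/2$. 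Exactly as in Lemma~\ref{Lemma conditions}, inequality~\eqref{(lemmaupper)}, passed to the vague limit $\rho$, gives $\rho(U_i)\le 2/c$, so that $\rho(K_{x,\varepsilon})\le 2N/c$. Combining this with the global bound~(b), $\sup_{x,y}\L(x,y)\le\mathscr{C}$, I obtain
\[ \int_{K_{x,\varepsilon}} \L(x,y)\,d\rho(y) \le \mathscr{C}\,\rho(K_{x,\varepsilon}) \le \frac{2\mathscr{C}N}{c} \:. \]

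Putting the two estimates together yields $\int_\F \L(x,y)\,d\rho(y) \le 2\mathscr{C}N/c + \varepsilon$ for every $x\in\F$. This single inequality delivers both halves of property~(iv): the finiteness of the right-hand side shows that $\L(x,\cdot)$ is $\rho$-integrable for each fixed $x$, and the fact that the bound does not depend on $x$ gives the uniform estimate $\sup_{x\in\F}\int_\F \L(x,y)\,d\rho(y)<\infty$.

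The main obstacle is the careful bookkeeping of uniformity in $x$. One must verify that neither the covering number $N$ from~(c) nor the tail radius $N_0$ depends on the base point, for otherwise the final bound would fail to be uniform and the supremum could diverge. Both facts are secured by the hypotheses --- $N$ by the explicit wording of condition~(c), and $N_0$ by the choice in~\eqref{(less e)}, which depends only on $\varepsilon$ and the integrable envelope $f$ --- so the argument closes, but checking these uniformity claims (and that the annulus/covering geometry is genuinely translation-uniform) is precisely the point that requires care.
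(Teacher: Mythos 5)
Your proposal is correct and follows essentially the same route as the paper's proof: split the integral at the compact set $K_{x,\varepsilon}$ from~\eqref{Kxe}, bound the tail uniformly in~$x$ by the decay-in-entropy computation leading to~\eqref{(less epsilon)} (where $N_0$ indeed depends only on $\varepsilon$ and the envelope~$f$), and bound the near part by $\mathscr{C}\,\rho(K_{x,\varepsilon}) \le 2\mathscr{C}N/c$ using conditions (a)--(c) and~\eqref{(lemmaupper)} exactly as in Lemma~\ref{Lemma conditions}. The only cosmetic difference is that the paper handles $x \in \supp\rho$ separately, where the Euler--Lagrange equations of Theorem~\ref{Theorem EL vanishing} give $\int_\F \L(x,y)\,d\rho(y) = 1$ directly, whereas you apply the covering bound uniformly to all $x \in \F$ --- a harmless streamlining that yields the same uniform estimate.
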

\begin{proof}
Since continuity of~$\L$ implies that
	\begin{align*}
	\int_{\F} \L(x,y) \: d\rho(y) = \int_{\F \setminus K_{x, \varepsilon}} \L(x,y) \: d\rho(y) + \int_{K_{x, \varepsilon}} \L(x,y) \: d\rho(y) < \infty \:,
	\end{align*}
	the function $\L(x, \cdot) : \F \to \R$ is $\rho$-integrable for every $x \in \F$. It remains to show that
	\begin{align*}
	\sup_{x \in \F} \int_{\F} \L(x,y) \:d\rho(y) < \infty \:.
	\end{align*}  
	For all $x \in M := \supp \rho$ this result follows from Theorem \ref{Theorem EL vanishing}. Whenever $x \in \F \setminus M$, similar as in the proof of Lemma \ref{Lemma conditions} we obtain
	\begin{align*}
	\sup_{x \in \F \setminus M} \int_{\F} \L(x,y) \: d\rho(y) &\le \sup_{x \in \F \setminus M} \left(\int_{\F \setminus K_{x, \varepsilon}} \L(x,y) \: d\rho(y) + \int_{K_{x, \varepsilon}} \L(x,y) \: d\rho(y)\right) \\
	&\le \varepsilon + \sup_{x \in \F \setminus M} \sup_{y \in K_{x, \varepsilon}}\L(x,y) \: \rho(K_{x, \varepsilon}) < \infty \phantom{\int} 
	\end{align*}
	for some $\varepsilon > 0$ and the corresponding compact subset $K_{x, \varepsilon} \subset \F$ (see \eqref{Kxe}).
\end{proof}

Moreover, under the additional assumptions (b) and (c) in Lemma \ref{Lemma conditions} the following statement is true:

\begin{Corollary}
	Assume that $\L$ is continuous and decays in entropy. 
	Under the additional assumptions~{\rm{(b)} and~\rm{(c)}} in Lemma \ref{Lemma conditions} (again with~$K_x$ replaced by~$K_{x,\varepsilon}$), for all $\varepsilon \in (0,1)$ there is~$\gamma > 0$ such that~$\rho(K_{x, \varepsilon}) \ge \gamma$ for all~$x \in \F$ (where $K_{x, \varepsilon}$ is given by~\eqref{Kxe}).  
\end{Corollary}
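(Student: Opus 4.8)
The plan is to combine the Euler--Lagrange equations established in Theorem~\ref{Theorem EL vanishing} with the uniform tail estimate \eqref{(less epsilon)} coming from the decay in entropy and the global upper bound on $\L$ furnished by assumption~(b) of Lemma~\ref{Lemma conditions}. The guiding observation is that the EL equations force the total Lagrangian mass $\int_\F \L(x,\cdot)\,d\rho$ to be at least one at every point $x \in \F$, while the entropy decay shows that almost all of this mass already sits inside the fixed-radius ball $K_{x,\varepsilon}$; boundedness of $\L$ then converts this lower bound on mass into a lower bound on $\rho(K_{x,\varepsilon})$ that does not depend on $x$.

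More precisely, I would first invoke Theorem~\ref{Theorem EL vanishing}: since $\L$ is continuous and decays in entropy, the function $\ell$ of \eqref{(Lagrange functional)} satisfies $\inf_\F \ell = 0$, so that
\[
\int_\F \L(x,y)\,d\rho(y) = \ell(x) + 1 \ge 1 \qquad \text{for every } x \in \F .
\]
Here assumptions~(b) and~(c) guarantee, via the preceding lemma, that this integral is finite and $\ell$ is a genuine real-valued function, so the splitting below is legitimate. Next, for fixed $\varepsilon \in (0,1)$ and $x \in \F$ I would split the integral along the compact set $K_{x,\varepsilon} = \overline{B_{N_0}(x)}$ of \eqref{Kxe} and control the tail by the $\rho$-part of \eqref{(less epsilon)}, obtaining
\[
\int_{K_{x,\varepsilon}} \L(x,y)\,d\rho(y) = \int_\F \L(x,y)\,d\rho(y) - \int_{\F \setminus K_{x,\varepsilon}} \L(x,y)\,d\rho(y) \ge 1 - \varepsilon .
\]

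Finally, I would apply assumption~(b), $\L(x,y) \le \mathscr{C}$, to bound the left-hand side from above by $\mathscr{C}\,\rho(K_{x,\varepsilon})$, giving
\[
\mathscr{C}\,\rho(K_{x,\varepsilon}) \ge \int_{K_{x,\varepsilon}} \L(x,y)\,d\rho(y) \ge 1 - \varepsilon ,
\]
and hence $\rho(K_{x,\varepsilon}) \ge (1-\varepsilon)/\mathscr{C} =: \gamma$. Since $\varepsilon \in (0,1)$, this $\gamma$ is strictly positive, and --- crucially --- neither the tail estimate \eqref{(less epsilon)} nor the constant $\mathscr{C}$ depends on $x$, so the inequality $\rho(K_{x,\varepsilon}) \ge \gamma$ holds uniformly in $x \in \F$, as claimed.

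The only genuine subtlety, and hence the step I would treat most carefully, is the uniformity of the tail estimate \eqref{(less epsilon)} over all $x \in \F$: it is precisely the decay-in-entropy hypothesis (through the annulus bound $\rho(B_{k+1}(x)\setminus\overline{B_k(x)}) \le C_x(k,1)$ and the summability \eqref{(less e)}) that makes $N_0 = N_0(\varepsilon)$, and therefore the radius of $K_{x,\varepsilon}$, independent of the base point $x$. Everything else is a short, direct estimate.
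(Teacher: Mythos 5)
Your proposal is correct and follows essentially the same route as the paper: the EL equations from Theorem~\ref{Theorem EL vanishing} give $\int_\F \L(x,y)\,d\rho(y) \ge 1$, the entropy-decay tail bound \eqref{(less epsilon)} confines all but $\varepsilon$ of this mass to $K_{x,\varepsilon}$, and the bound $\L \le \mathscr{C}$ from assumption~(b) yields $\rho(K_{x,\varepsilon}) \ge (1-\varepsilon)/\mathscr{C} =: \gamma$ uniformly in $x$. Your closing remark correctly pinpoints that the uniformity rests on $N_0 = N_0(\varepsilon)$ being independent of the base point, which is exactly what the cancellation of $C_x(k,1)$ in the annulus estimate provides.
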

\begin{proof}
	Consider an arbitrary $x \in \F$. In view of Theorem \ref{Theorem EL vanishing} we have 
	\begin{align*}
	1 \le \int_{\F \setminus K_{x, \varepsilon}} \L(x,y) \: d\rho(y) + \int_{K_{x, \varepsilon}} \L(x,y) \: d\rho(y) < \varepsilon + \sup_{y \in K_{x, \varepsilon}}\L(x,y) \: \rho(K_{x, \varepsilon})
	\end{align*}
	for some $\varepsilon > 0$ and the corresponding compact set $K_{x, \varepsilon} \subset \F$.
	Choosing $\varepsilon \in (0,1)$, we obtain 
	\begin{align*}
	\rho(K_{x, \varepsilon}) \ge \frac{1 - \varepsilon}{\mathscr{C}} =: \gamma \:,
	\end{align*} 
	which completes the proof.
\end{proof}

\subsection{Existence of Minimizers under Variations of Compact Support}
In the last two subsections we finally return to the question if the measure $\rho$ is a minimizer of the causal variational principle. 
In preparation, we deal with the case of minimizers under variations of compact support (see Definition~\ref{Def global}).

\begin{Thm}
	\label{Lemma minimizer vanishing compact}
	Assume that $\L : \F \times \F \to \R_0^+$ is continuous and decays in entropy (see Definition~\ref{Definition vanishing}). 
	Then $\rho$ is a minimizer under variations of compact support.
\end{Thm}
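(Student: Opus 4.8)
The plan is to mirror the proof of Theorem~\ref{Lemma compact minimizer} step by step, replacing every use of the compact-range hypothesis by the tail estimate~\eqref{(less epsilon)} established in the preliminaries. First I would let $\tilde{\rho}$ be a regular Borel measure with $K := \supp(\tilde{\rho} - \rho)$ compact and $\tilde{\rho}(K) = \rho(K)$, so that~\eqref{totvol} holds, and I would fix an arbitrary $\varepsilon > 0$. Choosing a relatively compact open neighborhood $W \supset K$ with $\rho(\partial W) = 0$ and applying a finite covering argument to~\eqref{(less epsilon)} over the compact set $\overline{W}$, I obtain a single compact set $K' \subset \F$ and an integer $N$ such that
\[ \int_{\F \setminus K'} \L(x,y)\,d\rho(y) < \varepsilon \qquad \text{and} \qquad \int_{\F \setminus K'} \L(x,y)\,d\rho^{(n)}(y) < \varepsilon \]
simultaneously for all $x \in \overline{W}$ and all $n \ge N$. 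This uniform tail bound is the surrogate for the vanishing of $\L(x,\cdot)$ outside a compact set that is available in the compact-range case.

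Next I would carry over the geometric construction of that proof: using regularity of $\rho$ and $\tilde{\rho}$ together with complete regularity of $\F$ and the subalgebra $\Gamma_{\rho|_{U'}}$ of sets with $\rho$-null boundary, I choose relatively compact open sets $V$ with $K \subset V \subset W$ and $V' \supset K'$ satisfying $\rho(\partial V) = 0 = \rho(\partial V')$, so that $\rho^{(n)}(V) \to \rho(V)$ and $\rho^{(n)}(V') \to \rho(V')$, and I establish the weak convergence of the normalized product measures exactly as in~\eqref{(weak convergence product)}. Writing the action difference in the symmetrized form
\begin{align*}
\Sact(\tilde{\rho}) - \Sact(\rho) &= 2\int_K d(\tilde{\rho} - \rho)(x) \int_{\F} d\rho(y)\,\L(x,y) \\
&\qquad + \int_K d(\tilde{\rho} - \rho)(x) \int_K d(\tilde{\rho} - \rho)(y)\,\L(x,y) \:,
\end{align*}
I then split the inner integral $\int_\F d\rho(y)$ into $\int_{K'}$ and $\int_{\F \setminus K'}$; the tail piece is bounded by $\varepsilon\,(\rho(K) + \tilde{\rho}(K))$, so that up to a fixed multiple of $\varepsilon$ the difference coincides with the truncated expression appearing in Theorem~\ref{Lemma compact minimizer}.

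From here the argument of Theorem~\ref{Lemma compact minimizer} applies with only cosmetic changes: I introduce the modified measures $\tilde{\rho}_n$ equal to $c_n\tilde{\rho}$ on $V$ and to $\rho^{(n)}$ on $\F \setminus V$, with $c_n := \rho^{(n)}(V)/\tilde{\rho}(V) \to 1$, so that $\tilde{\rho}_n(K_n) = \rho^{(n)}(K_n)$ for sufficiently large $n$. Each time the original proof discards an integral over the complement of $V'$ by invoking compact range, I instead absorb the omitted tail into an $O(\varepsilon)$ error using the second inequality above for $\rho^{(n)}$. Invoking the minimizing property $\Sact_{K_n}(\tilde{\rho}_n) - \Sact_{K_n}(\rho^{(n)}) \ge 0$ and passing to the limit $n \to \infty$ by~\eqref{(weak convergence product)}, I arrive at $\Sact(\tilde{\rho}) - \Sact(\rho) \ge -C\varepsilon$ with $C$ independent of $\varepsilon$, and letting $\varepsilon \to 0$ yields the claim.

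The main obstacle will be the bookkeeping of these error terms: one must verify that every spot where compact range previously forced a term to vanish is now an integral of $\L(x,\cdot)$ over the complement of a compact set against a measure of \emph{bounded} mass. The essential point making this work is that all outer integrations are performed against $\tilde{\rho} - \rho$ (or $c_n\tilde{\rho} - \rho^{(n)}$) restricted to the compact set $V$, where the relevant masses $\tilde{\rho}(V)$, $\rho(V)$ and $\rho^{(n)}(V)$ stay uniformly bounded---the last by Lemma~\ref{lemmaupper}---even though $\rho^{(n)}(\F) = \lambda_n$ may diverge. Hence each tail contribution is genuinely $O(\varepsilon)$ and does not grow with $n$, which is exactly what permits the interchange of the limit $n \to \infty$ with the $\varepsilon$-approximation.
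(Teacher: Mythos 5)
Your proposal is correct and follows essentially the same route as the paper's proof: the paper likewise compactifies the tail estimate~\eqref{(less epsilon)} by continuity of~$\L$ and a finite covering of the compact support of~$\tilde{\rho}-\rho$ (yielding the set~$K_{\tilde{\varepsilon}}$ in~\eqref{(Ke)}, your~$K'$), splits the action difference into a core part over~$K_{\tilde{\varepsilon}}$ plus tails of order~$\tilde{\varepsilon}$, and then reruns the~$\tilde{\rho}_n$-machinery of Theorems~\ref{Lemma compact minimizer} and~\ref{Theorem global} on the core. Your closing observation that the tail errors remain of order~$\varepsilon$ uniformly in~$n$ because all outer integrations are against measures of uniformly bounded mass on the relatively compact set~$V$ (via Lemma~\ref{lemmaupper}) is exactly the point the paper leaves implicit in its appeal to those earlier proofs.
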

\begin{proof} 
	Let~$\tilde{\rho} \in \mathfrak{B}_{\F}$ be a variation of compact support. Then~$K := \supp (\tilde{\rho} - \rho )$ is compact, and~$\rho(K) = \tilde{\rho}(K) < \infty$.
	Since the Lagrangian is continuous and decays in entropy, the function~$\ell(x)$ (see~\eqref{(Lagrange functional)}) is locally bounded. 
	As a consequence, the difference of the actions~\eqref{integrals} is well-defined. 
	Thus it remains to show that
	\begin{align*}
	\big(\Sact(\tilde{\rho}) - \Sact(\rho) \big) \ge 0
	\end{align*}
	for all variations~$\tilde{\rho}$ of compact support. 
	Given~$\tilde{\varepsilon}>0$ and~$x \in K$, we know that
	\[ \int_{\F \setminus K_{x,\tilde{\varepsilon}/2}} \L(x,y) \:d\rho(y) < \frac{{\tilde{\varepsilon}}}{2} \]
	(where $K_{x, \tilde{\varepsilon}/2} \subset \F$ is given according to~\eqref{Kxe}). 
	By continuity of the Lagrangian, there is an open neighborhood~$U_x$ of~$x$ such that
	\[ \int_{\F \setminus K_{x,\tilde{\varepsilon}/2}} \L(z,y) \:d\rho(y) < {\tilde{\varepsilon}} \qquad \text{for all~$z \in U_x$}\:. \]
	Covering the compact set~$K$ by a finite number of such neighborhoods~$U_{x_1}, \ldots, U_{x_L}$
	and introducing $K_{\tilde{\varepsilon}} := K_{x_1,\tilde{\varepsilon}/2} \cup \cdots \cup K_{x_L,\tilde{\varepsilon}/2}$, we conclude that
	\begin{align}\label{(Ke)}
	\int_{\F \setminus K_{\tilde{\varepsilon}}} \L(x,y) \:d\rho(y) < {\tilde{\varepsilon}} \qquad \text{for all $x \in K$} \:.
	\end{align}
	Similarly, for all $x \in K$ we have 
	\begin{align*}
	\int_{\F \setminus K_{\tilde{\varepsilon}}} \L(x,y) \:d\rho^{(n)}(y) < {\tilde{\varepsilon}} \qquad \text{for sufficiently large $n \in \N$} \:.
	\end{align*} 
	According to \eqref{integrals}, we obtain 
	\begin{align*}
	&\Sact(\tilde{\rho}) - \Sact(\rho) 
	= 2 \int_{K} d(\tilde{\rho} - \rho)(x) \int_{\F \setminus K_{\tilde{\varepsilon}}} d\rho(y) \:\L(x,y) \\
	&\qquad + 2 \int_{K} d(\tilde{\rho} - \rho)(x) \int_{K_{\tilde{\varepsilon}}} d\rho(y) \:\L(x,y) 
	+ \int_{K} d(\tilde{\rho}- \rho)(x) \int_{K} d(\tilde{\rho}- \rho)(y) \:\L(x,y) \:.
	\end{align*}
	Choosing $\tilde{\varepsilon} > 0$ suitably and making use of \eqref{(Ke)}, the expression
	\begin{align*}
	\int_{K} d(\tilde{\rho} - \rho)(x) \int_{\F \setminus K_{\tilde{\varepsilon}}} d\rho(y) \:\L(x,y) \le 2\tilde{\varepsilon} \:\rho(K)
	\end{align*}
	can be arranged to be arbitrarily small, giving rise to 
	\begin{align*}
	\Sact(\tilde{\rho}) - \Sact(\rho) &\ge \left[2\int_{K} d(\tilde{\rho} - \rho)(x) \int_{K_{\tilde{\varepsilon}}} d\rho(y) \:\L(x,y)\right. \\
	&\qquad \qquad \left.+ \int_{K} d(\tilde{\rho}- \rho)(x) \int_{K} d(\tilde{\rho}- \rho)(y) \:\L(x,y)\right] - \varepsilon
	\end{align*}
	for any $\varepsilon > 0$. Proceeding similarly as in the proof of Theorem~\ref{Lemma compact minimizer} and Theorem~\ref{Theorem global}, one can show that the term in square brackets is bigger than or equal to zero, 
	up to an arbitrarily small error term. Since $\varepsilon > 0$ was chosen arbitrarily, we finally arrive at
	\begin{align*}
	\big(\Sact(\tilde{\rho}) - \Sact(\rho)\big) \ge 0 \:,
	\end{align*}
	which proves the claim. 
\end{proof}

\subsection{Existence of Minimizers under Variations of Finite Volume}
Finally we can prove the existence of minimizers in the sense of Definition~\ref{Definition minimizer}. 

\begin{Thm}\label{Theorem minimizer vanishing} 
	Assume that $\L : \F \times \F \to \R_0^+$ is continuous, bounded, and decays in entropy (see Definition \ref{Definition vanishing}). 
	Moreover, assume that condition~{\rm{(iv)}} in~\S\ref{seccvp} holds. Then~$\rho$ is a minimizer under variations of finite volume (see Definition~\ref{Definition minimizer}).
\end{Thm}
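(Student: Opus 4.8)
The plan is to mimic the proof of Theorem~\ref{Theorem global}, replacing each appeal to the compact-range property by the uniform tail bounds~\eqref{(less epsilon)} supplied by the decay in entropy, exactly in the manner of the compact-support case treated in Theorem~\ref{Lemma minimizer vanishing compact}. Since condition~(iv) is assumed, the function~$\ell$ is continuous by Proposition~\ref{Proposition continuity'} and bounded, so that the difference~\eqref{integrals} is well-defined. Given a variation of finite volume~$\tilde{\rho}$, we set~$B := \supp(\tilde{\rho} - \rho)$ and have~$0 < \rho(B) = \tilde{\rho}(B) < \infty$; after rescaling we may assume~$\rho(B) > 1$. Using the symmetry of~$\L$ and Fubini's theorem as in Theorem~\ref{Theorem global}, I would first write
\[ \Sact(\tilde{\rho}) - \Sact(\rho) = 2\int_B d(\tilde{\rho} - \rho)(x) \int_\F d\rho(y)\: \L(x,y) + \int_B d(\tilde{\rho} - \rho)(x) \int_B d(\tilde{\rho} - \rho)(y)\: \L(x,y) \:. \]

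Next I would carry out the two approximations exactly as in Theorem~\ref{Theorem global}. First approximate~$B$ from outside by an open set~$U \supset B$ with~$\rho(U \setminus B)$ and~$\tilde{\rho}(U \setminus B)$ small; because~$\ell$ is bounded by~(iv), the error introduced in the first integral is controlled precisely as there, yielding the same estimate with~$B$ replaced by~$U$ up to an arbitrarily small error~$\varepsilon$. Then approximate~$U$ from inside by a relatively compact open set~$V$ with~$\rho(\partial V) = 0$, so that the Portmanteau argument of Theorem~\ref{Lemma compact minimizer} gives~$\rho^{(n)}(V) \to \rho(V)$.

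The only genuinely new step is the reduction of the inner integral~$\int_\F d\rho(y)\,\L(x,y)$ to an integral over a relatively compact set, where compact range is no longer available. Here I would instead proceed as in Theorem~\ref{Lemma minimizer vanishing compact}: cover the compact set~$\overline{V}$ by finitely many neighborhoods~$U_{x_1}, \ldots, U_{x_L}$ and set~$K_{\tilde{\varepsilon}} := K_{x_1, \tilde{\varepsilon}/2} \cup \cdots \cup K_{x_L, \tilde{\varepsilon}/2}$, so that the tail bounds~\eqref{(less epsilon)} give
\[ \int_{\F \setminus K_{\tilde{\varepsilon}}} \L(x,y)\: d\rho(y) < \tilde{\varepsilon} \qquad \text{and} \qquad \int_{\F \setminus K_{\tilde{\varepsilon}}} \L(x,y)\: d\rho^{(n)}(y) < \tilde{\varepsilon} \]
for all~$x \in V$ and all sufficiently large~$n$. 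Choosing a relatively compact open set~$V' \supset K_{\tilde{\varepsilon}}$ with~$\rho(\partial V') = 0$, these tail estimates let me replace~$\int_\F d\rho(y)$ by~$\int_{V'} d\rho(y)$ (and likewise for~$\rho^{(n)}$) at the cost of an error of order~$\tilde{\varepsilon}\,\rho(B)$, so that~$V'$ now plays exactly the role that the compact-range set~$V'$ played in Theorem~\ref{Theorem global}.

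From this point the argument is that of Theorem~\ref{Theorem global}: I would introduce the rescaled measures~$\tilde{\rho}_n$ equal to~$c_n \tilde{\rho}$ on~$V$ and to~$\rho^{(n)}$ on~$\F \setminus V$, with~$c_n := \rho^{(n)}(V)/\tilde{\rho}(V) \to 1$; note that~$V, V' \subset K_n$ for all large~$n$; and invoke the fact that~$\rho^{(n)}$ minimizes~$\Sact_{K_n}$ to conclude that the surviving bracketed term is nonnegative in the limit~$n \to \infty$. Letting the approximation parameters~$\tilde{\varepsilon}$ and~$\varepsilon$ tend to zero then yields~$\Sact(\tilde{\rho}) - \Sact(\rho) \geq 0$. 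The main obstacle is purely one of bookkeeping: one must guarantee that~\eqref{(less epsilon)} holds \emph{uniformly} in~$x$ over the compact set~$V$ and \emph{simultaneously} for~$\rho$ and for all~$\rho^{(n)}$ with~$n$ large --- which is exactly what the diagonal subsequence construction preceding~\eqref{(less epsilon)} secures --- and then organize the several approximation errors so that they all vanish jointly as the parameters are sent to zero.
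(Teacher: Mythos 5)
Your proposal is correct and takes essentially the same route as the paper: the paper's proof likewise combines the outer approximation by open $U \supset B$ and inner approximation by a (relatively) compact set, with errors controlled via condition~(iv), and replaces the compact-range reduction by the finite-cover tail set $K_{\tilde{\varepsilon}} = K_{x_1,\tilde{\varepsilon}/2} \cup \cdots \cup K_{x_L,\tilde{\varepsilon}/2}$ from Theorem~\ref{Lemma minimizer vanishing compact}, before concluding with the rescaled measures $\tilde{\rho}_n$ ($c_n \to 1$) and the minimality of $\rho^{(n)}$ on $K_n$ exactly as in Theorem~\ref{Theorem global}. Your closing remark on the uniformity of the tail bounds over the compact set and simultaneously for $\rho$ and all $\rho^{(n)}$ with $n$ large is precisely the role of the diagonal-subsequence construction leading to~\eqref{(less epsilon)} in the paper.
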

\begin{proof}
The idea is to proceed similarly to the proof of Theorem \ref{Theorem global}. Considering variations of finite volume~$\tilde{\rho} \in \mathfrak{B}_{\F}$ satisfying the conditions in~\eqref{totvol} and introducing the set~$B := \supp (\tilde{\rho} - \rho)$, 
we know that~$(\tilde{\rho} - \rho)(B) = 0$ and thus~$\rho(B) = \tilde{\rho}(B) < \infty$. Under the assumption that condition~(iv) in~\S\ref{seccvp} holds, the difference \eqref{integrals} is well-defined, giving rise to 
\begin{align*}
\Sact(\tilde{\rho}) - \Sact(\rho) 
&= 2\int_{B} d(\tilde{\rho} - \rho)(x) \int_{\F} d\rho(y) \:\L(x,y) \\ 
&\qquad \quad + 
\int_{B} d(\tilde{\rho}- \rho)(x) \int_{B} d(\tilde{\rho}- \rho)(y) \:\L(x,y) \:.
\end{align*}
For arbitrary $\tilde{\varepsilon} > 0$, we first approximate $B$ by open sets $U \supset B$ due to regularity of~$\rho$ and $\tilde{\rho}$ such that 
\[\rho(U \setminus B) < \tilde{\varepsilon}/4 \qquad \text{and} \qquad \tilde{\rho}(U \setminus B) < \tilde{\varepsilon}/4 \:, \]
and then we approximate $U$ from inside by compact sets~$K$ such that 
\[\rho(U \setminus K) < \tilde{\varepsilon}/4 \qquad \text{and} \qquad \tilde{\rho}(U \setminus K) < \tilde{\varepsilon}/4 \:. \]
This gives rise to 
\begin{align*}
&\Sact(\tilde{\rho}) - \Sact(\rho) = 2 \int_{B \setminus K} d(\tilde{\rho} - \rho)(x) \int_{\F} d\rho(y) \:\L(x,y) + 2 \int_{K} d(\tilde{\rho} - \rho)(x) \int_{\F} d\rho(y) \:\L(x,y) \\
&\quad + \int_{B \setminus K} d(\tilde{\rho}- \rho)(x) \int_{B} d(\tilde{\rho}- \rho)(y) \:\L(x,y) + \int_{K} d(\tilde{\rho}- \rho)(x) \int_{B} d(\tilde{\rho}- \rho)(y) \:\L(x,y) \:.
\end{align*}
Proceeding similarly to the proof of Theorem \ref{Theorem global} and Theorem~\ref{Lemma minimizer vanishing compact}, one arrives at
\begin{align*}
\Sact(\tilde{\rho}) - \Sact(\rho) &\ge \left[2\int_{K} d(\tilde{\rho} - \rho)(x) \int_{K_{\tilde{\varepsilon}}} d\rho(y) \:\L(x,y)\right. \\
&\qquad \qquad \left.+ \int_{K} d(\tilde{\rho}- \rho)(x) \int_{K} d(\tilde{\rho}- \rho)(y) \:\L(x,y)\right] - \varepsilon
\end{align*}
for any $\varepsilon > 0$ by suitably choosing $U \supset B$ and $K \subset U$. Applying the same arguments as in the proof of Theorem~\ref{Lemma minimizer vanishing compact}, one can show that the term in square brackets is bigger than or equal to zero, up to an arbitrarily small error term. Since $\varepsilon > 0$ was chosen arbitrarily, we finally arrive at
\begin{align*}
\big(\Sact(\tilde{\rho}) - \Sact(\rho)\big) \ge 0 \:,
\end{align*}
which proves the claim. 
\end{proof}

Theorem \ref{Theorem minimizer vanishing} concludes the existence theory in the $\sigma$-locally compact setting. 

\appendix

\section{Non-Triviality of the Constructed Measure}\label{Appendix nontrivial}
The following results show that the measure $\rho$ given by \eqref{rhoglobal} is non-zero in the case that the Lagrangian $\L : \F \times \F \to \R_0^+$ is either of compact range or is continuous and decays in entropy. 

\begin{Lemma}\label{Lemma nontrivial}
	Assume that the Lagrangian~$\L : \F \times \F \to \R_0^+$ is of compact range. Then the measure~$\rho$ obtained in \eqref{rhoglobal} is non-zero. The total volume $\rho(\F)$ is possibly infinite. 
\end{Lemma}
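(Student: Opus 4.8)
The plan is to exploit the Euler--Lagrange equations at a single fixed point in order to force a uniform lower bound on the mass that the rescaled minimizers $\rho^{(k)}$ assign to one fixed compact set, and then to transport this bound to the vague limit $\rho$ by testing against a compactly supported bump function. First I would fix a point $x_0$ lying in the first set $K_1$ of the exhaustion, so that $x_0 \in K_n$ for every $n$. Since the infimum in~\eqref{ELn'} equals zero, the function $\ell^{(k)}$ (see~\eqref{lndef'}) is non-negative throughout $K_k$; evaluating at $x_0$ therefore gives $\int_\F \L(x_0,y)\,d\rho^{(k)}(y) \ge 1$ for every $k$. Because $\L$ is of compact range, $\L(x_0,\cdot)$ is supported in a fixed compact set $K_{x_0}$; applying Definition~\ref{defcompactrange} to $K_1$ (and using symmetry) one may in fact take $K_{x_0}$ inside a single compact set $K_1'$ independent of the chosen point, so the whole lower bound already lives on a fixed compact set.

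Next I would convert this into a lower bound on $\rho^{(k)}(K_{x_0})$. Since $\L(x_0,\cdot)$ is bounded on the compact set $K_{x_0}$ by a constant $M<\infty$ (for continuous $\L$ this is immediate, being the supremum of a continuous function on a compact set), the estimate $1 \le \int_{K_{x_0}} \L(x_0,y)\,d\rho^{(k)}(y) \le M\,\rho^{(k)}(K_{x_0})$ yields $\rho^{(k)}(K_{x_0}) \ge 1/M$ for all $k$. By local compactness and Urysohn's lemma I would then choose $g \in C_c(\F)$ with $0 \le g \le 1$ and $g \equiv 1$ on $K_{x_0}$, so that $\int_\F g\,d\rho^{(k)} \ge \rho^{(k)}(K_{x_0}) \ge 1/M$; vague convergence~\eqref{(vague convergence)} then gives $\int_\F g\,d\rho = \lim_{k\to\infty}\int_\F g\,d\rho^{(k)} \ge 1/M > 0$, whence $\rho \neq 0$. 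I would also record the a~priori bound $\rho^{[n]}(\F)=\lambda_n = 1/\s_n \ge 1/\L(x_0,x_0) > 0$, obtained by testing the normalized variational problem on $K_n$ with the point mass $\delta_{x_0}$ (which shows the minimal action on $K_n$ is at most $\L(x_0,x_0)$); this certifies that the total masses do not degenerate, although by itself it does not preclude escape of mass to infinity.

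For the final assertion that $\rho(\F)$ may be infinite I would exhibit an example rather than argue in general: on $\F=\R$ with a compact-range Lagrangian bounded below on a unit band about the diagonal, the rescaled minimizers approach a translation-invariant (Lebesgue-type) measure with $\rho(\F)=\infty$, corresponding to $\s_n\to 0$ and $\lambda_n\to\infty$. The main obstacle is precisely the passage to the limit in the mass estimate: one cannot feed $\L(x_0,\cdot)$ directly into vague convergence, since it need not be continuous, and the only semicontinuity available controls $\liminf_k \int_\F \L(x_0,\cdot)\,d\rho^{(k)}$ \emph{from below}, i.e.\ in the wrong direction for a lower bound on $\rho$. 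The device that circumvents this is to decouple the two roles played by $\L(x_0,\cdot)$: using it only through the scalar inequality $\int_\F \L(x_0,\cdot)\,d\rho^{(k)} \ge 1$ together with its boundedness on the fixed compact set $K_{x_0}$ to extract the uniform bound $\rho^{(k)}(K_{x_0}) \ge 1/M$, and then taking the limit solely against the genuinely continuous test function $g$.
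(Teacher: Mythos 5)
Your proposal is correct and takes essentially the same route as the paper's proof: the EL inequality $\ell^{(k)}(x_0)\ge 0$ at a point lying in all $K_k$, compact range together with boundedness of $\L(x_0,\cdot)$ on the fixed compact set $K_{x_0}$ to extract the uniform bound $\rho^{(k)}(K_{x_0})\ge 1/M$, and then a Urysohn bump function fed into vague convergence \eqref{(vague convergence)} -- your closing remark about why one must test with $g$ rather than with $\L(x_0,\cdot)$ itself matches the paper's implicit use of $\sup_{y\in K_x}\L(x,y)<\infty$, and is in fact stated more carefully than in the paper. The only difference is the final assertion: for ``$\rho(\F)$ possibly infinite'' the paper sums the lower bounds $c_{x_j}$ over a disjoint family of compact sets $K_{x_j}$ separated by disjoint open neighborhoods, whereas you sketch a translation-invariant example on $\R$; both are adequate for this non-assertion.
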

\begin{proof}
	Let $(K_n)_{n \in \N}$ be the compact exhaustion of $\F$ with $K_1 \not= \varnothing$ and $K_n \subset K_{n+1}^{\circ}$ for all $n \in \N$. By construction of $\rho^{(n)}$ we are given $\supp \rho^{(n)} \subset K_{n}$ for every $n \in \N$. Assuming that $\L$ is of compact range, for every $x \in \F$ there is~$K_x \subset \F$ compact such that $\L(x,y) = 0$ for all $y \notin K_x$, and $K_x \subset K_n^{\circ}$ for sufficiently large $n \in \N$.
As a consequence, 
	\begin{align*}
	1 \le \int_{K_x} \L(x,y) \: d\rho^{(n)}(y) \le \sup_{y \in K_x} \L(x,y) \: \rho^{(n)}(K_x) \:,
	\end{align*}
showing that~$\rho^{(n)}(K_x) \ge c_x$ for some constant $c_x > 0$ for sufficiently large $n \in \N$.
Moreover, choosing $f \in C_c(\F; [0,1])$ with $f|_{K_x} \equiv 1$, applying
	vague convergence~\eqref{(vague convergence)} yields
	\begin{align*}
	\int_{\F} f \: d\rho \stackrel{\eqref{(vague convergence)}}{=} \lim_{n \to \infty} \int_{\F} f \: d\rho^{(n)} \ge \lim_{n \to \infty} \rho^{(n)}(K_x) \ge c_x > 0 \:.
	\end{align*} 
We conclude that the measure is non-zero.
	
	Whenever $(K_{x_j})_{j \in \N}$ is a disjoint sequence of such compact sets, we may choose open sets $U_j \supset K_{x_j}$ such that $U_i \cap U_j = \varnothing$ for all~$i \not= j$. Choosing $f_j \in C_c(U_j;[0,1])$ with~$f_j|_{K_{x_j}} \equiv 1$ for every $j \in \N$, we obtain
	\begin{align*}
	\rho(\F) = \int_{\F} d\rho \ge \sum_{j \in \N} \int_{U_j} f_j \: d\rho = \sum_{j \in \N} \lim_{k \to \infty} \int_{U_j} f_j \: d\rho^{(k)} \ge \sum_{j \in \N} \lim_{k \to \infty} \rho^{(k)}(K_{x_j}) \ge \sum_{j \in \N} c_{x_j} \:.
	\end{align*}
	Hence the total volume $\rho(\F)$ is infinite if $\sum_{j \in \N} c_{x_j}$ diverges. 
\end{proof}

\begin{Lemma}\label{Lemma nontrivial decay}
	Assume that the Lagrangian~$\L : \F \times \F \to \R_0^+$ is continuous and decays in entropy. 
	Then the measure~$\rho$ obtained in \eqref{rhoglobal} is non-zero, and the total volume~$\rho(\F)$ is possibly infinite. 
\end{Lemma}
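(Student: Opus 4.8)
The plan is to follow the proof of Lemma~\ref{Lemma nontrivial} almost verbatim, the only change being that the exact vanishing of $\L(x,\cdot)$ outside a compact set (unavailable here) is replaced by the uniform tail estimate~\eqref{(less epsilon)}, which was established during the diagonal construction in Section~\ref{Chapter 5}. Concretely, I would first fix an arbitrary $x \in \F$ and put $\varepsilon = \tfrac12$, with associated compact set $K_{x,\varepsilon} = \overline{B_{N_0}(x)}$ as in~\eqref{Kxe}. Since $x$ lies in $K_n$ for all sufficiently large $n$, the Euler-Lagrange inequality~\eqref{ELn'} yields $\ell^{(n)}(x) \ge 0$, that is $\int_{\F} \L(x,y)\,d\rho^{(n)}(y) \ge 1$. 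Splitting this integral over $K_{x,\varepsilon}$ and its complement and discarding the tail by~\eqref{(less epsilon)} (which for this fixed $x$ is below $\varepsilon = \tfrac12$ for all large $n$), I obtain $\int_{K_{x,\varepsilon}} \L(x,y)\,d\rho^{(n)}(y) > \tfrac12$. By continuity of $\L$ the number $M_x := \sup_{y \in K_{x,\varepsilon}} \L(x,y)$ is finite, so $\rho^{(n)}(K_{x,\varepsilon}) \ge \tfrac{1}{2 M_x} =: c_x > 0$ for all sufficiently large $n$.

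Next I would pass to the vague limit exactly as in Lemma~\ref{Lemma nontrivial}. Choosing $f \in C_c(\F;[0,1])$ with $f|_{K_{x,\varepsilon}} \equiv 1$ (which exists by~\cite[Lemma~2.92]{aliprantis}) and invoking vague convergence~\eqref{(vague convergence)} gives
\[ \int_{\F} f \, d\rho = \lim_{n \to \infty} \int_{\F} f \, d\rho^{(n)} \ge \liminf_{n \to \infty} \rho^{(n)}(K_{x,\varepsilon}) \ge c_x > 0 \:, \]
so that $\rho$ is non-zero.

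For the assertion on the total volume I would exploit that the Heine-Borel metric is unbounded (Definition~\ref{Definition vanishing}): with $\varepsilon = \tfrac12$ fixed, the radius $N_0$ is a fixed number, so I can select a sequence $(x_j)_{j \in \N}$ whose pairwise distances exceed $2 N_0$, producing pairwise disjoint balls $K_{x_j,\varepsilon}$ together with pairwise disjoint open neighborhoods $U_j \supset K_{x_j,\varepsilon}$. Applying the previous estimate to test functions $f_j \in C_c(U_j;[0,1])$ with $f_j|_{K_{x_j,\varepsilon}} \equiv 1$ yields
\[ \rho(\F) \ge \sum_{j \in \N} \int_{U_j} f_j \, d\rho \ge \sum_{j \in \N} \liminf_{k \to \infty} \rho^{(k)}(K_{x_j,\varepsilon}) \ge \sum_{j \in \N} c_{x_j} \:, \]
which may diverge; hence the total volume is possibly infinite.

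The one genuinely new point relative to Lemma~\ref{Lemma nontrivial}, and the step I expect to demand the most care, is the uniformity in $n$ of the tail bound~\eqref{(less epsilon)}: in the compact-range case the tail is identically zero, whereas here I must ensure that $\int_{\F \setminus K_{x,\varepsilon}} \L(x,y)\,d\rho^{(n)}(y) < \varepsilon$ holds simultaneously for all large $n$ and for the countably many base points $x_j$. This is exactly what the diagonal-subsequence construction of Section~\ref{Chapter 5} (culminating in~\eqref{(neighborhood)}) provides, so rather than re-deriving it I would simply invoke that construction and apply~\eqref{(neighborhood)} along the countable dense set in which the $x_j$ may be taken to lie.
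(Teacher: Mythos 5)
Your proposal is correct and follows essentially the same route as the paper: both combine the Euler--Lagrange inequality~\eqref{ELn'} with the tail estimate~\eqref{(less epsilon)} to obtain a uniform lower bound $\rho^{(n)}(K_{x,\varepsilon}) \ge c_x > 0$, and then pass to the vague limit with a bump function exactly as in Lemma~\ref{Lemma nontrivial}. The only cosmetic differences are that the paper derives the lower bound by contradiction where you estimate directly, and that you make explicit (via the unbounded Heine--Borel metric and points at pairwise distance exceeding $2N_0$) the construction of disjoint compact sets that the paper leaves implicit in its remark that the infinite-volume assertion is ``proven analogously.''
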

\begin{proof}
	For arbitrary $x \in \F$ and $0 < \varepsilon < 1$ let
	\begin{align*}
	\tilde{\L}_{x, \varepsilon} : \F \to \R_0^+ \:, \qquad \tilde{\L}_{x, \varepsilon}(y) := \L(x,y) \: \chi_{K_{x, \varepsilon}} 
	\end{align*}
	with $K_{x, \varepsilon} \subset \F$ compact given by \eqref{(less epsilon)}. In particular, $\tilde{\L}_{x, \varepsilon}(y) = 0$ for all $y \notin K_{x, \varepsilon}$. 
	As a consequence, $\rho^{(n)}(K_{x, \varepsilon}) \ge c_{x, \varepsilon}$ for some constant $c_{x, \varepsilon} > 0$ for sufficiently large $n \in \N$, as the following argument shows. Assuming conversely that $\rho^{(n)}(K_{x, \varepsilon}) \to 0$ as $n \to \infty$, in view of \eqref{(less epsilon)} we arrive at the contradiction
	\begin{align*}
	1 \le \int_{\F \setminus K_{x, \varepsilon}} \L(x,y) \: d\rho^{(n)}(y) + \int_{K_{x, \varepsilon}} \tilde{\L}_{x, \varepsilon}(y) \: d\rho^{(n)}(y) \le \varepsilon + \sup_{y \in K_{x, \varepsilon}} \L(x,y) \: \rho^{(n)}(K_{x, \varepsilon}) < 1
	\end{align*}
	for sufficiently large $n \in \N$. Choosing $f \in C_c(\F; [0,1])$ with $f|_{K_{x,\varepsilon}} \equiv 1$, 
	according to vague convergence~\eqref{(vague convergence)} 
	we obtain
	\begin{align*}
	\int_{\F} f \: d\rho \stackrel{\eqref{(vague convergence)}}{=} \lim_{n \to \infty} \int_{\F} f \: d\rho^{(n)} \ge \lim_{n \to \infty} \rho^{(n)}(K_{x, \varepsilon}) \ge c_{x, \varepsilon} > 0 \:.
	\end{align*} 
	Thus the measure~$\rho$ is non-zero. The last assertion can be proven analogously to the proof of Lemma~\ref{Lemma nontrivial}. 
\end{proof}

\Thanks {{\em{Acknowledgments:}}
We would like to thank Magdalena Lottner, Marco Oppio, Johannes Wurm and the unknown referee for helpful comments on the manuscript.
C.~L.\ gratefully acknowledges generous support by the ``Studienstiftung des deutschen Volkes.''


\begin{thebibliography}{10}

\bibitem{aliprantis}
C.D. Aliprantis and K.C. Border, \emph{{I}nfinite {D}imensional {A}nalysis: A
  hitchhiker's guide}, third ed., Springer, Berlin, 2006.

\bibitem{bauer}
H.~Bauer, \emph{Measure and {I}ntegration {T}heory}, De Gruyter Studies in
  Mathematics, vol.~26, Walter de Gruyter \& Co., Berlin, 2001, Translated from
  the German by Robert B. Burckel.

\bibitem{billingsley}
P.~Billingsley, \emph{Convergence of {P}robability {M}easures}, second ed.,
  Wiley Series in Probability and Statistics: Probability and Statistics, John
  Wiley \& Sons, Inc., New York, 1999.

\bibitem{bogachev2}
V.I. Bogachev, \emph{Measure {T}heory. {V}ol. {II}}, Springer-Verlag, Berlin,
  2007.

\bibitem{dieudonne1}
J.~Dieudonn{\'e}, \emph{Foundations of {M}odern {A}nalysis}, Academic Press,
  New York-London, 1969, Enlarged and corrected printing, Pure and Applied
  Mathematics, Vol. 10-I.

\bibitem{edwards}
R.E. Edwards, \emph{A theory of {R}adon measures on locally compact spaces},
  Acta Math. \textbf{89} (1953), 133--164.

\bibitem{elstrodt}
J.~Elstrodt, \emph{Ma\ss - und {I}ntegrationstheorie}, fourth ed.,
  Springer-Verlag, Berlin, 2005, Grundwissen Mathematik.

\bibitem{engelking}
R.~Engelking, \emph{General {T}opology}, second ed., Sigma Series in Pure
  Mathematics, vol.~6, Heldermann Verlag, Berlin, 1989, Translated from the
  Polish by the author.

\bibitem{federer}
H.~Federer, \emph{{G}eometric {M}easure {T}heory}, Die Grundlehren der
  mathematischen Wissenschaften, Band 153, Springer-Verlag New York Inc., New
  York, 1969.

\bibitem{feinberg}
E.A. Feinberg, P.O. Kasyanov, and N.V. Zadoianchuk, \emph{Fatou's lemma for
  weakly converging probabilities}, arXiv:1206.4073 [math.PR], Theory Probab.
  Appl. \textbf{58} (2014), no.~4, 683--689.

\bibitem{pfp}
F.~Finster, \emph{The {P}rinciple of the {F}ermionic {P}rojector},
  hep-th/0001048, hep-th/0202059, hep-th/0210121, AMS/IP Studies in Advanced
  Mathematics, vol.~35, American Mathematical Society, Providence, RI, 2006.

\bibitem{discrete}
\bysame, \emph{A variational principle in discrete space-time: Existence of
  minimizers}, arXiv:math-ph/0503069, Calc. Var. Partial Differential Equations
  \textbf{29} (2007), no.~4, 431--453.

\bibitem{continuum}
\bysame, \emph{Causal variational principles on measure spaces},
  arXiv:0811.2666 [math-ph], J. Reine Angew. Math. \textbf{646} (2010),
  141--194.

\bibitem{qft}
\bysame, \emph{Perturbative quantum field theory in the framework of the
  fermionic projector}, arXiv:1310.4121 [math-ph], J. Math. Phys. \textbf{55}
  (2014), no.~4, 042301.

\bibitem{cfs}
\bysame, \emph{The {C}ontinuum {L}imit of {C}ausal {F}ermion {S}ystems},
  arXiv:1605.04742 [math-ph], Fundamental Theories of Physics, vol. 186,
  Springer, 2016.

\bibitem{rrev}
F.~Finster, A.~Grotz, and D.~Schiefeneder, \emph{Causal fermion systems: A
  quantum space-time emerging from an action principle}, arXiv:1102.2585
  [math-ph], {Q}uantum {F}ield {T}heory and {G}ravity (F.~Finster, O.~M\"uller,
  M.~Nardmann, J.~Tolksdorf, and E.~Zeidler, eds.), Birkh\"auser Verlag, Basel,
  2012, pp.~157--182.

\bibitem{review}
F.~Finster and M.~Jokel, \emph{Causal fermion systems: An elementary
  introduction to physical ideas and mathematical concepts}, arXiv:1908.08451
  [math-ph], {P}rogress and {V}isions in {Q}uantum {T}heory in {V}iew of
  {G}ravity (F.~Finster, D.~Giulini, J.~Kleiner, and J.~Tolksdorf, eds.),
  Birkh\"auser Verlag, Basel, 2020, pp.~63--92.

\bibitem{fockbosonic}
F.~Finster and N.~Kamran, \emph{Complex structures on jet spaces and bosonic
  {F}ock space dynamics for causal variational principles}, arXiv:1808.03177
  [math-ph], to appear in Pure Appl. Math. Q. (2020).

\bibitem{dice2014}
F.~Finster and J.~Kleiner, \emph{Causal fermion systems as a candidate for a
  unified physical theory}, arXiv:1502.03587 [math-ph], J. Phys.: Conf. Ser.
  \textbf{626} (2015), 012020.

\bibitem{jet}
\bysame, \emph{A {H}amiltonian formulation of causal variational principles},
  arXiv:1612.07192 [math-ph], Calc. Var. Partial Differential Equations
  \textbf{56:73} (2017), no.~3, 33.

\bibitem{support}
F.~Finster and D.~Schiefeneder, \emph{On the support of minimizers of causal
  variational principles}, arXiv:1012.1589 [math-ph], Arch. Ration. Mech. Anal.
  \textbf{210} (2013), no.~2, 321--364.

\bibitem{folland}
G.B. Folland, \emph{Real {A}nalysis: {M}odern {T}echniques and their
  {A}pplications}, second ed., Pure and Applied Mathematics (New York), John
  Wiley \& Sons, Inc., New York, 1999.

\bibitem{gardner+pfeffer}
R.J. Gardner and W.F. Pfeffer, \emph{Borel measures}, Handbook of
  {S}et-{T}heoretic {T}opology, North-Holland, Amsterdam, 1984, pp.~961--1043.

\bibitem{halmosmt}
P.R. Halmos, \emph{Measure {T}heory}, Springer, New York, 1974.

\bibitem{munkres}
J.R. Munkres, \emph{Topology}, Prentice Hall, Inc., Upper Saddle River, NJ,
  2000, Second edition.

\bibitem{roeindex}
J.~Roe, \emph{Index {T}heory, {C}oarse {G}eometry, and {T}opology of
  {M}anifolds}, CBMS Regional Conference Series in Mathematics, vol.~90,
  Published for the Conference Board of the Mathematical Sciences, Washington,
  DC; by the American Mathematical Society, Providence, RI, 1996.

\bibitem{roecoarse}
\bysame, \emph{Lectures on {C}oarse {G}eometry}, University Lecture Series,
  vol.~31, American Mathematical Society, Providence, RI, 2003.

\bibitem{rudin}
W.~Rudin, \emph{Real and {C}omplex {A}nalysis}, third ed., McGraw-Hill Book
  Co., New York, 1987.

\bibitem{schwartz-radon}
L.~Schwartz, \emph{Radon {M}easures on {A}rbitrary {T}opological {S}paces and
  {C}ylindrical {M}easures}, Published for the Tata Institute of Fundamental
  Research, Bombay by Oxford University Press, London, 1973, Tata Institute of
  Fundamental Research Studies in Mathematics, No. 6.

\bibitem{steen+seebach}
L.A. Steen and J.A. Seebach, Jr., \emph{Counterexamples in {T}opology}, Dover
  Publications, Inc., Mineola, NY, 1995, Reprint of the second (1978) edition.

\bibitem{willard}
S.~Willard, \emph{General {T}opology}, Addison-Wesley Publishing Co., Reading,
  Mass.-London-Don Mills, Ont., 1970.

\bibitem{williamson}
R.~Williamson and L.~Janos, \emph{Constructing metrics with the {H}eine-{B}orel
  property}, Proc. Amer. Math. Soc. \textbf{100} (1987), no.~3, 567--573.

\end{thebibliography}
\providecommand{\bysame}{\leavevmode\hbox to3em{\hrulefill}\thinspace}
\providecommand{\MR}{\relax\ifhmode\unskip\space\fi MR }
\providecommand{\MRhref}[2]{%
  \href{http://www.ams.org/mathscinet-getitem?mr=#1}{#2}
}
\providecommand{\href}[2]{#2}

\end{document}